\documentclass[a4paper,UKenglish,cleveref, autoref, thm-restate]{lipics-v2021}

\usepackage[utf8]{inputenc}
\usepackage[T1]{fontenc}
\usepackage{lmodern}
\usepackage{paralist}

\usepackage{amsmath} 
\usepackage{amssymb}
\usepackage{mathrsfs}
\usepackage{amsthm}
\usepackage{amsfonts}

\usepackage{graphicx}
\usepackage{algorithm}
\usepackage{algorithmic}
\usepackage{hyperref}
\usepackage{stmaryrd}
\usepackage{mathtools}
\usepackage[disable]{todonotes}
\hypersetup{hidelinks,breaklinks}
\usepackage{cleveref}
\usepackage{thm-restate}

\usepackage[normalem]{ulem}

\usepackage[notion, quotation, composition]{knowledge}

\input{pieces_of_knowledge.default.kl}

\knowledgenewrobustcmd{\cgtrans}[1]{\mathrel{\cmdkl{\xrightarrow{{\tiny #1}}}}}

\newcommand{\NN}{\mathbb{N}}

\newcommand{\cP}{\mathcal{P}}
\newcommand{\A}{\mathcal{A}}

\newcommand{\G}{\mathcal{G}}

\renewcommand{\L}{\mathcal{L}}

\newcommand{\Q}{\mathcal{Q}}

\renewcommand{\S}{\mathcal{S}}

\renewcommand{\leq}{\leqslant}
\renewcommand{\geq}{\geqslant}

\renewcommand{\implies}{~\Rightarrow~}

\newcommand{\trans}[1]{\xrightarrow{#1}}

\newrobustcmd\alphabet{\Sigma}

\hideLIPIcs
\nolinenumbers

\knowledgenewcommand{\concat}{\mathbin{\cmdkl{\cdot}}}

\knowledgenewcommand{\Lang}{\cmdkl{\L}}

\knowledgenewcommand{\cc}{\mathbin{\cmdkl{\cdot}}}

\knowledgenewrobustcmd\skel{\cmdkl{\mathrm{skel}}}

\knowledgenewcommand\LangS{\cmdkl{\L}}

\knowledgenewcommand{\Ai}{\cmdkl{A_D}}
\knowledgenewcommand{\De}{\cmdkl{\Delta}}

\knowledgenewrobustcmd\best[1][t]{\cmdkl{\mathrm{best}_{#1}}}
\knowledgenewrobustcmd\leader[1][t]{\cmdkl{\mathrm{leader}_{#1}}}

\knowledgenewrobustcmd\stateleq{\mathrel{\cmdkl{\sqsubseteq}}}
\newrobustcmd\stategeq{\mathrel{\kl[\stateleq]{\sqsupseteq}}}
\newrobustcmd\statel{\mathrel{\kl[\stateleq]{\sqsubset}}}
\newrobustcmd\stateg{\mathrel{\kl[\stateleq]{\sqsupset}}}

\knowledgenewrobustcmd\transleq{\mathrel{\cmdkl{\leqslant}}}
\knowledgenewrobustcmd\transgeq{\mathrel{\kl[\transleq]{\geqslant}}}
\knowledgenewrobustcmd\translt{\mathrel{\kl[\transleq]{<}}}

\knowledgenewcommand{\Si}{\cmdkl{\Sigma}}
\knowledgenewcommand{\ti}{\cmdkl{t_{\stateleq}}}
\knowledgenewcommand{\Sii}{\cmdkl{\Sigma_{\stateleq}}}
\knowledgenewcommand{\unit}[1][\Sigma]{\cmdkl{1_{#1}}}

\knowledgenewcommand{\ska}[1][t]{\cmdkl{\delta_{#1}}}

\newrobustcmd\dom{\mathrm{dom}}
\newrobustcmd\img{\mathrm{img}}
\knowledgenewrobustcmd\red{\cmdkl{\mathit{red}}}
\knowledgenewrobustcmd\green{\cmdkl{\mathit{green}}}

\knowledgenewcommand{\io}{\cmdkl{\iota}}

\knowledgenewrobustcmd\inter[1]{\cmdkl{\mathit{[#1]}}}

\knowledgenewrobustcmd\eps{\cmdkl{\varepsilon}}
\knowledgenewrobustcmd\Pieps{\cmdkl{\Pi_\varepsilon}}

\title{Eve-positional languages: putting order into Büchi automata}

\titlerunning{Eve-positional languages: putting order into Büchi automata} 

\author{Olivier Idir}{Université Paris Cité, CNRS, IRIF, France}{olivier.idir@ens-lyon.org}{https://orcid.org/0009-0003-3848-8515}{}

\authorrunning{O. Idir} 

\Copyright{Olivier Idir} 

\ccsdesc[100]{Theory of computation~Automata over infinite objects} 

\keywords{Büchi automata, parity automata, Eve-positional language, $\varepsilon$-complete automata, positional strategy, ordered Büchi automata} 

\acknowledgements{I thank Thomas Colcombet for his many insights during the elaboration of this paper. I also thank Antonio Casares and Pierre Ohlmann for their enthusiastic answers concerning their article.}

\newcommand{\oii}[1]{\todo[color=green!30,inline]{#1}}

\begin{document}
	\maketitle

	\begin{abstract}
										An $\omega$-regular language is Eve-positional if, in all games with this language as objective, the existential player can play optimally without keeping any information from the previous moves. This notion plays a crucial role in verification, automata theory and synthesis.
										
										Casares and Ohlmann recently gave several characterisations of Eve-positionality of $\omega$-regular languages. For this, they introduce the notion of $\varepsilon$-complete parity automaton and show (among other results) that an $\omega$-regular language is Eve-positional if and only if it can be recognised by some $\varepsilon$-completion of a deterministic parity automaton.
										Colcombet and Idir built on their work, and obtained a more direct algebraic characterisation of Eve-positionality.
										
										We introduce a new formalism that characterises the Eve-positional languages, consisting of a restriction of non-deterministic Büchi automata. This allows us to complete a missing implication in Casares and Ohlmann's work. We then use this formalism to describe a determinization procedure for non-deterministic Büchi automata recognising such languages, with size blow-up at most factorial. We also show that this construction is state-wise optimal for languages over sufficiently complete alphabets.
	\end{abstract}

	\section{Introduction}

\subsection{Context}
The problem of reactive synthesis is the following: given a system equipped with a formal specification and interacting with its environment for a possibly infinite amount of time, we want to design a controller that can react to the environment while ensuring that its specification is met. 

A formalism frequently used for this is that of games on graphs. In these games, two players, Eve and Adam, move a token along the edges of an edge-coloured directed graph called an arena. Its vertices are partitioned between those belonging to Eve and those belonging to Adam, and when the token arrives in a vertex, its owner chooses the next edge, which the token will follow. This interaction continues infinitely, thus producing an infinite path through the arena and a corresponding infinite sequence of colours $w$ (also called an $\omega$-word). A play is won by Eve if the corresponding word belongs to some set $W$, called her objective. Else, it is won by Adam. Over the games we consider, it always stands that one of the players has a winning strategy, which describes a way to win the game even when this strategy is known in advance by the opponent.

The interaction between the system and its environment can be modeled by such a game, where the specification provides the winning objective $W$ \cite{Buchi_Landweber_synthesis}. In such a context, the complexity of the player's winning strategy is crucial, as games with simple winning strategies are generally easier to solve algorithmically and result in controllers that can be represented more succinctly.

\noindent \textbf{$\omega$-regular languages.}
An "$\omega$-regular language" is a set of "$\omega$-words" that can be described by many equivalent formalisms (e.g., automata that can be either deterministic, non-deterministic, or alternating; one-way or two-way; with Muller, Parity, Rabin, or Streett acceptance conditions; or by monadic-second-order logic). The class of "$\omega$-regular languages" plays a pivotal role in many decision procedures and mathematical arguments related to verification, automata theory, controller synthesis, etc.

\noindent \textbf{Eve-positionality.}
One of the simplest cases, when attempting to solve a game, is when the winning strategy is positional. This means that the corresponding player does not need any memory of the past to play optimally: they can make their decision simply by looking at the current state of the game. Note that this memoryless property need not be the same for both players. In this work, we are considering the objectives $W$ such that over all graphs, if Eve wins, she can do so with a positional strategy. They are called "Eve-positional languages".\\
Until recently, little was known about this class. Major progress was made in the last few years, but there was still lacking a convenient formalism to handle them, which we propose here.

\subsection{Contributions}
In this paper, we introduce the notion of "ordered Büchi automaton": these are non-deterministic Büchi automata such that Büchi-$\eps$-transitions form a total order on the states (these automata are very close cousins of the "$\eps$-complete" Büchi automata introduced by Casares and Ohlmann \cite{CO24Positional}).
\begin{enumerate}
	\item We establish that languages recognised by "ordered Büchi automata" are exactly the "Eve-positional" "$\omega$-regular languages" (\Cref{thm:oBuchi-iff-evePosi}),
	\item We propose a quadratic transformation from an "$\eps$-complete" "non-deterministic parity automaton" towards an "ordered Büchi automaton" recognising the same language (\Cref{thm:NPA-to-oB}).
	\item We provide a generic determinization procedure from "ordered Büchi automata" to deterministic parity automata, with at most factorial blow-up (\Cref{thm:exists-det-oB}). This determinization furthermore proves state-wise optimal over some languages (\Cref{lem:det-optim})
\end{enumerate}

Our first contribution implies taking a more detailed look at the "$\eps$-complete" "Büchi automata". They consist of parity automata with an underlying tree-like structure, such that they can be saturated with "$\eps$-transitions" following this structure without changing the recognised language. Intuitively, this structure allows us to keep track of how advantageous is the prefix read so far.
In order to gain a clearer view of the action of each letter, we quotient the states mutually reachable by non-Büchi $\eps$-transitions and follow the approach of transfer graphs (as in \cite{Bertrand_2019}). This allows us to view each letter as the set of transitions it induces in the automaton (called a "tile"), and observe that these "tiles" are upwards-closed for some order ${\transleq}$ over the transitions.
We obtain that the corresponding automata, called "ordered Büchi automata", recognise exactly the "Eve-positional languages". We believe that this formalism is useful both thanks to the ease with which it allows representing "Eve-positional languages", and as it provides a better structural understanding of these languages.

We then propose a transformation from any "$\eps$-complete" "non-deterministic parity automaton" with $n$ states and priorities $[0,2k-1]$ towards an "ordered Büchi automaton" recognising the same language, with $nk$ states.
According to Casares and Ohlmann, all "non-deterministic parity automata" recognising an "Eve-positional" language are "$\eps$-completable" (\Cref{theorem:co}); this transformation thus applies to all parity automata recognising such languages. 
This result allows, in conjunction with our first contribution, to complete a missing implication in their work: if a non-deterministic automaton is $\eps$-complete, then it recognises an "Eve-positional" language.
Note that in the general case, when going from "parity@parity automaton" to "Büchi automata", the known lower bound for this transformation is also of $\Theta(kn)$ \cite{Seidl_Niwinski_1999}; therefore, since being "$\eps$-completable" is purely a property of the recognised language, it is possible to go from an "$\eps$-completable automaton" to an "ordered Büchi automaton" without any additional state blow-up beyond what is already needed.

Our last contribution concerns the algorithmic investigation of "ordered Büchi automaton", and more precisely the question of determinizing such automata. We obtain that
a $n$-state "ordered Büchi automaton" can be effectively transformed into a "deterministic parity automaton" recognising the same language, with at most $1+\sum_{i=0}^{n-1} i!$ states.
We deduce from this that all "Eve-positional languages" can be determinized with at most factorial blow-up, compared to $O((1.64n)^n)$ in the general case \cite{determinising_parity}.
It is even possible to exploit some conditions on the alphabet in order to further reduce this blow-up. Notably, using a technique introduced in \cite{Safra_flowers}, we can establish the minimality of the resulting number of states over some languages, such as the Rabin language with $n$ pairs and all the possible letters with different behaviours (called the $n$-complete Rabin).

\subsection{Related works}

\noindent \textbf{Bipositionality.}
When looking at positionality, one can consider the bipositional languages, which are both Eve-positional and Adam-positional. We know from \cite{ColcombetN06} that the prefix-independent languages that are bipositional are exactly the parity languages, even when considering languages that are not necessarily "$\omega$-regular". There have been some attempts to generalize the latter result \cite{Bouyer_2023}, but a more general characterisation was only established in \cite[Theorem 7.1]{CO24Positional}.

\noindent \textbf{Eve-positionality.}
The "Eve-positionality" of some central classes was established as early as the 90s, such as parity languages \cite{Emerson_Jutla_parity} and Rabin languages \cite{RabinPosi}. However, the first thorough analysis of "Eve-positionality" was undertaken by Kopczi\'{n}ski \cite{Kopczynski06}.
He notably established that it is decidable whether a prefix-independent "$\omega$-regular language" is "Eve-positional"  \cite{Kopczynski06,Kopczynski07}, albeit with a quite slow procedure (in $O(n^{O(n^2)})$) that is not very informative on the reason why the language is "Eve-positional".
More generally, despite these breakthroughs, he did not obtain any general characterisation of "Eve-positional" languages.

Some partial characterisations were established in the following decade on specific subclasses of "$\omega$-regular languages". Colcombet, Fijalkow, and Horn established a semantic characterisation of "Eve-positionality" for safety languages \cite{CFK_safety_posi}, that is, languages where it is known in finite time whether some word is out of the language.
A characterisation for a larger class was actually obtained a few years later by Bouyer, Casares, Randour, and Vandenhove \cite{Bouyer_Buchi_posi}, as they obtained a semantic characterisation of "Eve-positional" languages recognised by deterministic Büchi automata. This, however, does not correspond to the general $\omega$-regular case, and for instance fails to cover the deterministic coBüchi case.

Recently, using the well-monotonic universal graphs initially developed by Ohlmann \cite{Ohlmann_universal_graphs}, he and Casares gave several other characterisations for "Eve-positionality" \cite{CO24Positional}. They also proposed new decision procedures, in polynomial time over the size of a deterministic automaton recognising the target language.
A key notion in their work is that of "$\eps$-completable" automata, which they introduced in the latter article.
Let us recall some of these characterisations here:
\begin{theorem}[3.1 and 3.2 in \cite{CO24Positional}]\label{theorem:co}
	Let $L \subseteq \alphabet^\omega$ be an "$\omega$-regular language". The following conditions are equivalent:
	\begin{asparaenum}
		\item $L$ is "Eve-positional" over all arenas (potentially infinite and containing $\eps$-moves).
		\item $L$ is "Eve-positional" over finite $\eps$-free Eve-only arenas.
		\item There exists a "$\eps$-completable" "deterministic parity automaton" recognising $L$.
		\item All (non-deterministic) "parity automata" for $L$ are "$\eps$-completable".
	\end{asparaenum}
\end{theorem}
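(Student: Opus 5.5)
This theorem is imported from Casares and Ohlmann, so the plan is to reconstruct their cycle of implications $(1)\Rightarrow(2)\Rightarrow(4)\Rightarrow(3)\Rightarrow(1)$. Two of these links are cheap. The implication $(1)\Rightarrow(2)$ is immediate, since finite $\eps$-free Eve-only arenas are a special case of all arenas. The implication $(4)\Rightarrow(3)$ is also immediate: $L$ is $\omega$-regular, so it has a deterministic parity automaton, and by (4) that automaton is $\eps$-completable.

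For $(3)\Rightarrow(1)$ I will follow the universal-graph method. Take an $\eps$-completion $\A'$ of a deterministic parity automaton $\A$ for $L$. The saturating $\eps$-transitions induce a total preorder on states within each priority level, and I will use it to build, for every cardinal $\kappa$, a well-monotonic graph that is $\kappa$-universal for $L$. The construction is a lexicographic product of $\A'$ with ordinals, in the spirit of Ohlmann's universal graphs. Ohlmann's theorem says that a prefix-independent objective with such graphs is Eve-positional over all arenas. To handle non-prefix-independent $L$, I will use the states of $\A$ to track the prefix, so positional strategies are taken with respect to the product with the automaton. The $\eps$-completeness is precisely what lets Eve pick a uniformly best position when several states are reachable. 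The steps in order are:
\begin{enumerate}
\item define the graph;
\item check that it satisfies $L$, using the fact that adding $\eps$-edges does not change the language;
\item check universality, by mapping any winning positional tree into the graph using signatures (progress measures) indexed by the automaton states;
\item check well-monotonicity, which follows from the total order.
\end{enumerate}

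The main obstacle is $(2)\Rightarrow(4)$. Given any non-deterministic parity automaton $\A$ for $L$, I must show that saturating it with all $\eps$-transitions allowed by the order structure does not enlarge the language. I will argue by contradiction. Suppose a word $w\notin L$ is accepted by the saturated automaton. Then build a finite Eve-only arena: its vertices come from states of $\A$ together with finitely many positions in an ultimately periodic witness $uv^\omega$, and each offending $\eps$-edge becomes an Eve choice. The construction should ensure that Eve wins this arena, but only by using memory, namely by reproducing the run that exploits the $\eps$-shortcut. This contradicts (2).

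The delicate part is defining the right order on states, refined priority level by priority level, and showing that Eve wins from every vertex in the arena. I will first try the order "$q\le p$ iff $\Lang(q)\subseteq\Lang(p)$", refined by priorities. To keep the arena finite I will use ultimately periodic words, which is justified by $\omega$-regularity.
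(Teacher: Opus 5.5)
\noindent The paper does not prove this theorem: it quotes it from Casares and Ohlmann and uses it as a black box, so your text has to stand on its own as a reconstruction of their argument. Your cycle $(1)\Rightarrow(2)\Rightarrow(4)\Rightarrow(3)\Rightarrow(1)$ is the natural one, and the two easy links are fine. However, the step that carries essentially all the content, $(2)\Rightarrow(4)$, is not actually argued.

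\textbf{The $\varepsilon$-completion is never fixed.} Until you fix it, the contradiction argument cannot start, because a badly chosen $\varepsilon$-complete saturation of an automaton for a positional language really does accept new words. Take $L$ to be the words with infinitely many $a$'s, recognised by the deterministic automaton whose two states remember the last letter (reading $a$ has priority $0$, reading $b$ priority $1$). Putting the $b$-state strictly above the $a$-state creates priority-$0$ $\varepsilon$-edges from the $b$-state to the $a$-state, and then $b^\omega$ is accepted.

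Your candidate order, inclusion between the languages accepted from states, is not even total for non-deterministic automata. With $L=\Sigma^\omega$ over $\{a,b\}$, a state accepting everything may non-deterministically move to a state accepting exactly the words with infinitely many $a$'s, and to another accepting exactly those with infinitely many $b$'s. So the completion must in general enlarge some state languages, and the coarsest preorder has to come from something total, for instance the residuals $u^{-1}L$ of the prefixes leading to a state. The finer odd levels and the strict even levels cannot come from an unspecified ``refinement by priorities''. Deciding which even-priority shortcuts are harmless is exactly where positionality has to be used.

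\textbf{The contradiction is not set up, even with the right preorders.} In a finite Eve-only arena, ``Eve wins, but not positionally'' means two things: some path from the initial vertex is labelled in $L$, and every simple lasso (the plays of positional strategies) is labelled outside $L$. Your sketch never produces a path labelled in $L$. The run exploiting the shortcuts reads $w\notin L$, so reproducing it is a losing play.

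Moreover, (2) only concerns $\varepsilon$-free arenas, so a shortcut cannot simply become an Eve move. It has to be replaced by concrete finite words witnessing that its target is no better than its source. That changes the label of the play, and you must then show both that the new infinite word lies in $L$ and that no simple lasso does. Producing such witnesses and assembling them into one finite one-player arena, for every priority level of the completion, is the missing idea. It is where Casares and Ohlmann's necessity proof does its real work; as written, your step states its goal rather than a method.

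\textbf{A smaller point on $(3)\Rightarrow(1)$.} If ``positional strategies taken with respect to the product with the automaton'' means solving the product of the arena with $A$, you only obtain strategies whose memory is the state space of $A$, not positional ones. In the universal-graph approach, the automaton states sit inside the vertices of the universal graph, ordered through the $\varepsilon$-preorders and combined lexicographically with ordinals. The strategy is then extracted positionally in the original arena. You also need Ohlmann's sufficiency result in a form covering objectives that are not prefix-independent (this is why arenas with $\varepsilon$-edges appear in (1)), not the prefix-independent version you quote.

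Incidentally, the paper reaches the stronger conclusion ``every $\varepsilon$-completable, possibly non-deterministic, automaton recognises a positional language'' by a different route: \Cref{thm:NPA-to-oB} combined with \Cref{lem:oBuchi-are-evePosi}. That route relies on \Cref{thm:local-posi}, which was itself obtained using this theorem.
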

These articles notably provide the first characterisations achieved over all "$\omega$-regular" languages. They however require one to reason with an automaton recognising the language, rather than directly with the language and inclusion properties.

Using their characterisation and notably the 1-to-2 player lift, Colcombet and Idir obtained an algebraic characterisation of these languages \cite{local-posi}, where a language is "Eve-positional" if and only if it satisfies three algebraic conditions (see \Cref{thm:local-posi}).
Each condition is of the form ``If certain word(s) belong to $L$, then at least one word obtained from their factors also belongs to $L$''. This supports the idea that "Eve-positionality" can be intuitively understood as establishing a total preference order among factors.
Furthermore, the first two of their conditions correspond exactly to the first two items of the characterisation obtained by Bouyer et al., and the third one is a generalisation of their third item that allows one to consider all "Eve-positional" "$\omega$-regular languages".

This last characterisation, however, 
does not provide a convenient formalism to handle "Eve-positional" languages.
Building on these last two results, we propose such a formalism, exactly capturing "Eve-positional" "$\omega$-regular languages" and that comes with a determinization procedure.

\subsection{Layout of the article}
We first recall a few standard definitions in \Cref{sec:defs}, along with the formalism of "tiles" to describe automata. We then define the desired "ordered Büchi automata" in \Cref{sec:Buchi tiles}. This section will also give a broad overview of this article's results, along with examples and a key intuition as to why these ordered Büchi automata characterise "Eve-positionality".

From \Cref{sec:eve-posi} onward, we will focus on the more technical aspects of our contribution. \Cref{sec:eve-posi} describes some semantic properties of these automata, and establishes the equivalence between "ordered Büchi automata" and "Eve-positional languages". It notably details the transformation from a "parity automaton" recognising an Eve-positional language towards an ordered Büchi automaton of same language (up to letter renaming).
Finally, \Cref{sec:det} describes the determinization procedure from an "ordered Büchi automaton" to a "deterministic parity automaton" recognising the same language, along with a size-optimality result.

	\section{General definitions}\label{sec:defs}

\subsection{Basic notations}

For a set A, its powerset is denoted $\cP(A)$.

\AP An ""alphabet"" is a finite set $\Pi$, whose elements are called letters.
The sets $\Pi^*$ and $\Pi^\omega$ respectively denote the sets of finite ""words"" and infinite \reintro*\kl{words} of length $\omega$ over $\Pi$ (also denoted ""$\omega$-words""). Subsets of $\Pi^*$ and $\Pi^\omega$ are called languages. An "$\omega$-word" is ""ultimately-periodic"" if it is of the form $u v^\omega$, with $u$ and $v$ finite words, and $v$ is non-empty.
The set of non-empty finite words is denoted $\Pi^+$, and the ""empty word"" is denoted $\intro*\io$. 
A language $L$ is ""prefix-independent"" if, for all $u\in \Pi^+$ and $w\in \Pi^\omega$, $w \in L$ if and only if $uw \in L$. 
A ""residual"" of a language $L$ is a language $u^{-1}L := \{w \mid u w \in L\}$, where $u$ is a finite word.

\AP The length of a finite "word" $u$ is written $|u|$. For a (possibly infinite) word $u$ and $n\in \NN$, $u_n$ denotes its letter at index $n$.

\AP An ""index"" $[i,j]$ is a non-empty finite interval of natural numbers $J = \{i, i+1,\dots, j\} \subseteq \NN$. For $n\in \NN$, we denote the index $[0,n-1]$ as $\intro*\inter{n}$.
Elements $c \in J$ are called ""priorities"". We say that an infinite sequence of "priorities" $(c_n)_{n\in \NN}$ is ""parity accepting"" (or simply \reintro*"accepting@@parity") if $\liminf_{n \to \infty} c_n \equiv 0 \mod 2$, else it is \reintro*"parity rejecting" (or \reintro*"rejecting@@parity").

For a function $f: A \to B$ and $b\in B$, we denote by $f^{-1}(b)$ the set $\{a \in A \mid f(a) = b\}$.

\subsection{Parity automata}

\AP A ""$J$-automaton"" (or \reintro*"automaton", or more generally a \reintro*"non-deterministic $J$-parity automaton"), with $J$ an "index", is a tuple $A = (\Pi, Q_A, I_A, \Delta_A)$, where $\Pi$ is an "alphabet", $Q_A$ the finite set of ""states"", $I_A \subseteq Q_A$ a set of ""initial states"", and $\Delta_A \subseteq Q_A \times \Pi \times J \times Q_A$ is the transition relation.
A ""transition"" $(q, a, c, q') \in \Delta_A$ is said to be from the "state" $q$ to the state $q'$, over the letter $a$ and with "priority" $c$. By default, all "automata" in consideration are complete\footnote{In the examples that follow, any missing transition is assumed to instead lead to a rejecting sink state.}, that is, for each state $q \in Q_A$ and letter $a\in \Pi$, there is at least one transition from $q$ over $a$ in $\Delta_A$.
We also consider automata to be restricted by default to the "states" accessible from the "initial states" by a sequence of "transitions".
The automaton $A$ is said ""deterministic"" if $I_A$ is a singleton and $\Delta_A$ can be described as a function from $Q_A \times \Pi$ to $J \times Q_A$.
When an automaton $A$ is known from the context, we skip the subscript and write just $Q,\Delta$, etc.

\knowledgerenewcommand\trans[1]{\cmdkl{\xrightarrow{#1}}}

\AP We denote by $\intro*\trans{a:c}$ the relation induced by "transitions" of parity $c$ over a letter $a$. That is, $p \trans{a:c} q$ denotes the existence of a "transition" $\delta = (p,a,c,q)\in\Delta$. We extend this notation to paths over words, where the label corresponds to the minimum priority encountered along this path.

For $w\in \Pi^*$ a word, the function $\Delta(q,w)$ is defined inductively over the length of $w$ as the set of states accessible from $q$ after a path over $w$, with $\Delta(q,\io):=\{q\}$.

\AP A ""run@@aut"" $\rho = (\delta_i)_{i\in \NN} \in \Delta^\omega$ over a word $w\in \Pi^\omega$ is a sequence of transitions $(q_i, w_i, c_i, q_{i+1})_{i\in \NN}$ such that $q_0 \in I$. A ""partial run"" is a finite prefix of a "run@@aut".
A "run@@aut" is called ""accepting@@run"" if its priority sequence $(c_i)_{i\in \NN}$ is "parity accepting" (recall that we consider the min-parity condition). A word $w\in \Pi^\omega$ is ""recognised@@aut"" by $A$ if there exists an "accepting run@@aut" over $w$ in $A$. The set of all "recognised words" of an automaton is called its ""language@@aut"", and is denoted $\intro*\Lang(A)$. Note that $\Lang(A)$ admits a finite number of different "residuals" (at most one for each state in $Q$).
A language $L \subseteq \Pi^\omega$ is called ""$\omega$-regular"" if there exists an "automaton" recognising it.

\AP We are particularly interested in the "$\inter{2}$-automata", which are also called ""Büchi automata"". It is well-known that all $\omega$-regular languages can be recognised by "Büchi automata" \cite{Rabin1968DecidabilityOS}. In such an automaton, a transition is called a ""Büchi transition"" if it has priority $0$.

\AP An ""$\eps$-automaton"" over an "alphabet" $\Pi$ (or "automaton with $\eps$-transitions") is an "automaton" over the alphabet $\intro*\Pieps := \Pi \uplus \{\eps\}$, where $\intro*\eps\notin \Pi$ is a distinguished letter. The ""language@@eps"" of an "$\eps$-automaton" $A$ is the set of words $w\in \Pi^\omega$ such that there exists $w' \in (\Pi \uplus \{\eps\})^\omega$ accepted by $A$ and such that $w$ is obtained from $w'$ by removing all occurrences of the letter $\eps$.

\subsubsection*{Tile automata}

\AP Another equivalent formalism for "automata" is to consider, for each letter, the set of transitions it entails. That is, for a "$J$-automaton" $A = (\Pi, Q, I, \Delta)$ and a letter $a\in \Pi$, we can consider $t_a := \{(p,c,q) \mid (p,a,c,q) \in \Delta\} \subseteq Q \times J \times Q$. We call this subset a ""tile"", and its elements are still called "transitions". For $t$ a "tile", a "transition" of $t$ is called ""horizontal"" if it is of the form $(q,c,q)$ for some $q\in Q$.

\AP Note that we can equip the set of all possible tiles $\Sigma := \cP(Q \times J \times Q)$ with a semigroup structure by setting the ""product"" $t_1 \intro*\cc t_2$ for "tiles" $t_1,t_2\in \Sigma$ as
\[
t_1 \cc t_2 := \{(p,\min(c_1,c_2),r)\mid \exists q,~( p,c_1,q ) \in t_1,~(q,c_2,r) \in t_2\}\ .
\]
This semigroup is actually a monoid, as $\intro*\unit[\Sigma] := \{ (q,\max(J),q) \mid q \in Q \}$ is a neutral element for $\cc$. Intuitively, the "product" allows us to consider directly, as a single tile, the behaviour of a non-empty word. This notably allows to observe more directly the $\omega$-semigroup for a given language.

\AP An infinite sequence of "tiles" $w = (w_j)_{j\in \NN}$ thus directly corresponds to a word $w_\Pi \in \Pi^\omega$ (or to many words, in the case where different letters map to the same tile, but they then all have the same behaviour in $A$), and a ""run@@tile"" over $w$ is an infinite sequence of transitions $(\delta_j = (q_j,r_j,q_{j+1}))_{j\in \NN}$ with $\delta_j \in w_j$ for all~$j$ and $q_0 \in I$.

\AP An "automaton" can be directly described by its alphabet $\Gamma$ of "tiles", instead of via its set $\Delta$ of "transitions". In that case, we can abstract away the corresponding letters, and only remember the induced tiles. We then call this automaton a ""tile automaton"", described as a tuple $\A = (Q,I,\Gamma)$, where $Q$ is a finite set of states, $I \subseteq Q$ is the set of "initial states", and $\Gamma \subseteq \Sigma$ is a set of "tiles"\footnote{We can omit the letter corresponding to the tile in this representation, as two letters with the same associated tile are functionally equivalent}. Its ""language@@system"" is still denoted $\intro*\LangS(\A)$. 

\begin{remark}\label{rem:langs-inclusion}
	Note that for $\A = (Q, I, \Gamma)$ an automaton, for $\A' := (Q,I', \Gamma')$ such that $I' \subseteq I$ and $\Gamma' \subseteq \Gamma$, we have that $\LangS(\A') \subseteq \LangS(\A)$. Indeed, any "accepting run@@tile" in $\A'$ still exists in $\A$.
\end{remark}

\subsection{Eve-positional languages}

In this article, we focus on a specific class of $\omega$-regular languages.

\AP A language $L$ is called ""Eve-positional"" if, whenever Eve wins a game with objective $L$, she can win with a positional strategy. In most of this article, however, we will not discuss games, and describe these languages via the characterisation established by Colcombet and Idir.

\phantomintro{local preference properties}
\begin{restatable}[\cite{local-posi}]{theorem}{EveposiLocal}\label{thm:local-posi} \AP
	An "$\omega$-regular language" ~$L\subseteq\Pi^\omega$ is "Eve-positional" if and only if, for all $u, u'$ finite words (possibly empty), $v,v'$ non-empty finite words, and $w,w'$ infinite words, the following "local preference properties" hold:
	\begin{asparaenum}
		\item if $uw\in L$ and $u'w'\in L$, then $uw' \in L$ or $u'w \in L$,
		\item if $uvw \in L$, then $uv^\omega \in L$ or $uw \in L$, and
		\item if $u(vv')^\omega \in L$, then $uv^\omega \in L$ or $uv'^\omega \in L$.
	\end{asparaenum}
\end{restatable}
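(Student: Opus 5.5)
The plan is to go through the Casares--Ohlmann characterisation (\Cref{theorem:co}). By the equivalence between its first two items, it suffices to prove that $L$ is "Eve-positional" over finite $\eps$-free Eve-only arenas if and only if the three "local preference properties" hold. Both directions then become statements about one-player games. In such a game, Eve wins from a vertex exactly when some infinite path from it is labelled by a word of $L$. Since $L$ is "$\omega$-regular", we may moreover take this path to be a lasso, that is, labelled by an "ultimately-periodic" word.

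\textbf{Necessity.} For each property I would build a small Eve-only arena. In that arena a strategy with memory wins, while any positional strategy yields exactly one of the words in the conclusion.
\begin{itemize}
\item For property 1: two paths labelled $u$ and $u'$ lead into a common Eve vertex $x$, from which she may continue along $w$ or along $w'$. Starting from either entry, Eve wins (by $uw$, resp.\ $u'w'$). A positional choice at $x$ forces a single continuation, so $uw'\in L$ or $u'w\in L$.
\item For property 2: the arena is a path $u$ to a vertex $x$, followed by a choice between a $v$-loop back to $x$ and an exit along $w$.
\item For property 3: the arena is a path $u$ to $x$, from which two loops labelled $v$ and $v'$ return to $x$.
\end{itemize}
Approximating infinite words by lassos (or by finite unfoldings for $w,w'$, which is harmless since arenas may be infinite in item 1 of \Cref{theorem:co}) gives the three implications.

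\textbf{Sufficiency.} Fix a finite Eve-only arena $G$ and let $W$ be the set of vertices from which Eve wins. I would argue by induction on the number of edges that some positional strategy wins from all of $W$ simultaneously. If every vertex has out-degree one there is nothing to prove. Otherwise, pick a vertex $x$ with at least two outgoing edges. The goal is to show that deleting one of them preserves $W$.

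The key tool is a preference preorder on histories reaching $x$: set $h\preceq h'$ when $h^{-1}L\subseteq h'^{-1}L$. Property 1 says precisely that any two histories with nonempty residuals are comparable. Since $L$ has finitely many "residuals", there is a maximal residual class among the histories that reach $x$ along winning paths. Choose an edge $e$ out of $x$ that is used by some winning continuation of a maximal history; the plan is to delete all other edges at $x$. For a winning path from a start vertex, I will rewrite it so that it never leaves $x$ by a deleted edge, in three steps.
\begin{itemize}
\item \emph{Uniformising the prefix.} Property 1 lets me replace the portion before the first visit to $x$ by a maximal history, or else shows that the original continuation already works after a different prefix.
\item \emph{Removing excursions.} If the path returns to $x$ several times, property 2 applied to $u\,v\,w$, with $v$ a return cycle to $x$, lets me either cut the cycle or iterate it forever.
\item \emph{Splitting cycles.} When an ultimately periodic path $u(vv')^\omega$ alternates between two different cycles at $x$, property 3 lets me keep only one of them.
\end{itemize}
Iterating these steps produces a winning path that leaves $x$ only through $e$ (or through cycles all starting with $e$), and the induction hypothesis then applies.

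\textbf{Main obstacle.} I expect the hardest part to be the global consistency in the sufficiency direction. Each rewriting step only guarantees that \emph{some} alternative lies in $L$. The one chosen may leave $x$ by an edge other than $e$, or it may depend on which start vertex we began from. I would first try to resolve this using the maximality of the chosen residual class, together with an inner induction on the number of visits to $x$ in the lasso. If that is not enough, I would strengthen the choice of $e$ so that it is optimal with respect to a preorder on the continuations themselves, comparing ultimately periodic words from $x$ after a maximal history; this preorder is total by property 1 applied with equal prefixes.
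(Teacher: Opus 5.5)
Your reduction to finite $\varepsilon$-free Eve-only arenas via \Cref{theorem:co} is consistent with how the paper describes the source of this theorem. The paper only cites it, noting that Colcombet and Idir rely on the Casares--Ohlmann one-to-two-player lift. Your three gadget arenas for necessity are fine, provided that for property~1 a single positional strategy wins from both entry vertices (uniformity, or an initial Adam choice between the two branches).

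The gap is in sufficiency, at the choice of $e$. A history $h^*$ with \emph{maximal} residual is the easiest one to continue. From $h^{-1}L\subseteq (h^*)^{-1}L$ you cannot conclude $hc\in L$ from $h^*c\in L$. Concretely, let $L\subseteq\{a,c,d\}^\omega$ be the set of words containing a $c$. Take the arena with edges $v_1\xrightarrow{c}x$, $v_2\xrightarrow{a}x$, $x\xrightarrow{c}y_1$, $x\xrightarrow{d}y_2$, and $d$-loops on $y_1,y_2$. The maximal history at $x$ is $c$, and $d^\omega$ is a winning continuation of it. So your rule may keep only the $d$-edge, after which $v_2$ and $x$ are lost.

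Your ``uniformising the prefix'' step cannot help, since a play from $v_0$ cannot have its prefix replaced by a history starting at another vertex. Moreover, fixing $e$ \emph{before} rewriting is exactly what creates the consistency problem you flag. Properties~2 and~3 only promise that \emph{some} alternative lies in $L$, and you cannot force it to leave $x$ through a prescribed edge.

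The missing idea is to reverse both choices.
\begin{itemize}
\item Let $h_{\min}$ have the \emph{minimal} residual among labels $h$ of finite paths ending in $x$ such that $h^{-1}L$ contains the label of some infinite path from $x$. It exists because there are finitely many residuals, totally ordered by property~1.
\item Pick such a path $c$; by $\omega$-regularity it may be taken to be a lasso. Cut it at its returns to $x$; the cycles have non-empty labels since the arena is $\varepsilon$-free.
\item Use property~2 to discard the transient cycles and property~3 to split the periodic part. This yields $c'$ with $h_{\min}c'\in L$ that leaves $x$ through a single edge, and \emph{that} edge is $e$.
\item Plays avoiding $x$ are untouched. For any winning play from any vertex that does visit $x$, its label $h$ up to the first visit lies in the set above. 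Hence $h_{\min}^{-1}L\subseteq h^{-1}L$, so $hc'\in L$.
\item The first-visit prefix followed by $c'$ uses no deleted edge.
\end{itemize}
This settles global consistency at once, with no inner induction on visits and no per-vertex choices.

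Your fallback does not provide this. Property~1 with $u=u'$ is vacuous, and comparing continuations after a fixed maximal history only tests membership in one residual. The useful order on continuations compares the sets of histories after which they are accepted. It is total by property~1 with \emph{distinct} prefixes, and selecting a maximal element of it is just the minimal-history argument in disguise.
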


When talking of "Eve-positional languages", a notion of particular relevance is the one of "$\eps$-complete automata", introduced by Casares and Ohlmann in \cite{CO24Positional}.

\begin{definition}[Casares, Ohlmann]\label{def:eps-complete}\AP
	A parity "$\eps$-automaton" $A = (\Pi, Q, I, \Delta)$ is ""$\eps$-complete"" if its "index" $J$ is of the form $\inter{2d+2}$, and
	\begin{asparaenum}
		\item the relations $\trans{\eps:1}, \trans{\eps:3}, \dots, \trans{\eps:d+1}$ define total preorders, each refining the previous one;
		\item for each even $c\in J$, the relation $\trans{\eps:c}$ is the strict variant of $\trans{\eps:c+1}$: for any $q,q'$, it holds that $q \trans{\eps:c} q'$ if and only if $q' \trans{\eps:c+1} q$ does not hold. 
	\end{asparaenum}
\end{definition}
Note that in the case of an "$\eps$-complete automaton", $I$ can be assumed downwards-closed for the total preorder $\trans{\eps:2d+1}$.

An automaton is ""$\eps$-completable"" if it is possible to add $\eps$-transitions to it to make it $\eps$-complete, without enlarging the "recognised language".

They notably established (\Cref{theorem:co}) that if an "automaton" $A$ recognises an "Eve-positional" language, it is "$\eps$-completable". However, they only proved the converse implication for deterministic automata: if some non-deterministic automaton $A$ is "$\eps$-completable", their result does not state that $\Lang(A)$ is "Eve-positional".

	\section{Ordered Büchi tiles and overview of the contributions}\label{sec:Buchi tiles}

In this section, we introduce the "ordered Büchi automata". They are "Büchi automata" with a total order ${\stateleq}$ over their set of states, such that at any point during a run it is always possible to decrease along ${\stateleq}$ whilst seeing an additional "Büchi transition". We then give a few examples, before giving an intuition of their link with "Eve-positional languages". We end this section with an overview of our contributions, and a discussion as to the interest of "ordered Büchi automata".

\subsection{Ordered Büchi languages}

\AP Let $Q$ be a finite set of states, and $\intro*\Si := \cP(Q \times \{0,1\} \times Q)$ its set of Büchi "tiles". We fix a total order $\intro*\stateleq$ over $Q$.

\AP Let $\delta=(p,c,q), \delta'= (p',c',q')$ be two transitions in $Q \times \{0,1\} \times Q$. We say that $\delta \intro*\transleq \delta'$ if $p \stateleq p'$, $q'\stateleq q$ (note the swap of order), and, if $(p,q) = (p',q')$, $c \leq c'$. We observe that ${\transleq}$ is a partial order. This captures the idea of downward Büchi "$\eps$-transitions" : intuitively $\delta \transleq \delta'$ if the transition $\delta'$ can be realized with $\eps$-transitions decreasing along ${\stateleq}$ and the transition $\delta$\footnote{The case where $c' = 1$ yet $c=0$ cannot be obtained in this manner, but is still included so that $\transleq$ is compatible with upward closure under $\cc$, as shown in \Cref{lem:Sii-monoid}.}

We define $\intro*\Sii$ as the set of upwards-closed tiles under the order ${\transleq}$, and the "tile" $\intro*\ti$ as the upward closure of $\unit$. 

\begin{restatable}{lemma}{lemSiimonoid}\label{lem:Sii-monoid}
	$(\Sii,\cc)$ is a sub-semigroup of $(\Si,\cc)$, and $(\Sii,\cc,\ti)$ is a monoid\footnote{It is not a submonoid of $(\Si,\cc,\unit)$ since the units do not coincide.}.
\end{restatable}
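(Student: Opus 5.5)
The plan is to verify two things directly from the definitions: that $\Sii$ is closed under $\cc$, and that $\ti$ belongs to $\Sii$ and acts as a two-sided neutral element on $\Sii$. Associativity is inherited from $(\Si,\cc)$, and $\ti\in\Sii$ holds by definition as an upward closure. One observation does most of the work. Because priorities lie in $\{0,1\}$, we always have $(p,c,q)\transleq(p,1,q)$, so any upward-closed tile containing $(p,c,q)$ also contains $(p,1,q)$. Moreover, when the endpoints differ, the priority is unconstrained by $\transleq$.

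\textbf{Closure under the product.} Let $t_1,t_2\in\Sii$ and $(p,c,r)\in t_1\cc t_2$, witnessed by $(p,c_1,q)\in t_1$ and $(q,c_2,r)\in t_2$ with $c=\min(c_1,c_2)$. Take $(p',c',r')\transgeq(p,c,r)$, so that $p\stateleq p'$ and $r'\stateleq r$. I would produce a witness through the same middle state $q$, splitting into three cases.
\begin{itemize}
\item If $p'\neq p$, use $(p',c',q)\in t_1$, which dominates $(p,c_1,q)$ because the endpoints differ. Pair it with $(q,1,r')\in t_2$, which dominates $(q,c_2,r)$ whether or not $r'=r$, since $c_2\leq 1$. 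The product is $(p',\min(c',1),r')=(p',c',r')$.
\item If $p'=p$ but $r'\neq r$, argue symmetrically using $(p,1,q)\in t_1$ and $(q,c',r')\in t_2$.
\item If $(p',r')=(p,r)$, then $c\leq c'$. The only nontrivial subcase is $c'=1$, which follows from $(p,1,q)\in t_1$ and $(q,1,r)\in t_2$.
\end{itemize}
Hence $t_1\cc t_2\in\Sii$.

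\textbf{Neutrality of $\ti$.} First compute $\ti$ explicitly. By definition of $\transleq$, the upward closure of $\unit$ is
\[
\ti=\{(p,c,q)\mid q\stateleq p,\ \text{and } c=1 \text{ whenever } p=q\}.
\]
Let $t\in\Sii$. The inclusion $t\subseteq t\cc\ti$ is immediate by composing with $(q,1,q)\in\ti$, since $\min(c,1)=c$. For the converse inclusion, take $(p,c_1,q)\in t$ and $(q,c_2,r)\in\ti$, so $r\stateleq q$. Then $(p,\min(c_1,c_2),r)\transgeq(p,c_1,q)$. Indeed, if $r\neq q$ the endpoints differ and nothing is required of the priorities. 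If $r=q$ then $c_2=1$ and the priority equals $c_1$. So the composite lies in $t$ by upward closure. The left product $\ti\cc t=t$ follows symmetrically, with the roles of source and target swapped.

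\textbf{Main obstacle.} I expect the only delicate point to be the bookkeeping with priorities, namely ensuring that the composite priority is exactly $c'$ rather than something smaller. This is handled by always choosing priority $1$ on the component whose endpoints are left unchanged. That choice is precisely why the paper includes $c\leq c'$ in the definition of $\transleq$, as noted in its footnote.
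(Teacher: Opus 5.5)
Your proof is correct and follows essentially the same route as the paper. Both compute $\ti$ explicitly as the non-increasing transitions with priority $1$ on horizontal ones, derive neutrality from the upward closure of $t$, and prove closure under $\cc$ by the same three-way case split, placing priority $1$ on the factor whose endpoint is unchanged. Your equal-endpoints case, which produces $(p,1,r)$ from $(p,1,q)$ and $(q,1,r)$, is also spelled out slightly more explicitly than the paper's ``by upward closure''.
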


When considering an automaton $(Q,I, \Gamma)$ with $\Gamma \subseteq \Sii$, we may always assume that $I$ is downwards-closed, as the first transition taken can always be taken from some $q' \stateleq q$ for some $q\in I$.
More generally, due to this upward closure by ${\transleq}$, if a "partial run" over a "word" $w \in \Sii^+$ ends in some state $q$, then for all $q' \statel q$, there exists a run over $w$ ending in $q'$ that sees a "Büchi transition" (provided by upward closure in the last "tile").

\AP An ""ordered Büchi automaton"" consists of an automaton $\A := (Q, \stateleq, I, \Gamma)$ such that $(Q, I, \Gamma)$ is a "Büchi@@automaton" "tile automaton", and ${\stateleq}$ is a total order over $Q$ such that $I$ is downwards-closed for ${\stateleq}$ and $\Gamma \subseteq \Sii$.

\AP An "$\omega$-regular" language $L$ over some alphabet $\Pi$ is said to be an ""ordered Büchi language"" if it is the "language@@system" of some ordered Büchi automaton, up to renaming the letters in $\Pi$. That is, there exists an "ordered Büchi automaton" $\A = (Q,\stateleq, I, \Gamma)$ and a morphism $f:\Pi \to \Gamma$ (extended to words) such that for all $w\in \Pi^\omega$, $w\in L$ if and only if $f(w)\in \LangS(\A)$.

We finally introduce the notion of "skeleton" of a "tile", that allows for a more succinct description. 
We can define a unique minimal generator for each "tile" of~$\Sii$, which we shall call "skeleton".

\begin{restatable}[existence of the skeleton]{lemma}{lemskel}\label{lemma:skeleton}
	For all tiles $t$ in $\Sii$, there exists a unique minimal (for inclusion) tile $s$ such that $t$ is the upward closure of $s$.
	This tile $s$ furthermore satisfies that for all distinct "transitions" $p\trans{s}p'$ and $q\trans{s} q'$, both $p\neq q$ and $p'\neq q'$.
\end{restatable}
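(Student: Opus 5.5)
The natural candidate for the skeleton is the set of minimal elements of $t$ for the partial order ${\transleq}$. Since $Q\times\{0,1\}\times Q$ is finite and ${\transleq}$ is a partial order, every transition of $t$ lies above some minimal element of $t$. Write $s:=\min_{\transleq}(t)$. Because $t$ is upwards-closed, its upward closure $\uparrow s$ is contained in $t$; by the previous remark it also contains $t$, so $\uparrow s = t$.

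Next comes minimality and uniqueness. Let $s'\subseteq Q\times\{0,1\}\times Q$ be any tile with $\uparrow s' = t$. Take $\delta\in s$. Then $\delta\in t=\uparrow s'$, so some $\delta'\in s'\subseteq t$ satisfies $\delta'\transleq\delta$. Minimality of $\delta$ in $t$ forces $\delta'=\delta$, hence $\delta\in s'$. Thus $s\subseteq s'$ for every generator $s'$. This makes $s$ the least generator for inclusion, and in particular the unique minimal one.

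For the structural property, take two distinct transitions $(p,c,p')$ and $(q,d,q')$ in $s$ and suppose $p=q$; the case $p'=q'$ is symmetric. Since ${\stateleq}$ is total, assume without loss of generality that $q'\stateleq p'$.
\begin{itemize}
\item If $q'\statel p'$, then $(p,c,p')\transleq(q,d,q')$ by definition: the sources are equal and the order on targets is swapped. The condition on priorities only applies when both endpoints coincide, which is not the case here. This contradicts the minimality of $(q,d,q')$, since the two transitions are distinct.
\item If $q'=p'$, then the endpoints coincide and $c\neq d$. The one with the smaller priority lies strictly below the other, again contradicting minimality.
\end{itemize}
The case $p'=q'$ is handled identically, using totality of ${\stateleq}$ on the sources: the transition with the smaller source lies below the other. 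Hence distinct transitions of $s$ have distinct sources and distinct targets.

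The only delicate point is reading the definition of ${\transleq}$ correctly. The comparison of priorities is required only when $(p,q)=(p',q')$, so two transitions sharing one endpoint are always comparable through the other endpoint, thanks to the totality of ${\stateleq}$. Everything else is routine finite-poset reasoning.
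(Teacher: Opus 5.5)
Your proof is correct and follows essentially the paper's approach: the paper projects $t$ onto $Q\times Q$, keeps the minimal points and attaches to each its least available priority, and this is exactly your set of ${\transleq}$-minimal elements of $t$. Your argument that this set is contained in every generator is cleaner than the paper's, which only checks that no strict subset of $s$ still generates $t$ and so leaves uniqueness implicit.
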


\AP Let $t$ be a tile in $\Sii$, the minimal "tile" $s$ as described in \Cref{lemma:skeleton} is called the ""skeleton"" of~$t$ and is denoted $\intro*\skel(t)$.

\subsection{Examples}

We now provide a few examples of such "ordered Büchi automata", in order to clarify the objects and how they are handled.

\begin{figure}[!htb]
	\begin{minipage}[t]{0.35\textwidth}
		\includegraphics[height=2.8cm]{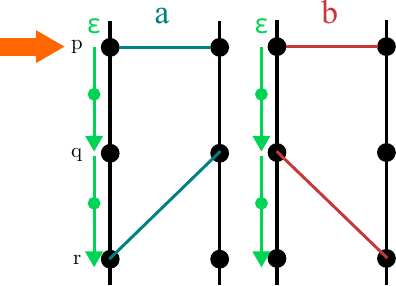}
		\caption{\label{fig:oB-inf-aa-fin-bb} An ordered Büchi automaton recognising the words with infinitely often a factor $aa$ and only finitely often $bb$.}
	\end{minipage}
	\hfill
	\begin{minipage}[t]{0.60\textwidth}
		\centering
		\includegraphics[height=2.8cm]{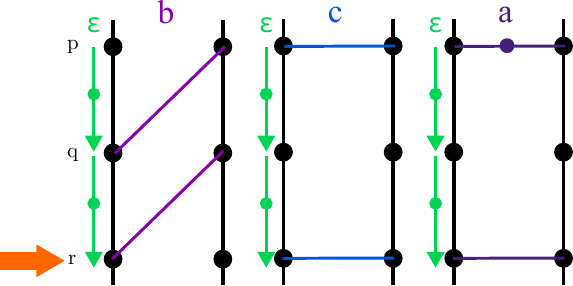}
		\caption{\label{fig:inf-b-ou-bb-inf-a} An ordered Büchi automaton recognising the words with infinitely many $b$'s, or a factor $bb$ and infinitely many $a$'s.}
	\end{minipage}
	\caption*{In these figures, "ordered Büchi automata" are represented by the "skeletons" of the letters in their corresponding alphabet, with the order ${\stateleq}$ represented vertically (the highest element being on top). The highest element of the starting set $I$ is designated by the orange arrow, and Büchi transitions are represented as lines with a circle in the middle. The $\eps$-labelled arrows on the left are there to recall that it is always possible, using the upward closure for $\transleq$, to pass to a smaller vertex while seeing a Büchi transition. As this move can take place between any transitions, they are thus equivalent to $\eps$-transitions.}
\end{figure}

The language recognised by the "ordered Büchi automaton" in \Cref{fig:oB-inf-aa-fin-bb} is the set of words with only a finite number of times the factor $bb$, 
and infinitely often the factor $aa$. Indeed, an accepting run in this automaton cannot remain infinitely often in the state $p$, and thus at some point goes in $q$ or $r$. 
Past this point, it can no longer see a factor $bb$, which would end the run, and can only see a Büchi transition when seeing an $a$ whilst the run is in $q$ -- hence an infinity of factor $aa$.\\
We similarly observe that the "ordered Büchi automaton" described in \Cref{fig:inf-b-ou-bb-inf-a} is accepting in the case of a word with infinitely many $b$'s, as it can then always see a Büchi transition coming back to $r$, or in the case of some factor $bb$, allowing it to reach the state $p$, followed by a word with an infinity of $a$'s.

\AP \noindent\textbf{Rabin languages are ordered Büchi.} We also observe that "ordered Büchi automata" can recognise "Rabin languages". A language $L$ over an alphabet $\Pi$ is a ""Rabin language"" if there exists a finite set of pairs $(G_i,R_i)_{i<n}$, with $G_i, R_i \subseteq \Pi$, such that a word $w$ belongs to $L$ if and only if, for $\inf(w)$ the set of its infinitely recurring letters, there exists some $i<n$ such that $\inf(w) \cap G_i \neq \emptyset$ and $\inf(w)\cap R_i = \emptyset$.

Let $L$ be such a "Rabin language" over an alphabet $\Pi$, with Rabin pairs $(G_i,R_i)_{i<n}$.
We define $\A = (\inter{n+1}, \leq, \inter{n+1}, \Gamma)$ an "ordered Büchi automaton", along with a morphism $f : \Pi \to \Gamma$, and will prove that it recognises exactly $L$, up to renaming the letters with $f$.
For $a$ a letter in $\Pi$, we define the following "tile":
\begin{align*}
	t_a:=  \{(n,1,n)\}\uplus \{(i,0,i) \mid i < n, a \in G_i\setminus R_i \} \uplus \{(i,1,i) \mid i<n, a \notin R_i\}\ .
\end{align*}
Note that $t_a$ is not necessarily a "skeleton", as it can have two transitions starting from the same state $i$.
We then define $f(a)$ as the upwards closure of $t_a$, and $\Gamma$ as $f(\Pi)$.
An example of such a construction can be observed in \Cref{fig:rabin-oBuchi}.

\begin{figure}[!htb]
	\centering
	\includegraphics[height=2.8cm]{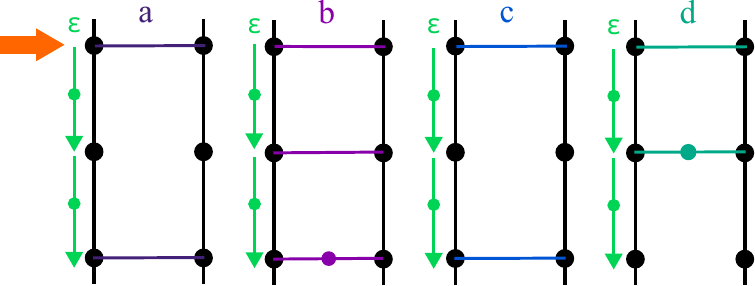}
	\caption{\label{fig:rabin-oBuchi} The following tiles describe the Rabin language of pairs $(\{d\}, \{a,c\})$ and $(\{b\},\{d\})$.}
\end{figure}

We claim that for $w = (w_j)_{j\in \NN}\in \Pi^\omega$, $w \in L$ if and only if $f(w)\in \LangS(\A)$.
Indeed, if $w \in L$, there exists $i$ such that infinitely often, $w_j \in G_i$, and past some index $j_0$, $w_j \notin R_i$. One easily checks that  $(n,1,n)^{j_0-1}\ (n,0,i)\ (i,r_{j_0},i)\ (i,r_{j_0+1},i)\dots$
(in which for all~$j\geqslant j_0$, $r_j$ is~$0$ if $w_j\in G_i$, and $1$ otherwise), is an "accepting run@@tile" over $w$.

Conversely, suppose that $f(w)\in \LangS(\A)$, with corresponding run $((q_j,r_j,q_{j+1}))_{j \in \NN}$. Note that by construction of $f$, for all $a \in \Pi$, if $i \trans{f(a)} i'$, then $i' \leq i$. Therefore, the sequence $(q_j)_{j\in \NN}$ eventually stabilizes in some $i_\infty$ at some index $j_0$ (with $i_\infty < n$, as there is no "horizontal Büchi transition" over the state $n$). Thus, for all $j\geq j_0$, $w_j \notin R_{i_\infty}$, and as the run is accepting, infinitely often, $(i_\infty,0,i_\infty) \in f(w_j)$, thus infinitely often $w_j \in G_{i_\infty}$. Therefore $i_\infty$ is an accepting Rabin pair, and $w\in L$.

\subsection{Ordered Büchi languages are the Eve-positional languages}

The main interest of these "ordered Büchi automata" lies in their ability to exactly recognise "Eve-positional" "$\omega$-regular languages":

\begin{restatable*}{theorem}{oBiffEveposi}\label{thm:oBuchi-iff-evePosi}
	Let $L$ be an $\omega$-regular language. It is "Eve-positional" if and only if it is "ordered Büchi".
\end{restatable*}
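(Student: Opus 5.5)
We prove the two directions separately, using \Cref{thm:local-posi} for one direction and \Cref{theorem:co} for the other.

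\textbf{Ordered Büchi implies Eve-positional.} Let $L$ be recognised by an "ordered Büchi automaton" $\A=(Q,\stateleq,I,\Gamma)$ via a morphism $f$. Since $f$ is a letter-to-letter renaming, it suffices to check that $\LangS(\A)$ satisfies the three "local preference properties", and to note that these properties transfer through inverse images of morphisms. The key tool is the monotonicity given by upward closure. For a finite word $u$, let $m(u)$ be the $\stateleq$-maximal state reachable after reading $u$ from $I$. Then every state $q\stateleq m(u)$ is reachable, and the residual from $q$ is included in the residual from $m(u)$, since from a higher state one can step down (with a Büchi transition) at the next letter.

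For property~1, compare $m(u)$ and $m(u')$. If, say, $m(u)\stategeq m(u')$, then an accepting run for $u'w'$ can be transplanted after $u$, giving $uw'\in L$. For property~2, take an accepting run on $uvw$ and consider the states $p$ before and $p'$ after $v$. If $p'\stateleq p$, then $v$ can be iterated: read $v$ from $p$ to $p'$, step down to $p'$, and repeat. Each step-down is Büchi, so $uv^\omega\in L$. Otherwise $p\statel p'$, and then from $p$ one can directly follow the run on $w$ starting at $p'$, which gives $uw\in L$. Property~3 is handled similarly. In an accepting run on $u(vv')^\omega$, look at the positions between $v$ and $v'$ and use a Ramsey-type choice of positions. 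Comparing the states at the boundaries, the order $\stateleq$ lets us splice either infinitely many $v$-segments or infinitely many $v'$-segments, with downward steps in between, each step-down contributing a Büchi transition.

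\textbf{Eve-positional implies ordered Büchi.} By \Cref{theorem:co}, $L$ is recognised by an "$\eps$-complete" parity automaton $A$ with index $\inter{2d+2}$. We build an "ordered Büchi automaton" with states $Q\times[0,d]$, i.e.\ one layer per odd priority. The order is lexicographic, refining the total preorders $\trans{\eps:2i+1}$ and using the layer to break ties.

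The idea is that being in layer $i$ means ``currently waiting to see priority $2i$ without seeing anything below $2i$''. A transition of $A$ with priority $c$ either
\begin{itemize}
\item stays in or moves up layers, with Büchi priority $0$ exactly when it sees the awaited even priority, or
\item drops to a lower layer, becoming Büchi, when $c$ is a smaller odd priority,
\end{itemize}
in the spirit of the standard parity-to-Büchi conversion. We then take the upward closures of the induced tiles under $\transleq$. The main obstacle is to show that this closure does not enlarge the language. Each added transition must correspond to one realisable in $A$ using the $\eps$-transitions guaranteed by $\eps$-completeness:
\begin{itemize}
\item a downward step in the preorder can be simulated by an $\eps$-transition with a suitable priority;
\item the strict-versus-nonstrict condition (item~2 of \Cref{def:eps-complete}) provides exactly the even priority that justifies a Büchi mark when the state strictly decreases.
\end{itemize}
Since $\eps$-completion does not change the language of $A$, we get $\LangS=L$ after renaming each letter to its closed tile.

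The delicate points are the layer bookkeeping and the choice of $I$ as a downward-closed set. If the direct closure argument fails, we would instead first quotient $A$ by non-Büchi $\eps$-equivalence, as hinted in the introduction.
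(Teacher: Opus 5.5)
\textbf{Ordered Büchi $\Rightarrow$ Eve-positional.} Your argument for the second local preference property has its two cases reversed, and each half is false as written. Upward closure lets a \emph{higher} state replay the runs of a lower one, never the converse.
\begin{itemize}
\item If $p' \stateleq p$, you cannot iterate $v$: from $p'$ there may be no $v$-transition at all. With $q_0 \statel q_1$, the closure of $\{(q_1,1,q_0)\}$ is just that singleton.
\item If $p \statel p'$, the accepting run on $w$ from $p'$ cannot in general be replayed from the lower state $p$.
\end{itemize}
The correct split is the opposite one. If $p \statel p'$, then $(p,c,p')$ lies in the product tile of $v$, which is in $\Sii$ by \Cref{lem:Sii-monoid}. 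Since $(p,c,p') \translt (p,0,p)$, you get $p \trans{v:0} p$, hence $uv^\omega \in L$. If $p' \stateleq p$, the first transition of the $w$-run can be started from $p$ by upward closure, hence $uw \in L$.

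Your sketch of the third property (Ramsey choice, splicing, a Büchi mark at each step-down) carries the same inversion and is not yet an argument. Büchi self-loops come from \emph{upward} moves, and if the boundary states stabilise there is no step-down at all. Here is a direct proof. Let $r_0, r_1, \dots$ be the states after $u$ and after each $v$- or $v'$-block. Up to the first strict increase they stay $\stateleq r_0$, so they are reachable after $u$. At a first strict increase $r_k \statel r_{k+1}$ over a block $x \in \{v,v'\}$, closure gives $r_k \trans{x:0} r_k$, so $ux^\omega \in L$. If there is no strict increase, the sequence stabilises at some $r$, and acceptance forces $r \trans{v:0} r$ or $r \trans{v':0} r$. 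The paper gets the same result via \Cref{lem:tile-om-rejecting-bis} followed by a trichotomy on the intermediate state.

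\textbf{Eve-positional $\Rightarrow$ ordered Büchi.} The layered construction on $Q \times [0,d]$ does not produce an ordered Büchi automaton. The quotient you mention only as a fallback is in fact the indispensable step.

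Your lexicographic order leaves distinct $\trans{\eps:2d+1}$-equivalent states of the same layer tied, and any tie-break $q_1 \statel q_2$ is fatal. Upward closure turns every transition leaving $q_1$ into a Büchi transition leaving $q_2$, whereas in $A$ the move from $q_2$ to $q_1$ is only an $\eps$-transition of odd priority. For instance, take index $\inter{2}$, a single $\trans{\eps:1}$-class $\{q_1,q_2\}$, and a letter $a$ swapping $q_1$ and $q_2$ with priority $1$. The language is empty, yet the closed $a$-tile contains $(q_2,0,q_2)$, so $a^\omega$ is accepted.

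Your rule granting a Büchi mark when a smaller odd priority forces a drop of layer is also unsound as described. Take a one-state automaton with $a$ of priority $1$ and $b$ of priority $2$. Dropping on each $a$ and climbing back on each $b$ accepts $(ab)^\omega$, although every run of $A$ on it has $\liminf$ equal to $1$. Finally, the language equality, which you rightly call the main obstacle, is never proved.

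The paper avoids layers altogether:
\begin{itemize}
\item Every $\omega$-regular language has a Büchi automaton, and this automaton is $\eps$-completable by \Cref{theorem:co}.
\item Quotienting its $\eps$-completion by the $\trans{\eps:1}$-equivalence yields a genuine total order. For this order the image of $\eps$ is exactly $\ti$: $\trans{\eps:1}$ gives the non-Büchi weakly-downward moves, and $\trans{\eps:0}$ gives the Büchi strictly-downward ones.
\item Both inclusions then follow by trading closure transitions for $\eps$-moves, with \Cref{lem:eps-complete-intertwined} normalising runs (\Cref{lem:Buchi-to-oB}).
\end{itemize}
If you want to start from a general parity automaton, the state space must be the set of odd-classes arranged in the $\eps$-tree with its depth-first order (\Cref{thm:NPA-to-oB}), not $Q \times [0,d]$.
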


This theorem will be proved in \Cref{sec:eve-posi}, but we already provide the intuition of why, for $\A = (Q, \stateleq, I, \Gamma)$ an "ordered Büchi automaton" recognising a "prefix-independent" language $L := \LangS(\A)$, this language is "Eve-positional".

It thus suffices to establish that $L$ satisfies the "local preference properties" (stated in \Cref{thm:local-posi}). By "prefix-independence", we only need to prove that it satisfies the third one. That is, for all $v,v' \in \Gamma^+$, if $(vv')^\omega \in L$, either $v^\omega \in L$ or $v'^\omega \in L$.

Without loss of generality, we can assume that all the states of $Q$ are reachable at any point along a run, as $L$ is "prefix-independent". We first claim that for $u\in \Pi^+$, $u^\omega \in L$ if and only if there exists a "horizontal Büchi transition" along $u$. The converse implication is immediate. For the direct sense, we observe that $\ti^\omega\notin \LangS(\A)$, as there is no infinite path in $\ti^\omega$ witnessing an infinity of "Büchi transitions". Therefore, up to considering $t_u$ as the tile obtained by "product" of the tiles of $u$, necessarily $t_u \nsubseteq \ti$. We then observe that any transition in $t_u \setminus \ti$ induces a "horizontal Büchi transition" by upward closure, which establishes the desired claim.

In consequence, for $v,v'\in \Sii^+$ such that $(vv')^\omega \in L$, we are in one of the four cases described in \Cref{fig:Eve-posi-PI}. In the first and second cases, immediately, either $v^\omega \in L$ or $v'^\omega\in L$. In the third case, by upward closure, we observe that $q\trans{v:0}q$, and similarly in the fourth case $q \trans{v':0} q$, which concludes the proof.

\begin{figure}[!htb]
	\centering
	\includegraphics[width=0.95\textwidth]{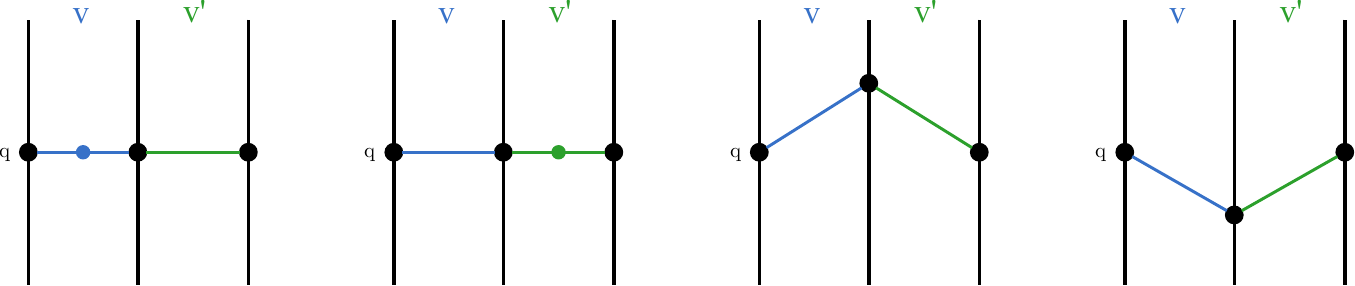}
	\caption{\label{fig:Eve-posi-PI} If $(vv')^\omega \in L$, necessarily one of the following holds for some state $q\in Q$. In the third and fourth cases, one of the two transitions depicted (over $v$ or $v'$) is necessarily Büchi, but we do not use this property for the proof.}
\end{figure}

We can prove that this result also holds in the non-"prefix-independent" case, using again the same kind of tile manipulations.

\begin{restatable*}{proposition}{oBEveposi}\label{lem:oBuchi-are-evePosi}
	Let $\A = (Q, \stateleq, I, \Gamma)$ be an "ordered Büchi automaton". Then $\LangS(\A)$ is "Eve-positional".
\end{restatable*}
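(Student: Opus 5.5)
We verify the three "local preference properties" of \Cref{thm:local-posi} for $L := \LangS(\A)$, working throughout with tiles of $\Sii$. Since $\Sii$ is a semigroup by \Cref{lem:Sii-monoid}, every finite word $u$ has a tile $t_u \in \Sii$, and its transitions are upwards-closed for $\transleq$. For each finite word $u$ we write $R(u) := \{q \mid \exists p \in I,\ p \trans{u} q\}$ for the set of states reachable after $u$ (with $R(\io)=I$). By upward closure, $R(u)$ is downwards-closed for $\stateleq$, so it is an initial segment of the total order; when $u$ is non-empty, every $q' \statel \max R(u)$ is moreover reached by a run that sees a "Büchi transition". We also set $W(q) := \{w \mid w \text{ is accepted from } q\}$. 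Since a first transition from $q$ can always be replaced by one from any $q' \statel q$ (the source decreases), $W$ is monotone: $q' \stateleq q$ implies $W(q) \subseteq W(q')$. Hence $u^{-1}L = W(\max R(u))$.

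\textbf{Property 1.} Suppose $uw\in L$ and $u'w'\in L$. Since the sets $R(\cdot)$ are initial segments of a total order, they are comparable; say $\max R(u) \stateleq \max R(u')$. By monotonicity, $w \in W(\max R(u)) \subseteq W(\max R(u'))$, hence $u'w\in L$. The symmetric case gives $uw'\in L$.

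\textbf{Property 2.} Suppose $uvw\in L$, with accepting run reaching $q=\max R(u)$ (we may assume the run goes through the top state) after $u$ and $q'$ after $uv$. If $q' \stateleq q$, we pick a transition $\delta$ of $t_v$ from $q$ to $q'$ using $q \stateleq q$ and upward closure. If $\delta$ can be chosen Büchi, or more precisely if $q' \statel q$, then $q\trans{v:0}q$ holds by upward closure, and $uv^\omega\in L$. If $q'=q$, the run over $w$ starts from $q$ and $uw\in L$. If instead $q \statel q'$, then $q' \notin R(u)$ contradicts maximality, unless we use that $q' \in R(uv)$. In that case we argue directly: $w\in W(q')\subseteq W(q)$ is false in general, so we instead use $q \statel q'$ to deduce, from $p\trans{u} r \trans{v} q'$ with $r \stateleq q$, that $r\trans{v} q'$ also yields $q \trans{v} q''$ for $q'' \stateleq q'$. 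This requires care, so the cleaner route is to compare $\max R(u)$ with $\max R(uv)$. If $\max R(uv) \stateleq \max R(u)$, then monotonicity gives $uw \in L$. Otherwise the top states strictly increase along $v$; iterating $v$, the sequence $\max R(uv^k)$ is non-decreasing and stabilises at some $s$ with $s \trans{v} s$ horizontal. We then need a Büchi loop, obtained from a transition in $t_v$ strictly above a horizontal one.

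\textbf{Property 3.} This is the main obstacle. It generalises the four-case argument of \Cref{fig:Eve-posi-PI} to the situation after a prefix $u$ from a restricted set of states. Take an accepting run on $u(vv')^\omega$. By pigeonhole, some state $q$ is visited at infinitely many boundaries between $v$-blocks and $v'$-blocks, with Büchi transitions in between, giving $q \trans{(vv')^k:0} q$ for some $k$. Decompose this loop through the intermediate states $q_1,\dots$ at the block boundaries, and take the $\stateleq$-maximal state among them. Following the four cases of \Cref{fig:Eve-posi-PI}, downward moves combined with upward closure turn either the $v$-pieces or the $v'$-pieces into a Büchi horizontal loop $m\trans{v:0}m$ or $m\trans{v':0}m$ at a state $m$. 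This $m$ must still be reachable after $u$, which holds because $R(u)$ is downwards-closed and $m \stateleq$ the states visited. The delicate point is handling loops through several blocks; I would first reduce to $k=1$ by replacing $q$ with the maximum boundary state and using upward closure to shortcut.
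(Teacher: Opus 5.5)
Properties~1 and~2 are fine in substance, but Property~3 has a genuine gap. It is at the step where you claim the loop state $m$ is reachable after $u$ ``because $R(u)$ is downwards-closed and $m\stateleq$ the states visited''. Your $m$ is the \emph{maximum} of the block-boundary states of the loop. Those states are reached after $u(vv')^j$ or $u(vv')^jv$, not after $u$. Since the run may climb during the blocks, they can lie strictly above $\max R(u)$, so the inequality you invoke goes the wrong way.

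Concretely, take $Q=\{a\statel b\}$, $I=\{a\}$, $u=\io$, $t_v$ the upward closure of $\{(a,1,b)\}$ and $t_{v'}$ that of $\{(b,1,b)\}$. Every transition of $t_{v'}$ starts in $b$, so every accepting run on $(vv')^\omega$ is in $b$ at the start of each $v'$-block. Hence your $m$ is $b\notin I$, and the only horizontal Büchi loop available there, $b\trans{v:0}b$, gives no run on $v^\omega$ from $I$. The conclusion itself is true here, since $(a,0,a)\in t_v$; it is the argument that fails. The trick that rescues your iterated version of Property~2, $uv^Kv^\omega=uv^\omega$, is unavailable here, because $u(vv')^jv^\omega\neq uv^\omega$.

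The missing idea is what the paper isolates in \Cref{lem:tile-om-rejecting-bis}, applied to $\A_U$ (with $U$ the downward closure of $p_u:=\max R(u)$) and to the product tile $t_{vv'}$. A run from $U$ can only leave $U$ through a transition which, by upward closure, already yields a horizontal Büchi transition at a state of $U$. In your block-level setting the repair is short:
\begin{itemize}
\item If all block-boundary states of the accepting run stay in $U$, then $m\in U$. From $m\trans{x:0}m$ with $x\in\{v,v'\}$ and $m\stateleq p_u$, you get $ux^\omega\in\LangS(\A)$.
\item Otherwise, take the first boundary step $r_i\trans{x}r_{i+1}$ with $r_i\in U$ and $r_{i+1}\notin U$. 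Then $r_i\stateleq p_u\statel r_{i+1}$, so $(r_i,c,r_{i+1})\translt(p_u,0,p_u)$ and directly $p_u\trans{x:0}p_u$.
\end{itemize}
Your reduction to a single block at the maximum does go through. Either all boundary states equal $m$, or some block step climbs strictly into $m$ and upward closure gives $m\trans{x:0}m$. With this addition your argument becomes a block-level variant of the paper's.

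Two smaller points. First, your stated monotonicity is reversed. Upward closure lets the \emph{source} of a transition increase, so $q'\stateleq q$ gives $W(q')\subseteq W(q)$. This is the direction you actually use, and the one needed for $u^{-1}L=W(\max R(u))$. Second, in Property~2 the abandoned first case analysis contains a false claim: $q'\statel q$ does not give $q\trans{v:0}q$. Your ``cleaner route'' is correct but needs no iteration. If $p_u\statel\max R(uv)$, the witnessing transition $r\trans{v}\max R(uv)$ with $r\stateleq p_u$ lies strictly below $(p_u,0,p_u)$, which is exactly the paper's argument.
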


The converse inclusion is shown by exhibiting a transformation from any "$\eps$-complete" "Büchi automaton" to an "ordered Büchi automaton" recognising the same language (up to letter renaming): as all "Eve-positional" "$\omega$-regular languages" can be recognised by an "$\eps$-complete" "Büchi automaton" \cite{CO24Positional}, this suffices.

\begin{restatable*}{proposition}{BuchitooB}\label{lem:Buchi-to-oB}
	For $L$ a language recognised by an "$\eps$-completable" "Büchi automaton", $L$ is "ordered Büchi".
\end{restatable*}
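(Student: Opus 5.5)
Let $A$ be an "$\eps$-completable" "Büchi automaton" recognising $L$. I fix an "$\eps$-completion" $A_\eps$ of $A$. It has "index" $\inter{2}$, so $d=0$ in \Cref{def:eps-complete}: the relation $\trans{\eps:1}$ is a total preorder on $Q$, and $\trans{\eps:0}$ is its strict variant. By definition of "$\eps$-completable", the "language@@eps" of $A_\eps$ is still $L$.

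\textbf{The construction.} Quotient $Q$ by the equivalence $q\sim q'$ iff $q\trans{\eps:1}q'$ and $q'\trans{\eps:1}q$. This gives a set $Q/{\sim}$ of classes, and I order it totally by setting $[q]\stateleq[q']$ iff $q'\trans{\eps:1}q$. With this convention, going down strictly along ${\stateleq}$ is exactly an $\trans{\eps:0}$ move, i.e. a Büchi $\eps$-transition.

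For each letter $a$, I define the tile $f(a)\subseteq Q/{\sim}\times\{0,1\}\times Q/{\sim}$ as follows. It contains $([p],c,[q])$ whenever there are $p'\sim p$, $q'\sim q$ and a path $p'\,\eps^*\,a\,\eps^*\,q'$ in $A_\eps$ of minimal priority $\le c$; here "$\le c$" means that I always add priority $1$ alongside priority $0$. As initial set I take the downward closure of the classes of initial states.

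\textbf{Upward closure.} I must check that $f(a)\in\Sii$. Suppose $([p],c,[q])\in f(a)$, $[p]\stateleq[p_1]$ and $[q_1]\stateleq[q]$. If $[p]\statel[p_1]$, I prepend an $\eps$-move $p_1\trans{\eps:0}p'$; if $[p]=[p_1]$, I prepend $p_1\trans{\eps:1}p'$. I treat the end of the path symmetrically. This yields a path realising $([p_1],c',[q_1])$ with $c'\le c$, and $c'=0$ as soon as one of the two moves is strict. Since I include priority $1$ whenever priority $0$ is present, every $\transleq$-larger transition belongs to $f(a)$. Transitivity of the total preorder $\trans{\eps:1}$ is what lets me compose these $\eps$-moves freely. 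So $\A=(Q/{\sim},\stateleq,I',f(\Pi))$ is an "ordered Büchi automaton".

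\textbf{Language equality.} First, $L\subseteq f^{-1}(\LangS(\A))$. An accepting run of $A_\eps$ over some $w'$ whose $\eps$-erasure is $w$ decomposes into blocks of the form $\eps^*a\eps^*$, one per letter (the $\eps$'s before the first letter are absorbed in the first block). Each block maps to a transition of the corresponding tile $f(a)$, with priority $0$ whenever the block contains a Büchi transition. Hence infinitely many $0$'s are preserved.

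Conversely, take an accepting "run@@tile" of $\A$ over $f(w)$. Each tile transition is witnessed by a concrete path in $A_\eps$, and a $0$-labelled tile transition can be witnessed by a path containing a priority-$0$ transition. Consecutive witnesses end at some $q'$ and start at some $p''$ with $q'\sim p''$. I glue them with an $\eps:1$ move, which exists since $\sim$ is mutual $\trans{\eps:1}$-reachability. The first witness may start in any state of a class in $I'$, and I reach it from a genuine initial state by $\eps$-moves; this is legitimate because the initial set of $A_\eps$ may be assumed downward closed, as noted after \Cref{def:eps-complete}. The result is a run of $A_\eps$ with infinitely many letters and infinitely many priority-$0$ transitions, so $w\in L$.

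\textbf{Main obstacle.} The delicate part is the bookkeeping of the gluing, in particular the orientation convention for ${\stateleq}$ versus $\trans{\eps:c}$ and the priority-$1$ padding. These need care so that the tiles are genuinely upward closed, including the $c\le c'$ clause in the definition of ${\transleq}$. They also need care so that every tile transition is realisable in $A_\eps$ without overstating Büchi visits.
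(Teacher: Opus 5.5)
Your proof is correct and follows essentially the paper's construction. Your $\varepsilon^*a\varepsilon^*$-saturated tiles on the classes of the total preorder $\xrightarrow{\varepsilon:1}$ coincide exactly with the paper's $f(a)$, the upward closure of the quotiented letter transitions $f_0(a)$, and both directions of the language equality are argued in the same way. The only minor difference is that letting tiles absorb arbitrary $\varepsilon$-blocks lets you bypass the normalisation to runs over $(\varepsilon w_i \varepsilon)_{i}$ (\Cref{lem:eps-complete-intertwined}) that the paper invokes.
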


We even provide a transformation from a general "$\eps$-complete" "parity automaton" to an "ordered Büchi automaton" recognising the same language, that is asymptotically optimal.

\begin{restatable*}{theorem}{NPAtooB}\label{thm:NPA-to-oB}
	For $A$ an "$\eps$-completable" "$\inter{2k}$-automaton" with $n$ states, recognising a language $L$, there effectively exists an "ordered Büchi automaton" with $kn$ states recognising $L$.
\end{restatable*}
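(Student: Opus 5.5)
The plan is to adapt the classical parity-to-Büchi translation, guessing a ``level'', and to use the $\eps$-complete structure of $A$ to put a total order on the resulting $kn$ states. By definition of "$\eps$-completable", we first add $\eps$-transitions to $A$ to obtain an "$\eps$-complete" automaton with the same language. This yields total preorders $\preceq_1,\preceq_3,\dots,\preceq_{2k-1}$ on $Q$, each refining the previous one. Here $p\preceq_{2j+1} q$ means $q\trans{\eps:2j+1}p$, and $q\trans{\eps:2j}p$ holds exactly when $p$ is strictly below $q$ for $\preceq_{2j+1}$. If $p\trans{\eps:2k-1}q$ and $q\trans{\eps:2k-1}p$, then $p$ and $q$ have the same residual, so we may merge them. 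After this quotient $\preceq_{2k-1}$ is a total order, and the number of states only decreases.

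\textbf{Construction.} The states are the pairs $(q,i)$ with $q\in Q$ and $i\in[0,k-1]$. Level $i$ encodes the guess that from now on only priorities $\geq 2i$ occur, and that $2i$ occurs infinitely often. A letter $a$ yields the tile containing $((p,i),r,(q,i'))$ whenever $p\trans{a:c}q$ in $A$ and $c\geq 2i'$. The level may go up freely, and it is forced down when a small priority is seen. Such a transition is Büchi, i.e.\ $r=0$, only if $i=i'$ and $c=2i$. Counting Büchi transitions only at a fixed level rules out false acceptance through infinitely many level oscillations, for instance on the priority sequence $2,1,2,1,\dots$. Soundness and completeness of this Büchi automaton are standard: an accepting run eventually stays at one level $i$ with infinitely many $2i$ and nothing below it, and conversely we guess $i=\liminf/2$ after the last smaller priority. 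We then order the states as follows. Compare $(q,i)$ and $(q',i')$ by the classes of $q,q'$ under $\preceq_1,\preceq_3,\dots$ up to index $2\min(i,i')+1$, taking the first difference. If all these classes agree, the pair with the larger level is the smaller one. Remaining ties are broken by $\preceq_{2k-1}$. The initial set is the downward closure of $I\times[0,k-1]$.

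\textbf{Key step and expected obstacle.} The alphabet must lie in $\Sii$. So we replace every tile by its $\transleq$-upward closure and then show that the language does not grow. This is the main obstacle. It reduces to showing that a ``downward Büchi jump'' $(q,i)\to(q',i')$ with $(q',i')\statel(q,i)$ can be simulated in the $\eps$-complete automaton. Let $j$ be the first index at which the classes of $q$ and $q'$ differ. Then $q\trans{\eps:2j}q'$, and we pass to level $\min(i',j)$, seeing priority $2j$. If instead the classes agree up to $\min(i,i')$ and $i'>i$, the jump is realised by $\eps$-moves of priority $\geq 2i+1$ while the level is raised. We must also handle the extra case $c'=1$, $c=0$ in $\transleq$, where a Büchi transition is relaxed to a non-Büchi one. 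For this we show that the upward-closed tiles accept a run only when the original tiles admit a run accepting in $A$ with $\eps$-moves inserted, so that the language of the $\eps$-complete automaton is recovered.

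The delicate point in this last argument is the following. The simulated run may see infinitely many Büchi jumps at changing levels, and we must check that its liminf priority in $A$ is still even. To do this we use that the state sequence, read in the order $\stateleq$, can only decrease through Büchi jumps. Between two Büchi events of a fixed eventual level, the inserted $\eps$-priorities are then $\geq 2i$. If this bookkeeping fails with the order as defined, the first fallback is to swap the tie-breaking direction on levels. Once upward closure is proved harmless, $\Gamma\subseteq\Sii$ and $I$ is downwards-closed, so we obtain an "ordered Büchi automaton" with $kn$ states recognising $L$.
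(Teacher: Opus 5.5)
There is a genuine gap: your construction does not survive the upward closure. The step you leave open as the main obstacle is in fact false for this state space, for two independent reasons.

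\textbf{Level dynamics.} A letter may push the level down and a later letter may raise it again. Because of this, the Büchi automaton is unsound even before closing. Take one state $q$ with self-loops of priorities $2$ and $1$, and read the priority sequence $(2,2,1)^\omega$. The run $(q,0)\to(q,1)\to(q,1)\to(q,0)\to\cdots$ takes a level-$1$ Büchi transition in every period, although the liminf is $1$. So an accepting run need not stabilise its level, contrary to your soundness claim. The closure makes this worse. In your order deeper levels are smaller, so a forced-down letter transition goes \emph{up} for $\sqsubseteq$, and the closure then offers a free Büchi drop back. Concretely, take a single state $q$, $k=2$ and $q\xrightarrow{a:1}q$, so $L=\emptyset$. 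The tile of $a$ contains $((q,1),1,(q,0))$. Its closure therefore contains the horizontal Büchi transition $((q,1),0,(q,1))$ on the initial state $(q,1)$, and $a^\omega$ is accepted. So letters must never lead to a shallower level.

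\textbf{One copy per state per level.} The second problem is more fundamental, and no tie-breaking fixes it. If $q,q'$ lie in the same $\xrightarrow{\varepsilon:2i+1}$-class, one of $(q,i),(q',i)$ is strictly below the other. The closure then creates Büchi jumps into the lower one. In $A$, however, there is no $\varepsilon$-move of even priority $\le 2i$ between them; the best even one is $2j$ with $j>i$. This is exactly the case $i=i'<j$ of your key-step sketch, where seeing $2j$ certifies nothing at level $i$. Concretely, take $k=2$ and $q,q'$ in a single $1$-class with $q'$ strictly below $q$ for $\preceq_3$. Then $q\xrightarrow{\varepsilon:2}q'$, and no $\varepsilon$-transition has priority $0$. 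Let $a$ act by $q\xrightarrow{a:1}q$ and $q'\xrightarrow{a:1}q$, so $L=\emptyset$. The tile of $a$ contains $((q',0),1,(q,0))$, whose closure contains $((q',0),0,(q',0))$ with $(q',0)$ initial. Hence $a^\omega$ is accepted, even if levels are made monotone.

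\textbf{Missing idea.} At each level $i$, identify all states of a $\xrightarrow{\varepsilon:2i+1}$-class. This means taking as states the nodes of the $\varepsilon$-tree, of which there are at most $kn$; your order then becomes its depth-first left-to-right order. Letters act only inside one depth $d$: keep $C\to C'$ iff some $q\xrightarrow{a:c}q'$ with $q\in C$, $q'\in C'$ and either $c$ even or $c\geq 2d-1$. Such a transition is Büchi iff $c$ is even and $c\le 2d-2$. All depth changes are produced by the closure. Then distinct nodes of depth $d$ are distinct $\xrightarrow{\varepsilon:2d-1}$-classes, and a node can only be left towards a shallower node by jumping to the right. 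Consequently, on every Büchi cycle of minimal depth $d$ one finds a letter or $\varepsilon$-move of even priority at most $2d-2$, and no odd priority below $2d-1$. This is the paper's lemma on Büchi cycles, and it is what makes the closure harmless.
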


A corollary of these results is that we can complete the missing implication in Casares and Ohlmann's work. Indeed, chaining \Cref{thm:NPA-to-oB} and \Cref{lem:oBuchi-are-evePosi}, we obtain that:
\begin{restatable*}{corollary}{epscompl}\label{lem:eps-complete-CO}
	Let $A$ be an "$\eps$-completable automaton", then $\Lang(A)$ is "Eve-positional".
\end{restatable*}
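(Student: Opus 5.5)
The corollary follows by chaining two results already stated. Let $A$ be an "$\eps$-completable automaton" recognising $L := \Lang(A)$.

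First I normalise the index of $A$. By \Cref{def:eps-complete}, an "$\eps$-complete" automaton has index $\inter{2d+2}$. So after adding the $\eps$-transitions that witness $\eps$-completability (which by definition does not enlarge the recognised language), $A$ is an "$\eps$-completable" "$\inter{2k}$-automaton" with $k=d+1$. If the original $A$ had a smaller or shifted index, I view it as a $\inter{2k}$-automaton in which some priorities are unused, after shifting priorities by an even amount if needed; this does not change the accepted runs.

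Next I apply \Cref{thm:NPA-to-oB} to this automaton. It gives an "ordered Büchi automaton" $\A$ with $kn$ states and $\LangS(\A)=L$, possibly only up to the letter renaming $f:\Pi\to\Gamma$ allowed in the definition of "ordered Büchi language".

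Then \Cref{lem:oBuchi-are-evePosi} shows that $\LangS(\A)$ is "Eve-positional". The one point needing care is the transfer back through the renaming $f$: I must check that $L=f^{-1}(\LangS(\A))$ is Eve-positional. I expect this to be the main (if mild) obstacle. There are two ways to handle it.

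The cleanest is through \Cref{thm:local-posi}. Each "local preference property" is a condition on words formed by concatenating factors $u,v,w,\dots$. Since $f$ extends to a morphism on finite and infinite words, it commutes with these concatenations and with $\omega$-powers, and $x\in L$ iff $f(x)\in\LangS(\A)$. So each property for $L$ is an instance of the same property for $\LangS(\A)$, applied to the images $f(u),f(v),f(w)$. The one side condition is that $v,v'$ be non-empty, and $f$ preserves non-emptiness, so the transfer goes through.

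Alternatively, one can argue with games: an arena coloured by $\Pi$ can be recoloured through $f$, which preserves both the winning plays and positionality of strategies.

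Either way $L$ is "Eve-positional", which proves the corollary.
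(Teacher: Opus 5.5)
Your proposal is correct and follows essentially the same route as the paper, which chains \Cref{thm:NPA-to-oB} with \Cref{cor:oB-Eveposi}. The proof of that corollary is exactly your transfer of the local preference properties back through the letter renaming $f$, and your index normalisation is harmless bookkeeping that the paper leaves implicit.
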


\subsection{Determinization of ordered Büchi automata}

We finally provide a determinization procedure for "ordered Büchi automata", with size blow-up at most factorial. This determinization implies maintaining a kind of Last Appearance Record (initially introduced in \cite{LAR}), where we keep track, after some prefix, of the best run candidate ending in each state, ordered by comparative longevity. The states of this automaton, denoted "records", thus correspond to permutations of downwards-closed subsets of $Q$.  The \Cref{sec:det} is dedicated to this determinization and the corresponding subresults.

\begin{restatable*}{theorem}{detoB}\label{thm:exists-det-oB}
	For $\S = (Q, \stateleq, I, \Gamma)$ an "ordered Büchi automaton" of language $L$, there effectively exists a "deterministic parity automaton" recognising $L$, with at most $$1 + \sum_{i=0}^{|Q|-1} i!$$ states.
\end{restatable*}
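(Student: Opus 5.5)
We build a deterministic parity automaton whose states, the \emph{records}, are injective sequences $(q_1,\dots,q_m)$ listing the states of a downwards-closed subset of $Q$ (possibly empty), ordered by \emph{age}. The intended invariant is as follows. After reading a finite prefix $u$, let $D$ be the set of states reachable by some partial run over $u$. Since every tile of $\Sii$ is upward closed, $D$ is downwards closed. For each $q\in D$ we keep an abstract ``best run'' ending in $q$, and we rank these runs lexicographically by how long ago they last visited a Büchi transition, so that older positions in the record correspond to runs that are more advantageous to keep. Because $D$ is a downward-closed subset of a total order, it is determined by its size $m$. A record is therefore a permutation of the $m$ smallest states, giving $\sum_{m} m!$ records. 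We treat the record with $m=|Q|$ specially, and we add one rejecting sink for the empty set; this is where the bound $1+\sum_{i=0}^{|Q|-1} i!$ comes from. Reducing the count by one level should come from the fact that, in the full case, the position of the maximal state is irrelevant, since the maximal state is always dominated by any smaller state with a Büchi transition. I would check this carefully and otherwise adjust the count.

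The transition on a tile $t\in\Gamma$ is computed from the skeleton $\skel(t)$ (\Cref{lemma:skeleton}), whose transitions are injective on both sources and targets. Each target state $q'$ inherits the age position of the \emph{oldest} source $p$ with $p\trans{t}q'$. All remaining newly reachable states are appended at the young end; by upward closure these are reached through Büchi transitions from below. We record the minimal position $i$ that was either destroyed (its state has no successor inherited from it) or that gained a Büchi transition. We emit priority $2i$ if position $i$ saw a Büchi transition and $2i+1$ if it was destroyed, in the usual LAR/Safra style. The key structural lemma to prove is a dominance property. Among runs reaching states at the same moment, a run through an older, smaller record position simulates every continuation of a younger one. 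Upward closure makes this work: from a state $p\stateleq p'$ one can follow any transition from $p'$ downward while seeing a Büchi transition.

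For correctness there are two directions. If the deterministic automaton accepts with $\liminf$ equal to $2i$, then position $i$ is eventually never destroyed and infinitely often flashes Büchi. The runs occupying position $i$ form a chain of extensions, so by König's lemma we extract an accepting run of $\S$. Conversely, given an accepting run $\rho$ of $\S$, we track the record position of $\rho$'s current state. By the dominance lemma this position never increases, except that it may be replaced by an older dominating run. Hence it stabilizes at some $i$. Each Büchi transition of $\rho$ then forces a priority of at most $2i$, and positions $\le i$ are destroyed only finitely often, since positions can only move towards older positions finitely many times. So the $\liminf$ is even.

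I expect the main obstacle to be formulating and proving the dominance lemma precisely. In particular one has to show that merging runs onto the oldest predecessor never loses an accepting run. This requires checking that when several sources map to comparable targets, the upward closure of tiles lets the older run replay the younger run's future with no more rejections. The precise state count, especially the saving of the top factorial, is a secondary bookkeeping issue.
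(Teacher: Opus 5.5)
Your architecture matches the paper's: a last-appearance record over the downward-closed reachable set, the oldest run kept on collisions, and fresh states appended at the young end. However, your priority rule is wrong, and this breaks $\Lang(\Ai)\subseteq L$. Destroying position $i$ must emit an odd priority \emph{smaller} than the priority of a Büchi flash at position $i$; the paper uses $\min(2\,\green,2\,\red-1)$. With your $2i+1$, a $\liminf$ of $2i$ is compatible with position $i$ being destroyed infinitely often. So ``position $i$ is eventually never destroyed'' does not follow, and there is no single run to extract.

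Concretely, take $Q=\{q_0\statel q_1\}$, $I=Q$, and tiles $C,B$ with skeletons $\{(q_1,1,q_1),(q_0,0,q_0)\}$ and $\{(q_1,1,q_1)\}$. From the record $(q_1,q_0)$, reading $C$ flashes only the younger position $i$ (your priority $2i$) and returns to $(q_1,q_0)$. Reading $B$ kills $q_0$, destroying position $i$ (your priority $2i+1$), and $q_0$ is re-created, giving $(q_1,q_0)$ again. So your automaton accepts $(CB)^\omega$. Yet the only infinite run of $\S$ on it is the non-Büchi loop on $q_1$, since any run entering $q_0$ dies at the next $B$.

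Two further repairs are needed. First, the rule ``$q'$ inherits the oldest $p$ with $p\trans{t}q'$'' must be restricted to $q'=\delta_t(p)$, the maximal successor: by upward closure, $\max D$ reaches all of $t(D)$, so the unrestricted rule sends every target to the oldest position. Second, with the restricted rule the oldest position always holds $\max D$. This is because $\delta_t$ is monotone, so $\delta_t(\max D)=\max t(D)$, and the oldest position is never forgotten while $t(D)\neq\emptyset$. This invariant, not a special treatment of the full level, is what gives $(m-1)!$ records of each size $m\geq 1$ and hence the stated bound. Saving a factor only at $m=|Q|$ overshoots the bound once $|Q|\geq 3$.

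The direction $L\subseteq\Lang(\Ai)$ has a genuine gap: the ``dominance lemma'' you leave open is the heart of it, and tracking the position of $\rho$'s current state cannot supply it. An accepting run $\rho$ need not move to maximal successors. A Büchi drop $q_j\trans{w_j:0}q_{j+1}\statel\delta_{w_j}(q_j)$ can land on a freshly created state, so its position can increase. Also, a Büchi transition of $\rho$ is in general not the transition from $s_j(i)$ to $s_{j+1}(i)$ that the priority inspects, so it forces nothing.

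The paper works instead with \emph{tracks}: runs that always move to $\delta_{w_j}$ of their current state. A track started above $q_0$ stays above $\rho$ by monotonicity, hence is infinite, and tracks that meet coincide afterwards. Take the $\stateleq$-minimal class of infinite tracks above $\rho$. Its position is non-increasing and stabilises at some $i$, after which no position $\leq i$ is forgotten. One then shows this track sees infinitely many Büchi transitions, by three cases. In the first, it eventually coincides with $\rho$. In the second, it is alternately equal to and strictly above $\rho$; at each last step strictly above before rejoining, $p_j\stateg q_j\trans{w_j}q_{j+1}=p_{j+1}$ yields $p_j\trans{w_j:0}p_{j+1}$ by upward closure. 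In the third, it stays strictly above $\rho$; tracks started on $\rho$ later are infinite and, by minimality, must merge into it from strictly below, and each merge gives a Büchi transition by upward closure. Since this track occupies position $i$ from then on, each of its Büchi transitions emits a priority $\leq 2i$, while odd priorities are eventually $\geq 2i+1$.
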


This result notably implies that all Büchi automata recognising "Eve-positional languages" can be determinized in size at most factorial, as opposed to the lower bound in $O((1.64n)^n)$ in the general case \cite{determinising_parity}.

We finally show that over some sufficiently complete tile alphabets, such as for the $n$-complete Rabin (which is the Rabin language with $n$ pairs and all the letters with different behaviours), this determinization procedure yields automata optimal as to their number of states.

\subsection{Discussion}
We introduce in this article "ordered Büchi automata" as an alternative formalism to characterise "Eve-positional languages". They allow a succinct yet easy-to-handle representation of these languages.
Using this formalism, we propose a determinization procedure, optimal over some languages (as soon as the tile alphabet contains an expressive enough sub-alphabet), which establishes a new upper bound on the determinization of "Eve-positional languages". In particular, this determinization makes explicit a structural simplicity of these languages, as it only requires to store permutations of the states instead of the Safra trees used in the general case (\cite{safra_determinization_88}).

A continuation of this work that looks promising is to study the same idea over larger classes than "$\omega$-regular languages", as in \cite{mem_CO_2025}. This would amount to considering an "ordered Büchi automaton" with $\omega$ states.\\
Unfortunately, we did not manage, using these "ordered Büchi automata", to obtain a faster decision procedure than the one proposed by Casares and Ohlmann \cite{CO24Positional}.
	
	\bibliography{biblio}

\begin{thebibliography}{10}

\bibitem{Bertrand_2019}
Nathalie Bertrand, Miheer Dewaskar, Blaise Genest, Hugo Gimbert, and Adwait~Amit Godbole.
\newblock Controlling a population.
\newblock {\em Logical Methods in Computer Science}, Volume 15, Issue 3, jul 2019.
\newblock URL: \url{http://dx.doi.org/10.23638/LMCS-15(3:6)2019}, \href {https://doi.org/10.23638/lmcs-15(3:6)2019} {\path{doi:10.23638/lmcs-15(3:6)2019}}.

\bibitem{Bouyer_Buchi_posi}
Patricia Bouyer, Antonio Casares, Mickael Randour, and Pierre Vandenhove.
\newblock {Half-Positional Objectives Recognized by Deterministic B\"{u}chi Automata}.
\newblock In Bartek Klin, S{\l}awomir Lasota, and Anca Muscholl, editors, {\em 33rd International Conference on Concurrency Theory (CONCUR 2022)}, volume 243 of {\em Leibniz International Proceedings in Informatics (LIPIcs)}, pages 20:1--20:18, Dagstuhl, Germany, 2022. Schloss Dagstuhl -- Leibniz-Zentrum f{\"u}r Informatik.
\newblock URL: \url{https://drops.dagstuhl.de/entities/document/10.4230/LIPIcs.CONCUR.2022.20}, \href {https://doi.org/10.4230/LIPIcs.CONCUR.2022.20} {\path{doi:10.4230/LIPIcs.CONCUR.2022.20}}.

\bibitem{Bouyer_2023}
Patricia Bouyer, Mickael Randour, and Pierre Vandenhove.
\newblock Characterizing omega-regularity through finite-memory determinacy of games on infinite graphs.
\newblock {\em TheoretiCS}, Volume 2, jan 2023.
\newblock URL: \url{http://dx.doi.org/10.46298/theoretics.23.1}, \href {https://doi.org/10.46298/theoretics.23.1} {\path{doi:10.46298/theoretics.23.1}}.

\bibitem{Buchi_Landweber_synthesis}
J.~Richard Büchi and Lawrence~H. Landweber.
\newblock Solving sequential conditions by finite-state strategies.
\newblock {\em Transactions of the American Mathematical Society}, 138:295--311, 1969.
\newblock URL: \url{http://www.jstor.org/stable/1994916}.

\bibitem{CO24Positional}
Antonio Casares and Pierre Ohlmann.
\newblock Positional {\(\omega\)}-regular languages.
\newblock In Pawel Sobocinski, Ugo~Dal Lago, and Javier Esparza, editors, {\em Proceedings of the 39th Annual {ACM/IEEE} Symposium on Logic in Computer Science, {LICS} 2024, Tallinn, Estonia, July 8-11, 2024}, pages 21:1--21:14. {ACM}, 2024.
\newblock \href {https://doi.org/10.1145/3661814.3662087} {\path{doi:10.1145/3661814.3662087}}.

\bibitem{mem_CO_2025}
Antonio Casares and Pierre Ohlmann.
\newblock Characterising memory in infinite games.
\newblock {\em Logical Methods in Computer Science}, Volume 21, Issue 1, March 2025.
\newblock URL: \url{http://dx.doi.org/10.46298/lmcs-21(1:28)2025}, \href {https://doi.org/10.46298/lmcs-21(1:28)2025} {\path{doi:10.46298/lmcs-21(1:28)2025}}.

\bibitem{CFK_safety_posi}
Thomas Colcombet, Nathanaël Fijalkow, and Florian Horn.
\newblock {Playing Safe}.
\newblock In Venkatesh Raman and S.~P. Suresh, editors, {\em 34th International Conference on Foundation of Software Technology and Theoretical Computer Science (FSTTCS 2014)}, volume~29 of {\em Leibniz International Proceedings in Informatics (LIPIcs)}, pages 379--390, Dagstuhl, Germany, 2014. Schloss Dagstuhl -- Leibniz-Zentrum f{\"u}r Informatik.
\newblock URL: \url{https://drops.dagstuhl.de/entities/document/10.4230/LIPIcs.FSTTCS.2014.379}, \href {https://doi.org/10.4230/LIPIcs.FSTTCS.2014.379} {\path{doi:10.4230/LIPIcs.FSTTCS.2014.379}}.

\bibitem{local-posi}
Thomas Colcombet and Olivier Idir.
\newblock An algebraic characterisation of eve-positional languages.
\newblock ArXiv, link to appear, 2026.

\bibitem{ColcombetN06}
Thomas Colcombet and Damian Niwi\'{n}ski.
\newblock On the positional determinacy of edge-labeled games.
\newblock {\em Theor. Comput. Sci.}, 352(1-3):190--196, 2006.
\newblock URL: \url{https://doi.org/10.1016/j.tcs.2005.10.046}, \href {https://doi.org/10.1016/J.TCS.2005.10.046} {\path{doi:10.1016/J.TCS.2005.10.046}}.

\bibitem{Safra_flowers}
Thomas Colcombet and Konrad Zdanowski.
\newblock A tight lower bound for determinization of transition labeled b{\"u}chi automata.
\newblock In Susanne Albers, Alberto Marchetti-Spaccamela, Yossi Matias, Sotiris Nikoletseas, and Wolfgang Thomas, editors, {\em Automata, Languages and Programming}, pages 151--162, Berlin, Heidelberg, 2009. Springer Berlin Heidelberg.
\newblock URL: \url{https://doi.org/10.1007/978-3-642-02930-1_13}.

\bibitem{Emerson_Jutla_parity}
E.A. Emerson and C.S. Jutla.
\newblock Tree automata, mu-calculus and determinacy.
\newblock In {\em [1991] Proceedings 32nd Annual Symposium of Foundations of Computer Science}, pages 368--377, 1991.
\newblock \href {https://doi.org/10.1109/SFCS.1991.185392} {\path{doi:10.1109/SFCS.1991.185392}}.

\bibitem{LAR}
Yuri Gurevich and Leo Harrington.
\newblock Trees, automata, and games.
\newblock In {\em Proceedings of the Fourteenth Annual ACM Symposium on Theory of Computing}, STOC '82, page 60–65, New York, NY, USA, 1982. Association for Computing Machinery.
\newblock \href {https://doi.org/10.1145/800070.802177} {\path{doi:10.1145/800070.802177}}.

\bibitem{RabinPosi}
N.~Klarlund.
\newblock Progress measures, immediate determinacy, and a subset construction for tree automata.
\newblock In {\em [1992] Proceedings of the Seventh Annual IEEE Symposium on Logic in Computer Science}, pages 382--393, 1992.
\newblock URL: \url{https://doi.org/10.1016/0168-0072(94)90086-8}, \href {https://doi.org/10.1109/LICS.1992.185550} {\path{doi:10.1109/LICS.1992.185550}}.

\bibitem{Kopczynski06}
Eryk Kopczy\'{n}ski.
\newblock Half-positional determinacy of infinite games.
\newblock In Michele Bugliesi, Bart Preneel, Vladimiro Sassone, and Ingo Wegener, editors, {\em Automata, Languages and Programming, 33rd International Colloquium, {ICALP} 2006, Venice, Italy, July 10-14, 2006, Proceedings, Part {II}}, volume 4052 of {\em Lecture Notes in Computer Science}, pages 336--347. Springer, 2006.
\newblock \href {https://doi.org/10.1007/11787006\_29} {\path{doi:10.1007/11787006\_29}}.

\bibitem{Kopczynski07}
Eryk Kopczy\'{n}ski.
\newblock Omega-regular half-positional winning conditions.
\newblock In Jacques Duparc and Thomas~A. Henzinger, editors, {\em Computer Science Logic, 21st International Workshop, {CSL} 2007, 16th Annual Conference of the EACSL, Lausanne, Switzerland, September 11-15, 2007, Proceedings}, volume 4646 of {\em Lecture Notes in Computer Science}, pages 41--53. Springer, 2007.
\newblock \href {https://doi.org/10.1007/978-3-540-74915-8\_7} {\path{doi:10.1007/978-3-540-74915-8\_7}}.

\bibitem{Borel_determinacy}
Donald~A. Martin.
\newblock Borel determinacy.
\newblock {\em Annals of Mathematics}, 102(2):363--371, 1975.
\newblock URL: \url{http://www.jstor.org/stable/1971035}.

\bibitem{Ohlmann_universal_graphs}
Pierre Ohlmann.
\newblock Characterizing positionality in games of infinite duration over infinite graphs.
\newblock {\em TheoretiCS}, Volume 2, jan 2023.
\newblock URL: \url{http://dx.doi.org/10.46298/theoretics.23.3}, \href {https://doi.org/10.46298/theoretics.23.3} {\path{doi:10.46298/theoretics.23.3}}.

\bibitem{Rabin1968DecidabilityOS}
Michael~O. Rabin.
\newblock Decidability of second-order theories and automata on infinite trees.
\newblock {\em Bulletin of the American Mathematical Society}, 74:1025--1029, 1968.
\newblock URL: \url{https://api.semanticscholar.org/CorpusID:6015948}.

\bibitem{safra_determinization_88}
S.~Safra.
\newblock On the complexity of omega -automata.
\newblock In {\em Proceedings of the 29th Annual Symposium on Foundations of Computer Science}, SFCS '88, page 319–327, USA, 1988. IEEE Computer Society.
\newblock \href {https://doi.org/10.1109/SFCS.1988.21948} {\path{doi:10.1109/SFCS.1988.21948}}.

\bibitem{determinising_parity}
Sven Schewe and Thomas Varghese.
\newblock Determinising parity automata.
\newblock pages 486--498, 2014.

\bibitem{Seidl_Niwinski_1999}
Helmut Seidl and Damian Niwi\'{n}ski.
\newblock On distributive fixed-point expressions.
\newblock {\em RAIRO - Theoretical Informatics and Applications}, 33(4–5):427–446, 1999.
\newblock \href {https://doi.org/10.1051/ita:1999101} {\path{doi:10.1051/ita:1999101}}.

\end{thebibliography}
	
	\section{Tile monoids and Eve-positionality of ordered Büchi languages}\label{sec:eve-posi}
	

\knowledgenewrobustcmd{\om}{\cmdkl{\omega}}
\knowledgenewrobustcmd{\up}{\cmdkl{\uparrow}}
\knowledgenewrobustcmd{\ci}{\cmdkl{\circ}}

In this section, we give some results over "ordered Büchi automata", before establishing that the "ordered Büchi automata" recognise exactly the "Eve-positional" "$\omega$-regular languages".

\subsection{Properties of ordered Büchi tiles}

We observed that when we set a total order ${\stateleq}$ over $Q$, and consider the set $\Sii$ of upwards-closed tiles for the corresponding order ${\transleq}$, the set $\Sii$ still has a monoid structure, by \Cref{lem:Sii-monoid}.

We can then observe that for all "ordered Büchi automaton" $\A = (Q,\stateleq, I, \Gamma)$, a tile $t\in \Gamma$ is such that $t^\omega \in \LangS(\A)$ if and only if it admits a "horizontal" "Büchi transition" over a state in $I$.
This result will notably be useful when reasoning on "ultimately-periodic" words, typically for the "local preference properties".

\begin{restatable}{lemma}{lemomrejecting}\label{lem:tile-om-rejecting-bis}
	Let $\A = (Q, \stateleq, I, \Gamma)$ be an "ordered Büchi automaton".
	For all $t \in \Gamma$, $t^{\omega}\in \LangS(\A)$ if and only if there exists $q\in I$ such that $q \trans{t:0} q$.
\end{restatable}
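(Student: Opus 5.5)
The converse direction is immediate, and I would dispatch it first. If $q\in I$ and $q \trans{t:0} q$, then the constant run $(q,0,q)^\omega$ is a "run@@tile" over $t^\omega$. It starts in $I$ and sees a "Büchi transition" at every step, so it is "accepting@@run" and $t^\omega\in\LangS(\A)$.

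For the direct direction, fix an accepting run $(q_j,c_j,q_{j+1})_{j\in\NN}$ over $t^\omega$ with $q_0\in I$.

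\textbf{Step 1: upper bound for a recurring state.} Since $Q$ is finite, some state $q$ occurs as $q_j$ for infinitely many $j$. Since infinitely many $c_j$ are $0$, there are indices $j<j'$ with $q_j=q_{j'}=q$ and some Büchi transition in between. This gives $q \trans{t^m:0} q$ for $m=j'-j\geq 1$. So the tile $s:=t^m$, which lies in $\Sii$ by \Cref{lem:Sii-monoid}, has a horizontal Büchi transition at $q$.

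\textbf{Step 2: keeping the state in $I$.} I would strengthen Step 1 by choosing $q$ to be the $\stateleq$-maximum among the states visited infinitely often. Past some index, every visited state is $\stateleq q$. Moreover, every state visited lies in the downward closure of $I$ only if the run never climbs, which is false in general, so I will not rely on that. Instead I use that $I$ is downwards-closed and $q_0\in I$. If $q\stateleq q_0$ we have $q\in I$. Otherwise $q_0\statel q$, and since $(q_0,t^{j},q_j)$ transitions exist, upward closure lets us replace the start: $(q,\cdot,q_j)\transgeq(q_0,\cdot,q_j)$ fails because the order on sources goes the wrong way. So instead I would take $q$ to be the $\stateleq$-minimum of all states ever visited infinitely often, and compare it with $q_0$. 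Cleaner still is to observe that from $q_0\in I$ we can always drop down, so it suffices to produce a horizontal Büchi transition over $t$ at some state $p\stateleq q_0$. The key step is then the following.

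\textbf{Step 3: from a Büchi cycle over $t^m$ to a Büchi loop over $t$.} This is the main obstacle. From $q\trans{t^m:0}q$, decompose the path as $q=p_0\trans{t}p_1\trans{t}\cdots\trans{t}p_m=q$ with at least one Büchi step. Let $p_i$ be the $\stateleq$-minimum of the $p$'s. The step $p_{i}\trans{t}p_{i+1}$ has source $p_i\stateleq p_{i+1}$. By upward closure of $t$ for $\transleq$, the transition $(p_{i+1},\cdot,p_i)$ type moves are available whenever source increases and target decreases. Concretely, $(p_i,c,p_{i+1})\in t$ with $p_i\stateleq p_i$ and $p_i\stateleq p_{i+1}$ yields $(p_i,0,p_i)\in t$ when $p_i\statel p_{i+1}$, by the target swap (strict decrease of target gives Büchi via the footnote convention $c\leq c'$ only constrains equal endpoints). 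If instead all $p$'s are equal to $p_i$, the cycle is horizontal and contains a Büchi step, so $(p_i,0,p_i)\in t$ directly. Either way we get $p:=p_i$ with $p\trans{t:0}p$.

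The same argument, applied to the transition into a state from a larger one, yields a horizontal Büchi loop at any $p'\statel p$. For membership in $I$, I would apply Step 3 to the prefix path from $q_0$. The minimum $p$ of the whole run up to a closing cycle satisfies $p\stateleq q_0\in I$, so $p\in I$ by downward closure. The delicate points are checking the exact direction conventions in $\transleq$, and the case where the minimum is attained only on the prefix; the latter is handled by the same strict-decrease argument.
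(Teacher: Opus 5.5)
The argument you finally settle on, taking the minimum of the run up to a closing Büchi cycle, is correct. It rests on the same fact as the paper's proof: if $(p,c,p'')\in t$ with $p\sqsubset p''$, then $(p,c,p'')\leqslant(p,0,p)$, so $p\xrightarrow{t:0}p$ by upward closure. The paper uses this fact contrapositively. If no state of $I$ has a horizontal Büchi transition, then every $t$-step from a state of $I$ goes weakly down, so the run stays in $I$ and stabilises on non-Büchi self-loops. You use it directly, at a minimal visited state, which lies below $q_0$ and hence in $I$.

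Stated cleanly, your route needs neither Step 1, Step 2 nor the detour through $t^m$. Let $p$ be the $\sqsubseteq$-minimum of all states of the accepting run. Then $p\sqsubseteq q_0$, so $p\in I$. If some $q_j=p\sqsubset q_{j+1}$, the fact applies. Otherwise, since nothing lies below $p$, the run stays at $p$ once it reaches it, and acceptance forces some step $(p,0,p)$.

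As written, however, the proposal contains statements that are false or incomplete, and these should be removed or repaired.

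First, the claim that a Büchi loop at $p$ yields one at every $p'\sqsubset p$ is false. Take $Q=\{p'\sqsubset p\}$ and $t$ the upward closure of $\{(p,0,p)\}$, that is, $t=\{(p,0,p),(p,1,p),(p,0,p'),(p,1,p')\}$. Then no transition leaves $p'$ at all. This is exactly why membership in $I$ has to come from the run itself, not from pushing a loop downwards.

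Second, in Step 3 the dichotomy ``$p_i\sqsubset p_{i+1}$, or all $p$'s are equal'' misses the case $p_{i+1}=p_i$ on a non-constant cycle. You must choose $i$ with $p_i$ minimal and $p_{i+1}\neq p_i$, which exists unless the cycle is constant.

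Third, the prefix case needs its one-line justification. At the last visit to the minimum before the cycle, the next state is strictly larger, so the same fact applies.

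Finally, your remark in Step 2 gets the order backwards. Upward closure does allow enlarging a source, since $(q_0,c,q_j)\leqslant(q,c',q_j)$ whenever $q_0\sqsubset q$. The real obstruction there was only that this tells you nothing about whether $q\in I$.
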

\begin{proof} 
	If there is some $q\in I$ such that $q \trans{t:0} q$, immediately $t^\omega \in \LangS(\A)$, as shown by the run $((q,0,q))_{i\in \NN}$.

	Conversely, we assume that there is no such "horizontal" "Büchi transition". Notably, there is no transition $\delta$ in $t$ of the shape $p_I \trans{t} q$, with $p_I\in I$ and $q \notin I$. Indeed, if it were the case, by the downward closeness of $I$, $p_I \statel q$, and thus, denoting $\delta' := (p_I, 0, p_I)$, we have $\delta \translt \delta'$. This would imply, by upward closure, that $\delta'\in t$: contradiction. Similarly, we get that for all $(q_I,c,q')\in t$ with $q_I\in I$, $q'$ is such that $q' \stateleq q_I$.
	Let $( (q_i, r_i, q_{i+1}) )_{i \in \NN}$ be a "run@@tile" over $t^\omega$, with $q_0\in I$. By immediate recurrence using the previous result, for all $i\in \NN$, $q_i \in I$ and $q_{i+1} \stateleq q_i$. This means that the sequence $(q_i)_{i\in\NN}$ is ultimately constant, equal to some $q_\infty$. Since $q_\infty\trans{\ti}q_\infty$ is not a "Büchi transition", the run is "rejecting@@tile".
\end{proof}

We also easily observe that the different states in $Q$ directly define the residuals of the corresponding language, and that due to the structure imposed by ${\stateleq}$, these residuals are totally ordered.

\begin{lemma}\label{lem:ordered-residuals}
	Let $\A = (Q, \stateleq, I, \Gamma)$ be an "ordered Büchi automaton". Then the "residuals" of $\LangS(\A)$ are exactly the languages $\LangS(\A_U)$, where $\A_U = (Q, \stateleq, U, \Gamma)$ for $U$ a downwards-closed subset of $Q$. Furthermore, these residuals are totally ordered for the inclusion.
\end{lemma}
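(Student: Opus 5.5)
The plan is to attach to each finite word $u \in \Gamma^*$ the set $U_u := \bigcup_{q \in I} \Delta(q,u)$ of states reachable from $I$ by some partial run over $u$. I will then show that $u^{-1}\LangS(\A) = \LangS(\A_{U_u})$ and that $U_u$ is downwards-closed.

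\textbf{Step 1: the residual is recognised from $U_u$.} If $uw$ is accepted, its accepting run reaches some state of $U_u$ after reading $u$. The suffix of that run is then an accepting run of $\A_{U_u}$ over $w$, since finitely many transitions do not affect the Büchi condition. Conversely, an accepting run over $w$ starting in $U_u$ can be prefixed by a partial run over $u$ that ends in its starting state.

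\textbf{Step 2: $U_u$ is downwards-closed.} For $u=\io$, we have $U_\io = I$, which is downwards-closed by definition. For $u$ non-empty, suppose some partial run over $u$ ends in $q$, with last transition $(p,c,q) \in t$. For any $q' \statel q$, we have $(p,c,q) \transleq (p,0,q')$, because $p \stateleq p$ and $q' \stateleq q$, and the pairs differ so no priority constraint applies. Upward closure of $t$ then gives $(p,0,q') \in t$, so $q' \in U_u$. This is exactly the remark made after \Cref{lem:Sii-monoid}.

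\textbf{Step 3: which downwards-closed sets arise.} This is the main obstacle. It is not true in general that every downwards-closed $U$ equals $U_u$ for some $u$. For example, if $I$ is a proper subset of $Q$, states above $I$ may never be reachable. So I will read the statement as saying that every residual is of the form $\LangS(\A_U)$ with $U$ downwards-closed. With the standing convention that automata are restricted to accessible states, I expect the intended reading is that the reachable sets $U_u$ range over downwards-closed sets. I will state explicitly that each residual equals $\LangS(\A_U)$ for the downwards-closed set $U = U_u$, and note that every $\LangS(\A_U)$ with $U = U_u$ is a residual by Step 1.

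\textbf{Step 4: total order.} Downwards-closed subsets of a totally ordered set form a chain under inclusion: for any two such sets $U$ and $U'$, one contains the other. By \Cref{rem:langs-inclusion}, $U \subseteq U'$ implies $\LangS(\A_U) \subseteq \LangS(\A_{U'})$. Hence the residuals are totally ordered by inclusion.
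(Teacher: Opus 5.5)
Your proof is correct. It differs from the paper's mainly in where the upward closure of tiles is used.

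\textbf{Your route versus the paper's.} The paper takes $U$ to be the downward closure of the highest state $p_u$ reachable after $u$. In both inclusions it then uses upward closure of the first tile $w_0$, to move the start of a run over $w$ up to $p_u$. This way $p_u$ is the only state of $U$ it ever needs to know is reachable.

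You instead take $U_u$ to be exactly the reachable set and prove it is downwards-closed by applying upward closure to the last tile of $u$. After that, $u^{-1}\LangS(\A)=\LangS(\A_{U_u})$ is just the generic residual identity for any non-deterministic automaton.

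The two sets coincide, since by your Step 2 $U_u$ is the downward closure of $p_u$. Your bookkeeping is slightly cleaner in two ways:
\begin{itemize}
\item it also covers the case where no state is reachable after $u$, where $p_u$ is undefined in the paper's argument;
\item for the total order, the paper argues that ``a run from $q$ can mimic a run from $q'\stateleq q$'', whereas you only need that downwards-closed subsets of a chain are nested, together with \Cref{rem:langs-inclusion}.
\end{itemize}

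\textbf{Your caveat about ``exactly''.} This is justified. The paper's proof likewise only shows that every residual is of the form $\LangS(\A_U)$, plus the total order. Your own illustration (unreachable states above $I$) is ruled out by the paper's accessibility convention. However, the converse fails even with that convention.

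For example, take $Q=\{q_0,q_1,q_2\}$ ordered by index and $I=Q$. Let the single tile be the upward closure of $\{(q_0,1,q_0),(q_1,0,q_1),(q_2,1,q_2)\}$. The reachable set is then always $Q$, so the only residual is $\LangS(\A)$, which is nonempty. Yet $\LangS(\A_{\{q_0\}})=\emptyset$. Your restriction to $U=U_u$ is therefore the right reading of the statement.
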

\oii{Lemme trop trivial selon le reviewer C. Et… en vrai il est un peu simple, mais pas forcément pour tout niveau de familiarité ?}
\begin{proof}
	We first prove that each residual corresponds to such a $\A_U$. Let $u^{-1}\LangS(\A)$ be a "residual" of $\LangS(\A)$ with $u$ a finite word. Let $p_u$ be the highest state (for ${\stateleq}$) such that $q_i \trans{u} p_u$ with some $q_i\in I$, and $U$ be the downwards-closed subset of $Q$ with maximal element $p_u$. Let us prove that $u^{-1}\LangS(\A) = \LangS(\A_U)$.
	
	For $w\in u^{-1}\LangS(\A)$, $uw\in \LangS(\A)$, witnessed by an "accepting run@@tile" $\rho_u \rho_w$ where $\rho_u$ is a path over $u$ and $\rho_w$ an accepting infinite path over $w$.  We then observe that for $(q_j, c_j, q_{j+1})$ the first transition of $\rho_w$, by upward closure of $w_0 \in \Sii$, $(p_u, c_j, q_{j+1}) \in w_0$. Therefore $w\in \Lang(\A_{U})$.
	
	Conversely, for $w\in \Lang(\A_{U})$, by definition of $p_u$, there exists a path $\rho_u$ of the shape $q_i \trans{u} p_u$. Therefore, for $\rho_w$ "accepting run@@tile" over $w$ in $\A_{U}$, we similarly observe that $\rho_w$ starts in a vertex $q_w \in U$, therefore $q_w \leq p_u$. We thus obtain, by upward closure of $w_0$, that there exists an "accepting run@@tile" $\rho'_w$ over $w$ starting in $p_u$. Therefore, $\rho_u \rho'_w$ is an "accepting run@@tile" in $\A$. Therefore  $u w \in \LangS(\A)$, and thus $w \in u^{-1}\LangS(\A)$.
	
	We finally observe that these residuals are totally ordered by inclusion as, by the same upward-closure argument, it is always possible for a run starting in $q$ to mimic a run starting in some $q' \stateleq q$.
\end{proof}

	\subsection{Ordered Büchi languages are Eve-positional}
	
	The goal of this section is to prove the following theorem:
	
	\oBiffEveposi
	
	We do so by double inclusion, making use of the previous structural results on "ordered Büchi automata". We first exhibit that the "ordered Büchi" languages are "Eve-positional", as they satisfy the "local preference properties".
	
	\oBEveposi
	\begin{proof}
	Let $\A = (Q, \stateleq, I, \Gamma)$ be an "ordered Büchi automaton", let us prove that $\LangS(\A)$ satisfies the "local preference properties".
	In this proof, for $u\in \Gamma^*$ a finite word, as in the proof of \Cref{lem:ordered-residuals}, we define $p_u$ as the highest state in $Q$ (with respect to ${\stateleq}$) reachable from $I$ after reading $u$. We also denote $U$ as the downwards-closed subset of $Q$ of maximal element $p_u$, and $\A_U := (Q, \stateleq, U, \Gamma)$.
	\begin{itemize}
		\item This is a direct consequence of \Cref{lem:ordered-residuals}, as for two finite words $u,u'$, it suffices to look at which is the greatest residual $u^{-1}\LangS(\A)$ and $u'^{-1}\LangS(\A)$. If it is $u^{-1}\LangS(\A)$, we then observe that for $w'$ such that $u' w'\in \LangS(\A)$, immediately $w' \in u^{-1}\LangS(\A)$ and thus $u w' \in \LangS(\A)$. The other case is treated symmetrically.
		
		\item Let $u\in \Gamma^*$, $v \in \Gamma^+$ and $w\in \Gamma^\omega$ be words such that $uvw\in \LangS(\A)$. If $v^\omega\in \LangS(\A_U)$, we immediately observe that $uv^\omega \in \LangS(\A)$. Else, by \Cref{lem:tile-om-rejecting-bis}, there is no $p_v \in U$ such that $p_v \trans{v:0} p_v$, and notably we do not have $p_u \trans{v:0} p_u$. By \Cref{lem:Sii-monoid}, up to considering $t_v$ the "tile" obtained by "product" of the successive "tiles" of $v$, $t_v \in \Sii$. Therefore, the highest reachable state from $U$ after reading $v$ is at most $p_u$: else, a transition $p_u \trans{t_v:0} p_u$ would exist by upward closure of $t_v$, contradiction. By transitivity, the highest reachable state from $I$ after reading $uv$ is therefore at most $p_u$. Thus, as $uvw \in \LangS(\A)$, $w\in \LangS(\A_U)$, and therefore $uw\in \LangS(\A)$.
		
		\item Let $u\in \Gamma^*$ and $v,v' \in \Gamma^+$ be such that $u(vv')^\omega \in \LangS(\A)$. Then $(vv')^\omega\in \LangS(\A_U)$, and therefore, by \Cref{lem:tile-om-rejecting-bis}, there exists $q\in U$ such that $q \trans{vv' : 0} q$. Let us show that either $q \trans{v:0}q$ or $q \trans{v':0}q$, which respectively imply that $uv^\omega\in \LangS(\A)$ or $uv'^\omega\in\LangS(\A)$. Let $q'$ be the highest reachable state in $Q$ from $q$ after reading $v$ (with some priority $c$). If $q \statel q'$ (third case in \Cref{fig:Eve-posi-PI}), then by upward closure, $(q,c,q') \translt (q,0,q)$ and thus $q \trans{v:0} q$. Else, if $q' \statel q$ (fourth case in \Cref{fig:Eve-posi-PI}), necessarily $q' \trans{v'} q$, and thus by upwards-closure, $q \trans{v':0} q$. In the remaining case $q = q'$, and thus at least one of $q \trans{v:0}q$ or $q \trans{v':0}q$ holds (two first cases in \Cref{fig:Eve-posi-PI}), which concludes the proof.

		\end{itemize}
	\end{proof}

	\begin{corollary}\label{cor:oB-Eveposi}
		All "ordered Büchi languages" are "Eve-positional".
	\end{corollary}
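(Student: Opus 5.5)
The corollary transfers \Cref{lem:oBuchi-are-evePosi} from the tile alphabet $\Gamma$ to an arbitrary alphabet $\Pi$ via the renaming morphism. Let $L\subseteq\Pi^\omega$ be an "ordered Büchi language", witnessed by an "ordered Büchi automaton" $\A=(Q,\stateleq,I,\Gamma)$ and a map $f:\Pi\to\Gamma$ extended to words. Then $w\in L$ if and only if $f(w)\in\LangS(\A)$, so $L=f^{-1}(\LangS(\A))$. By \Cref{lem:oBuchi-are-evePosi}, $\LangS(\A)$ is "Eve-positional". Since $L$ is $\omega$-regular, it suffices by \Cref{thm:local-posi} to show that $L$ satisfies the three "local preference properties".

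Each property is a closure statement of the form ``if some words built from $u,u',v,v',w,w'$ belong to the language, then one of a few other words built from the same factors belongs to it''. Every word involved is a concatenation (possibly infinitely iterated) of these factors, and $f$ is a letter-to-letter morphism, so it commutes with concatenation and with $\omega$-powers: $f(uvw)=f(u)f(v)f(w)$, $f(uv^\omega)=f(u)f(v)^\omega$, $f(u(vv')^\omega)=f(u)(f(v)f(v'))^\omega$, and so on. Moreover, $f$ preserves non-emptiness of $v$ and $v'$, and $f(u)$ is possibly empty exactly when $u$ is.

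For each property, we assume its hypothesis for $L$, apply $f$ to get the hypothesis for $\LangS(\A)$ with the factors $f(u),f(v),\dots$, invoke the corresponding property of $\LangS(\A)$, and pull the conclusion back through $f^{-1}$. For instance, for the third property: $u(vv')^\omega\in L$ gives $f(u)(f(v)f(v'))^\omega\in\LangS(\A)$, hence $f(u)f(v)^\omega\in\LangS(\A)$ or $f(u)f(v')^\omega\in\LangS(\A)$, that is, $uv^\omega\in L$ or $uv'^\omega\in L$.

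There is no real obstacle. The only point needing care is that \Cref{thm:local-posi} quantifies over all finite and infinite words of $\Gamma$, and the images $f(\cdot)$ are such words.
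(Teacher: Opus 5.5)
Your proposal is correct and matches the paper's proof: both use that $\LangS(\A)$ satisfies the local preference properties (via \Cref{lem:oBuchi-are-evePosi} and \Cref{thm:local-posi}). Both then push each hypothesis through the letter-to-letter morphism $f$, apply the corresponding property in $\LangS(\A)$, and pull the conclusion back to $L$, so nothing is missing.
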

	\begin{proof}
		Let $L$ be such a language over an alphabet $\Pi$, associated with an "ordered Büchi automaton" $\A = (Q, \stateleq, I, \Gamma)$ with a morphism $f: \Pi \to \Gamma$. Then, as $\LangS(\A)$ is "Eve-positional", it notably satisfies the "local preference properties". Therefore, looking at $L$ directly, we observe that these properties also hold:
		\begin{asparaitem}
			\item For all $u,u'\in\Pi^*$, and $w,w'\in\Pi^\omega$,
			if $uw\in L$ and $u'w'\in L$, then similarly $f(uw)\in \LangS(\A)$ and $f(u'w')\in \LangS(\A)$. Therefore, as $\LangS(\A)$ is "Eve-positional", either $f(uw')\in \LangS(\A)$ or $f(u'w)\in \LangS(\A)$. Hence, by definition of $f$, either $uw'\in L$ 	or $u'w\in L$.
			\item For all $u\in\Pi^*$, $v\in \Pi^+$, and $w\in\Pi^\omega$, if $uvw\in L$, similarly $f(uvw) = f(u)f(v)f(w)\in \LangS(\A)$, and as $\LangS(\A)$ is "Eve-positional", by the second "local preference property", either $f(u)f(v)^\omega \in \LangS(\A)$, or $f(u)f(w)\in \LangS(\A)$. Therefore, by definition of $f$, either $uv^\omega \in L$ or $uw\in L$.
			\item For all $u\in\Pi^*$, $v,v'\in \Pi^+$, if $u(vv')^\omega \in L$, similarly $f(u(vv')^\omega) = f(u)(f(v)f(v'))^\omega\in \LangS(\A)$, 
		and as $\LangS(\A)$ is "Eve-positional", by the third "local preference property", either $f(u)f(v)^\omega \in \LangS(\A)$, or $f(u)f(v')^\omega\in \LangS(\A)$. 
		Therefore, by definition of $f$, either $uv^\omega \in L$ or $uv'^\omega\in L$.
	\end{asparaitem}
	Therefore $L$ satisfies the "local preference properties", and is thus "Eve-positional".
	\end{proof}
	
	We now show the converse implication, that is, all "Eve-positional" "$\omega$-regular" languages are "ordered Büchi languages". To do so, we first show that the words admitting an accepting run in an "$\eps$-complete" automaton can be chosen of a specific shape, before building an "ordered Büchi automaton" mirroring the behaviour of such an automaton.
	
	\begin{lemma}\label{lem:eps-complete-intertwined}
		Let $A$ be an "$\eps$-complete automaton". For every word $w = (w_i)_{i\in \NN}\in \Lang(A)$, there exists an accepting run in $A$ over the word $(\eps w_i \eps)_{i\in \NN}$.
	\end{lemma}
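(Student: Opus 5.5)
We take an accepting run $\rho$ of $A$ over some $w'\in(\Pi\uplus\{\eps\})^\omega$ whose $\eps$-erasure is $w$, and reshape it into a run over $(\eps w_i\eps)_{i\in\NN}$. Write $\rho$ as a sequence of blocks: for each $i$, a (possibly empty) finite sequence of $\eps$-transitions, then the transition over $w_i$. If the tail of $w'$ is only $\eps$'s, then $w$ would be finite; since $w\in\Pi^\omega$, each $w_i$ occurs, so this decomposition covers all of $\rho$. The plan is to collapse each maximal block of consecutive $\eps$-transitions $q\trans{\eps:c_1}\cdots\trans{\eps:c_m}q'$ into a single $\eps$-transition $q\trans{\eps:c}q'$ with $c=\min(c_1,\dots,c_m)$. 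If the block is empty, we insert an $\eps$-transition $q\trans{\eps:2d+1}q$ with maximal priority. We then split the result as $\eps w_0\eps\,\eps w_1\eps\cdots$ by letting the trailing $\eps$ of each factor be a loop $q\trans{\eps:2d+1}q$.

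\textbf{Key step: composition of $\eps$-transitions.} We show that if $q\trans{\eps:c_1}r\trans{\eps:c_2}s$, then $q\trans{\eps:\min(c_1,c_2)}s$.

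First, reflexivity. Each $\trans{\eps:2j+1}$ is a total preorder, hence reflexive. So $q\trans{\eps:2d+1}q$ holds, which is exactly what the inserted loops need.

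Next, order the relations by inclusion. Since each odd relation refines the previous one, we get $\trans{\eps:2j+3}\subseteq\trans{\eps:2j+1}$. For the even ones, $q\trans{\eps:2j}q'$ iff not $q'\trans{\eps:2j+1}q$. By totality of $\trans{\eps:2j+1}$, this gives $q\trans{\eps:2j+1}q'$. So the strict part is contained in the preorder, i.e.\ $\trans{\eps:2j}\subseteq\trans{\eps:2j+1}$. Moreover, $\trans{\eps:2j+2}$ is the strict part of the finer preorder $\trans{\eps:2j+3}$. Since $\trans{\eps:2j+3}\subseteq\trans{\eps:2j+1}$, the strict part of the coarser preorder is contained in that of the finer, giving $\trans{\eps:2j}\subseteq\trans{\eps:2j+2}$. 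Altogether, a smaller priority gives a smaller relation: $\trans{\eps:c}\subseteq\trans{\eps:c'}$ for $c\le c'$. Therefore both transitions can be weakened to priority $c=\min(c_1,c_2)$, i.e.\ $q\trans{\eps:c}r\trans{\eps:c}s$.

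It remains to see that $\trans{\eps:c}$ is transitive. For odd $c$ this is transitivity of a preorder. For even $c$ it is transitivity of the strict part of a total preorder, which holds. By induction, any finite $\eps$-path collapses to a single $\eps$-transition carrying its minimal priority.

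\textbf{Acceptance.} The new run sees the same letter transitions as $\rho$. The collapsed $\eps$-steps carry the minimum priority of their block, and the inserted loops carry the maximal, odd priority $2d+1$. Infinitely many blocks of $\rho$ contain the $\liminf$ priority $p$ (or a letter transition does), and eventually all priorities in $\rho$ are $\ge p$. Hence the new priority sequence also has $\liminf$ exactly $p$, which is even. The new run starts in the same initial state, so it is an accepting run over $(\eps w_i\eps)_{i\in\NN}$.

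The main obstacle is the bookkeeping in the key step: checking the inclusion chain between odd and even relations, and the transitivity of the strict relations. It relies only on the totality and refinement conditions of the definition of $\eps$-complete automata.
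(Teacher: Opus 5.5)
The skeleton of your proof is the paper's: collapse each maximal block of $\eps$-transitions into one $\eps$-transition carrying the block's minimal priority, pad with $\trans{\eps:2d+1}$ self-loops, and check that the $\liminf$ is unchanged. Your padding and acceptance bookkeeping are fine. The gap is in the key composition step.

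The three inclusions you derive are correct: $\trans{\eps:2j+3}\subseteq\trans{\eps:2j+1}$, $\trans{\eps:2j}\subseteq\trans{\eps:2j+1}$ and $\trans{\eps:2j}\subseteq\trans{\eps:2j+2}$. Your summary ``$\trans{\eps:c}\subseteq\trans{\eps:c'}$ for $c\le c'$'' is false, however. For instance $\trans{\eps:1}\not\subseteq\trans{\eps:2}$, because $\trans{\eps:1}$ is reflexive while $\trans{\eps:2}$, the strict part of $\trans{\eps:3}$, is irreflexive. For odd priorities it also points opposite to your own inclusion $\trans{\eps:2j+3}\subseteq\trans{\eps:2j+1}$. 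What your inclusions actually yield is the chain
\[
\trans{\eps:0}\subseteq\trans{\eps:2}\subseteq\cdots\subseteq\trans{\eps:2d}\subseteq\trans{\eps:2d+1}\subseteq\trans{\eps:2d-1}\subseteq\cdots\subseteq\trans{\eps:1},
\]
which is monotonicity along the preference order of priorities, not along $\le$.

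Consequently the next step, weakening both transitions to priority $m=\min(c_1,c_2)$, fails whenever $m$ is even. Take $q\trans{\eps:0}r\trans{\eps:1}r$: you cannot turn the reflexive step $r\trans{\eps:1}r$ into $r\trans{\eps:0}r$, since $\trans{\eps:0}$ is irreflexive. The step does not even follow from the inclusion you stated, which would make it a strengthening rather than a weakening. The composite $q\trans{\eps:0}r$ is still true, but for a different reason.

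The repair is the case split on the parity of $m$ that the paper performs.

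\emph{$m$ odd.} The other priority $c'>m$ sits earlier in the chain than $m$, so its step lies in $\trans{\eps:m}$. Your weakening followed by transitivity of the preorder $\trans{\eps:m}$ then works.

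\emph{$m$ even.} The chain only gives that the other step lies in $\trans{\eps:m+1}$. You then need the fact that a strict step of the total preorder $\trans{\eps:m+1}$, composed on either side with a non-strict step, is strict. Suppose $q\trans{\eps:m}r\trans{\eps:m+1}s$ and $s\trans{\eps:m+1}q$. Then $r\trans{\eps:m+1}q$ by transitivity, contradicting $q\trans{\eps:m}r$. Symmetrically, suppose $q\trans{\eps:m+1}r\trans{\eps:m}s$ and $s\trans{\eps:m+1}q$. Then $s\trans{\eps:m+1}r$, contradicting $r\trans{\eps:m}s$.

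With this the collapsing step is justified, and the rest of your argument goes through unchanged.
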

\begin{proof}
	We denote by $J = \inter{2d+2}$ the "index" of $A$.
	As $A$ is an "automaton with $\eps$-transitions", there exists some word $w'\in \Pieps^\omega$, with an accepting run in $A$, such that $w'$ with its $\eps$'s removed is $w$.
	We first observe that the $\eps$-transitions are closed by letter concatenation: for all $p,q,r \in Q_A$ and $c,c' \in J$, if $p \trans{\eps:c} q \trans{\eps:c'}r$, then $p\trans{\eps : \min(c,c')} r$. If $c= c'$, the result is immediate by transitivity of $\trans{\eps:c}$. Else, if $c < c'$, we reason by case disjunction on the parity of $c$.
	If $c$ is even, we cannot have that $r \trans{\eps:c+1} p$: else, by transitivity of $\trans{\eps:c}$ and due to the fact that $\trans{\eps:c+1}$ is refined either by $\trans{\eps:c'}$ or $\trans{\eps:c'+1}$ (depending on which one is odd), we would get that $q \trans{\eps:c+1} r \trans{\eps:c} q$, and thus that $q \trans{\eps:c} q$ (as $\trans{\eps:c}$ is the strict variant of $\trans{\eps:c+1}$): contradiction with the antisymmetry of $\trans{\eps:c}$. Therefore, by definition of $\trans{\eps:c}$, $p \trans{\eps:c} r$.
	If $c$ is odd, then as $\trans{\eps:c}$ is refined either by $\trans{\eps:c'}$ or $\trans{\eps:c'+1}$ (depending on which one is odd, denoted $c''$), as in either case $q \trans{\eps:c''} r$, then $q \trans{\eps:c} r$, which concludes by transitivity of $\trans{\eps:c}$.
	The case $c' < c$ is treated symmetrically.
	
	Therefore, if in $w'$ there exist multiple consecutive $\eps$, we can replace them by a single occurrence of $\eps$ without changing the existence of an "accepting run@@aut".

	Let us now show that we can insert an $\eps$ in $w'$ before all letters whilst maintaining this property. It suffices to observe that any such run is accepting with some even priority $2k \leq 2d$, yet it is always possible to insert an $\eps$ self-loop of priority $2d+1$ between each step of the run. Doing so does not affect the minimal infinitely recurring priority. Therefore, the word obtained by inserting in $w'$ an $\eps$ before all the letters still admits an accepting run.
	
	Combining these results, we get that $A$ admits an accepting run over $(\eps w_i \eps)_{i\in \NN}$.
\end{proof}

Then, for $L$ an "Eve-positional" "$\omega$-regular" language, according to \Cref{theorem:co}, any "Büchi automaton" recognising $L$ is "$\eps$-completable". It therefore suffices to design an "ordered Büchi automaton" that mimics the behaviour of some such "automaton" (which exists, as $L$ is "$\omega$-regular").

\BuchitooB
\begin{proof}
	In order to prove that $L$ is "ordered Büchi", we need to build an "ordered Büchi automaton" $\A = (Q,\stateleq, I, \Gamma)$, along with a morphism $f:\Pi \to \Gamma$, such that a word $w\in \Pi^\omega$ is in $L$ if and only if $f(w) \in \LangS(\A)$.
	
	Let $A$ be an "$\eps$-completable" "Büchi automaton" recognising $L$. We consider its "$\eps$-completion" $A'$, which also recognises $L$, and the total preorder $\trans{\eps:1}$ that this implies over $Q_A$. We build $Q$ as the quotient of $Q_A$ over the equivalence classes defined by $\trans{\eps:1}$ -- which thus defines a total order over $Q$. By "$\eps$-completeness", $I_A$ can be assumed downwards-closed for the pre-order $\trans{\eps:1}$, and we thus define $I$ as the downward-closed subset of $Q$ such that all its classes have one element in $I_A$ (and thus have all their elements in $I_A$). Let us now build the desired morphism $f : \Pi \to \Sii$.
	To do so, we first build $f_0:\Pieps\to \Si$, before defining $f$ that maps any letter $a$ to the upward-closure of $f_0(a)$ for ${\transleq}$. That is, $f_0(a)$ is a generator of $f(a)$.
	
	For $a\in \Pieps$, the "tile" $f_0(a)$ is defined as follows: for all $p,s \in Q$ equivalence classes, $p \trans{f_0(a):1} s$ if there exists $p_A\in p, s_A \in s$ such that $p_A \trans{a:1} s_A$, and similarly for $p \trans{f_0(a):0} s$.
	We first remark that this operation on the letter $\eps$ produces exactly the tile $\ti$, as $\eps$-transitions of priority 1 describe the total order ${\stateleq}$, and $\eps$-transitions of priority 0 describe its strict variant. 
	We then observe that $f$ mapping each $a$ to the upward-closure of $f_0(a)$ is indeed a function from $\Pi$ to $\Sii$, which we extend into a morphism over the free word monoid in the standard fashion.
	We can now prove that for all $w\in \Pi^\omega$, $w\in L$ if and only if $f(w)\in \LangS(\A)$.
	
	Let $w = (w_i)_{i\in \NN}\in \Pi^\omega$. If $w\in L$, by \Cref{lem:eps-complete-intertwined}, the word $w' := (\eps w_i \eps)_{i\in \NN}$ admits an "accepting run@@aut" consisting, for each $i$, of a left transition $\delta^{\eps}_{l,i}$, a letter-transition $\delta_i$, and a right transition $\delta^{\eps}_{r,i}$, all in $\Delta_{A'}$. We denote the corresponding vertex sequence as $(q_{i,0},q_{i,1},q_{i,2})_{i\in \NN} \in (Q^3)^\omega$ where $q_{0,0}$ belongs to $I_A$. From there, by definition of $f_0$, we deduce the existence of a "Büchi path@accepting path" along the word $(f_0(\eps)f_0(w_i)f_0(\eps))_{i\in \NN} = (\ti f_0(w_i) \ti)_{i\in \NN}$.
	We then observe that for all $i\in \NN$ and corresponding $c'_i \in \inter{2}$ such that $q_{i,0} \trans{\ti} q_{i,1} \trans{f_0(w_i):c'_i} q_{i,2} \trans{\ti} q_{i+1,0}$, dominated by some priority $c_i$, we have that $(q_{i,1}, c'_i, q_{i,2}) \transleq (q_{i,0}, c_i, q_{i+1,0})$. Therefore, by definition of $f$, $\delta^f_i := (q_{i,0}, c_i, q_{i+1,0})\in f(w_i)$, and we thus easily build the corresponding accepting run $(\delta^f_i)_{i\in \NN}$ over $f(w)$, proving that $f(w)\in \LangS(\A)$.
	
	Conversely, if $f(w)\in \LangS(\A)$, there exists an "accepting run@@tile" $((q_i, c_i, q_{i+1}))_{i\in \NN}$ over $f(w)$. Let us show that there exists an "accepting run@@aut" in $A'$ over $w' := (\eps w_i)_{i\in \NN}$.
	By definition of $f$, for all $i$, there exists $\delta_i = (q_{i,0}, c_{i,0}, q_{i,1}) \in f_0(w_i)$ such that $\delta_i \transleq (q_i, c_i, q_{i+1})$ -- that is, $q_{i,0}\stateleq q_i$, and $q_{i+1} \stateleq q_{i,1}$. We thus observe that $q_{0,0} \stateleq q_0$, and for all $i$, $q_{i+1, 0} \stateleq q_{i,1}$.
	Therefore, by considering for each $i$ the states $p_i \in q_{i,0}$, $p'_i \in q_{i,1}$ as the witnesses of the existence of the transition $\delta_i$, we observe that for all $i$, $p_{i,0} \trans{\eps} p'_{i+1,0}$. We thus observe that $((p'_{i-1}, c_{i, \eps}, p_i) (p_i,c_{i,0}, p'_i))_{i\in \NN}$, with $p'_{-1} \in I_A$ and all the $c_{i,\eps}$ chosen minimal for the corresponding vertex pair, is a "run@@aut" in $A'$ over $(\eps w_i)_{i\in \NN}$. We finally observe that it is accepting by case disjunction: either infinitely often $c_{i,0} = 0$, in which case the run is immediately accepting, or, as $((q_i, c_i, q_{i+1}))_{i\in \NN}$ is accepting, infinitely often $c_i = 0$, and thus infinitely often $c_i < c_{i,0}$. At these indices, necessarily $q_{i,0} \statel q_i$ or $q_{i+1} \statel q_{i,1}$. Therefore, either the $\eps$-transition from $p'_{i-1}$ to $p_i$ is over two different classes, or it is the case for the one from $p'_i$ to $p_{i+1}$. That is, respectively $c_{i, \eps}$ or $c_{i+1, \eps}$ is equal to $0$, by "$\eps$-completeness" (and as $c_{i, \eps}$ is always chosen minimal). Therefore $w\in \Lang(A') = L$.
\end{proof}

It is even possible to obtain a stronger version of this result, with a transformation from an "$\eps$-complete" "non-deterministic parity automaton" to an "ordered Büchi automaton" recognising the same language (once again, up to a renaming of alphabet). 
This transformation incurs no additional state blow-up beyond what is already needed when transforming "non-deterministic parity automata" into "non-deterministic Büchi automata". This result, however, comes with a more elaborate proof, detailed in the appendix:

\NPAtooB

We thus obtain the missing implication in \Cref{theorem:co}: for $L$ an "$\omega$-regular" language recognised by an "$\eps$-completable" "non-deterministic automaton", by \Cref{thm:NPA-to-oB}, $L$ is "ordered Büchi". Therefore, by \Cref{cor:oB-Eveposi}, $L$ is indeed "Eve-positional".

\epscompl

	\section{Determinization of Eve-positional languages}\label{sec:det}
	

In this section, we present a transformation from an "ordered Büchi automaton" to a "deterministic parity automaton" recognising the same language. This transformation follows ideas from the last appearance record technique \cite{LAR} and has at most a factorial blow-up. We observe that it produces an automaton of minimal size, provided that the tile alphabet contains some expressive enough subset.

\subsection{Determinization from ordered Büchi to parity}

\AP For a tile $t\in \Sii$ and a subset $X \subseteq Q$, we denote $t(X) := \{q \mid \exists p \in X, p \trans{t} q\}$. That is, $t(X)$ represents all the states reachable from $X$ after reading $t$. Note that by upward closure of $t$ for ${\transleq}$, $t(X)$ is a downwards closed set.
We additionally define, for a tile $t\in \Sii$, $\intro*\ska : Q \to Q \uplus \bot$ as the function such that $\ska(q) = \max_{\stateleq}\{q' \mid q \trans{t} q'\}$, with $\ska(q) = \bot$ if no such $q'$ exists (by convention $\bot$ is inferior to all the elements in $Q$). We immediately observe that $\ska$ is monotone with respect to ${\stateleq}$, once again by upward closure of $t$.

\subsubsection{Definition of the deterministic automaton}

\AP
Let $\A = (Q,\stateleq, I, \Gamma)$ be an "ordered Büchi automaton". The first goal of this section is to describe a "deterministic parity automaton"
\phantomintro{\Ai}
\begin{align*}
	\reintro*\Ai = ( S,\Gamma,s_{I},\De )
\end{align*}
that "recognises@@aut" the language $\LangS(\A)$ (see \Cref{lem:Ai-complete} and \Cref{lem:Ai-correct}). Its behaviour is quite complex, but intuitively it aims to keep track, for each reachable state $q$ in $Q$, of the best feasible run ending in $q$. We refer to the \Cref{ex:det} for a better understanding.

The set of ""records"" $S$ is the set of tuples $(s_0,\dots,s_k)$ such that 
\begin{asparaenum}
	\item $\{s_0,\dots,s_k\}$ is a downward closed subset of~$Q$ for ${\stateleq}$, 
	\item the $s_i$'s are pairwise distinct, and
	\item $s_0$ is maximal for ${\stateleq}$ among the $s_i$'s.
\end{asparaenum}
Equivalently, a record is an injective map $s:\inter{k} \to Q$ (for some $k \leq n$) with a downward closed image.

Intuitively, the current "record" remembers what the different accessible vertices in $Q$ are and the relative age of the corresponding subpaths, with older paths occupying smaller indices. These paths will then be extended along $\ska$, for $t$ the tile read. The exact rules for the fusion of paths and creation of new paths are detailed below.

	
The "initial state" $s_I$ is the "record" whose image contains exactly $I$, such that $s(0) = \max_{\stateleq}(I)$ and the following positions contain the states of~$I$ in decreasing order over ${\stateleq}$.

\AP The transition function $\intro*\De : S \times \Sii \to [-1,2n-1] \times S$ is defined for all $s = (s(0),\dots,s(k)) \in S$ and $t \in \Sii$ to be
\[
\De(s,t) = (c,s')
\]
in which $c$ and $s'$ are defined as follows. For all $i\in\dom(s)$, we define
\begin{align*}
	\intro*\best(i)&:=\ska(s(i)),&
	\intro*\leader(i)&:=\min_{\leqslant}\{j\in\dom(s)\mid \best(i)=\best(j)\}\\
	P&:=\img(\leader)\subseteq \inter{k},&
	R&:=t(\img(s)) \setminus \best(P).
\end{align*}
For $i$ an index in $\dom(s)$, $\leader(i)$ thus corresponds to the smallest index of $\dom(s)$ that has same image as $i$ by $\ska(s) = \best$. As the $s(i)$'s correspond to the end vertices of different concurrent paths, $\leader(i)$ thus corresponds to the oldest path that has for image $\best(i)$. We denote $P$ the set of leaders, as they will correspond to ``preserved'' paths. The indices in $R$, conversely, correspond to ``reset'' paths in $s'$.
\AP We say that an index $i\in \dom(s)$ is ""forgotten"" if $i\neq \leader(i)$ (that is, $i \in \inter{k} \setminus P$).

Let~$b$ be the unique increasing map from $\inter{\,|P|\,}$ to $P$, and let $f$ be the unique decreasing map from $\inter{\,|R|\,}$ to $R$ \footnote{$f$ could be any bijection from $[|R|]$ to $R$, but choosing consistently an order ensures a smaller resulting automaton}. We define the "record" $s'$ to have domain $\inter{\,|P|+|R|\,} = \inter{\,|t(\img(s))|\,}$, image $ t(\img(s))$ and to be such that
\begin{align*}
	\text{for all~}i\in \inter{\,|P|+|R|\,}\text{,}\quad
	s'(i):=\begin{cases}
		\best(b(i))&\text{for}~0\leqslant i < |P|\\
		f(i-|P|)&\text{otherwise.}
	\end{cases}
\end{align*}
Note that if $P \neq \emptyset$, $s'(0) = \best(b(0)) = \best(0) = \ska(s(0)) = \max_{\stateleq} t(\img(s))$, and this state is well-defined (if $P = \emptyset$, $s'$ is the empty "record").
Intuitively, all the paths from indices in $P$ are preserved, in their initial order. The indices after $|P|$ then correspond to brand new paths (in $R$), assigned in decreasing order.

\AP The "priority" of this transition is defined as $c:=\min(2\green,2\red-1)$ where\phantomintro{\green}\phantomintro{\red}
\begin{align*}
	\reintro*\green&:=\min_\leqslant\{i\mid s(i)\trans{t:0}s'(i)\}&
	\text{and}\reintro*\quad\red&:=\min_\leqslant(\dom(s)\setminus P)\ ,
\end{align*}
with, in each case, the convention that the minimum over the empty set is $|t(\img(s))|$.
That is, intuitively, we want to accept with priority $2i$ if eventually the path at index $i$ is never reset, and witnesses infinitely many Büchi transitions. Conversely, if we are rejecting with priority $2i+1$, then all the paths at indices $<i$ keep going infinitely but eventually no longer see any Büchi transition, and all the other paths are eventually "forgotten".

\begin{example}\label{ex:det}
	Let $Q=\{q_0, q_1, q_2, q_3, q_4, q_5\}$, with ${\stateleq}$ corresponding to the standard order over the states' index. We observe here a transition in $\Ai$ over some "tile" $t\in \Sii$ whose skeleton is described in \Cref{fig:transition-leader}.
	More precisely, we are looking at a transition from a given "record", here $(q_5, q_3, q_4, q_0, q_2, q_1)$, over the tile $t$. The first step is then to establish, for each index, its leader. Here, the set $P$ is thus $\{0,1,3\}$, as shown in \Cref{fig:transition-leader}.
	\begin{figure}[!htb]
		\begin{minipage}[b]{0.3\textwidth}
			\includegraphics[height=5cm]{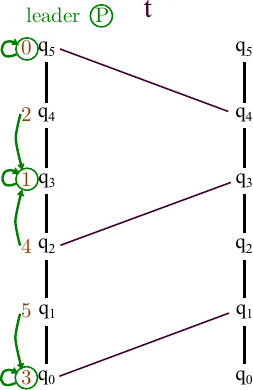}
			\caption{We compute the function $\leader$ from the given "record" $s$ in brown. This defines the set $P$.}
			\label{fig:transition-leader}
		\end{minipage}
		\hfill
		\begin{minipage}[b]{0.3\textwidth}
			\includegraphics[height=5cm]{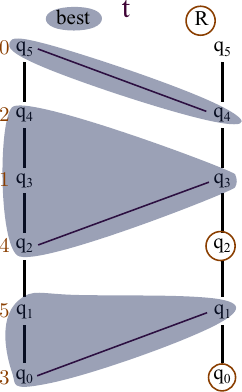}
			\caption{We regroup the indices that have the same image by $\best$, and thus by $\ska$. This defines the set $R$.}
			\label{fig:transition-best}
		\end{minipage}
		\hfill
		\begin{minipage}[b]{0.3\textwidth}
			\includegraphics[height=5cm]{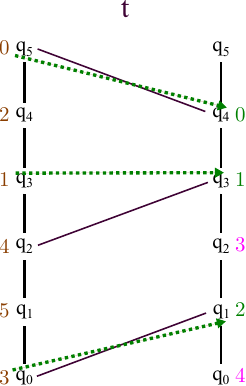}
			\caption{(c) We now compute $s'$, with the help of the functions $b$ and $f$. \,}
			\label{fig:transition-b}
		\end{minipage}
	\end{figure}
	We then consider the image $\best(P) = \{q_1, q_3, q_4\}$, which allows us to define the set $R = \{q_0, q_2\}$ as its complement in $t(\img(s))$, as shown in \Cref{fig:transition-best}.
	We finally look at the unique increasing function $b$ from $\inter{\,|P|\,} = \inter{3}$ to $\{0,1,3\}$. We can thus describe $s'(\inter{2})$ as the different $\ska(b(i))$, as shown in \Cref{fig:transition-b} with the green arrows. The remaining positions in the "record" are assigned the remaining states in the downwards closed subset, in decreasing order, as visible in magenta in \Cref{fig:transition-b}.
\end{example}

Whenever two paths would fuse (due to having the same image by $\ska$), we keep the oldest one -- the one of smallest index in $s$. Therefore, all the indices $i\notin P$ are "forgotten", as they describe paths which are not as relevant as the ones they fuse into, and the states in $R$ are now considered as the starting points of brand new paths.\\
The priority choice mirrors this intuition: it outputs an odd priority
corresponding to the oldest path forgotten, or an even priority
corresponding to the oldest path witnessing a Büchi transition. Thus, an
even priority $2i$ ultimately dominating implies that the index $i$
is never "forgotten" past some point (nor all the smaller indices), and the infinite path it describes witnesses an infinity of Büchi transitions.

Note that this automaton can have at most $1 + \sum_{i=0}^{n-1} i!$ states : $1$ for the empty record, then, for each non-empty record $s$, $s(0)$ is fixed and there are thus $(|\img(s)|-1)!$ possible such records. In many cases, however, they are not all accessible. As we consider our automata as trimmed to their accessible states by default, we then consider $\Ai$ to be smaller. This idea will be detailed in the \Cref{sec:optim}
	
\subsubsection{Correction of the recognised language}
	
	Let us now prove that $\Ai$ recognises $\LangS(\A)$. By determinism, for $w\in \Sii^*$ and $s\in S$, we denote as $\De(s,w)$ the "record" $s'$ reached after reading $w$ from $s$.
	
	\begin{lemma}\label{lemma:reachable-states}
		Let $w \in \Sii^*$ be a finite word and $s := \De(s_I,w)$ be the state reached in $\Ai$ after reading $w$. Then $\img(s)$ is the set of "states" of $Q$ reachable by "partial runs over" $w$ starting in $I$.
	\end{lemma}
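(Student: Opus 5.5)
The plan is induction on the length of $w$, tracking the invariant that $\img(\De(s_I,w))$ equals $\Delta(I,w)$, the set of states reachable from $I$ by partial runs over $w$. In tile notation this is $\Delta(I,w) = t_w(I)$, and it satisfies $\Delta(I,wt) = t(\Delta(I,w))$.

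For the base case $w=\io$, the record $s_I$ has image exactly $I$ by definition. The set of states reachable by partial runs over the empty word is also $I$, so the claim holds.

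For the inductive step, write $w = w't$ with $t\in\Sii$, and let $s := \De(s_I,w')$. By the induction hypothesis, $\img(s) = \Delta(I,w')$. The key step is to read off from the definition of $\De$ that the successor record $s'$ has image exactly $t(\img(s))$. Indeed, $\img(s')$ is $\best(P)\cup R$, and $R$ is defined as $t(\img(s))\setminus\best(P)$.

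It remains to check that $\best(P)\subseteq t(\img(s))$. Each $\best(i)=\ska(s(i))$ is a successor of $s(i)\in\img(s)$ along $t$, unless it equals $\bot$. The main obstacle is precisely this case where some $\ska(s(i))=\bot$, meaning $s(i)$ has no successor. I would handle it as follows.

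\begin{itemize}
\item First use upward closure together with the monotonicity of $\ska$. If some state in $\img(s)$ has a successor along $t$, then every state above it does too. Hence the indices with $\best=\bot$ are those whose state lies below some threshold.
\item Since the paper tacitly treats $\best$ as landing in $Q$, I would adopt the reading that leaders mapping to $\bot$ are discarded, i.e.\ $\best(P)$ means $\best(P)\cap Q$. The domain size $|P|+|R|=|t(\img(s))|$ stated in the definition forces this reading.
\item Under this reading, $\best(P)\subseteq t(\img(s))$, and $b$ together with $f$ enumerate $\best(P)$ and $R$ injectively. The injectivity on $P$ holds because leaders have pairwise distinct $\best$ values.
\end{itemize}

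This gives $\img(s')=t(\img(s))$. Combining with the induction hypothesis, $t(\img(s)) = t(\Delta(I,w')) = \Delta(I,w)$, which concludes the induction.

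A secondary remark should accompany the argument: $t(X)$ is downward closed whenever $X$ is, by upward closure of $t$. This confirms that $s'$ is a valid record, although validity is not needed for the image statement itself.
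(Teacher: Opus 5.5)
Your proof is correct and follows the paper's own argument: induction on $|w|$, where the inductive step rests on the identity $\img(\De(s,t)) = t(\img(s))$, which the paper simply reads off the definition of $\De$. Your extra handling of indices with $\best(i)=\bot$, which you discard from $P$, is a sensible repair of a small glitch in the paper's definition (as written, such a leader would make $|P|+|R|$ exceed $|t(\img(s))|$), but it does not change the structure of the argument.
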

	\begin{proof}
		The proof proceeds by induction on the length of $w$.
		We have that $\img(\De(s_I,\io))=\img(s_I) = I$ and the result holds.
		Assume now that the result holds for  $w\in \Sii^*$, and consider the word $w t$ for some $t\in\Sii$.
		As made explicit in the definition of $\De$, for all $s\in S$ and all~$t\in\Sii$, $\img(\De(s,t))=t(\img(s))$. 
		Let~$X$ be the set of states reachable from an initial state by "partial runs" over~$w$.
		We have by induction hypothesis that $X=\img(\De(s_I,w))$.
		Hence the set of states reachable after reading
		$w t$ is $t(X)=t(\img(\De(s_I,w)))=\img(\De(\De(s_I,w),t))=\img(\De(s_I,wt))$.
	\end{proof}
	
	We can now prove the double inclusion stating that $\Ai$ recognises exactly the accepting tile sequences.
	
	\begin{lemma}\label{lem:Ai-correct}
		For $\A = (Q,\stateleq, I, \Gamma)$ an "ordered Büchi automaton", and $\Ai$ obtained by the previous determinization procedure applied on $\A$,
		$L(\Ai)\subseteq \LangS(\A)$.
	\end{lemma}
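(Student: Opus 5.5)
Given a word $w=(t_j)_{j\in\NN}\in\Gamma^\omega$ accepted by $\Ai$, we build an accepting run of $\A$ over $w$ by tracking the path attached to one index of the records. Let $(s_j)_{j\in\NN}$ be the records visited, with $s_0=s_I$, and let $(c_j)$ be the output priorities. Since the run is accepting, the $\liminf$ of the priorities is some even value $2i$. So there is $j_0$ such that for all $j\geq j_0$ we have $c_j\geq 2i$, and $c_j=2i$ for infinitely many $j$. As $c_j=\min(2\green,2\red-1)\geq 2i$, we get $\red\geq i+1$ for all $j\geq j_0$. Hence every index $\ell\leq i$ lies in $P$ at step $j$, that is, $\leader(\ell)=\ell$. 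This means none of the indices $0,\dots,i$ is ever "forgotten" after $j_0$. Also, $|\img(s_j)|>i$ for these $j$: otherwise $\green$ would default to $|t(\img(s))|\leq i$ and give priority $<2i$.

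The key invariant is the following. If $\ell\leq i$ belongs to $P$ at step $j$, then the map $b$ restricted to $\{0,\dots,i\}$ is the identity, because all of $0,\dots,\ell$ are leaders and $b$ is increasing. Therefore $s_{j+1}(\ell)=\best(\ell)=\ska[t_j](s_j(\ell))$, and in particular $s_j(i)\trans{t_j}s_{j+1}(i)$ is a transition of $t_j$. Thus for $j\geq j_0$ the sequence $(s_j(i))_{j\geq j_0}$ is an infinite path of $\A$ over the suffix $t_{j_0}t_{j_0+1}\cdots$. Moreover, whenever $c_j=2i$ we have $\green=i$, so $s_j(i)\trans{t_j:0}s_{j+1}(i)$ and a Büchi transition occurs. (When $\green=i$ the transition $s_j(i)\to s_{j+1}(i)$ with priority $0$ exists by definition of $\green$; since $s_{j+1}(i)=\ska(s_j(i))$, this is the same pair of endpoints.) This happens infinitely often, so the path sees infinitely many Büchi transitions.

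It remains to supply a prefix. By \Cref{lemma:reachable-states}, $s_{j_0}(i)\in\img(s_{j_0})$ is reachable from $I$ by a partial run over $t_0\cdots t_{j_0-1}$. Concatenating that partial run with the path above gives an accepting run of $\A$ over $w$, so $w\in\LangS(\A)$.

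The main obstacle is the bookkeeping showing that position $i$ in the record really follows a single coherent path. One must check two things. First, when all indices $\leq i$ are leaders, the new record keeps them at the same positions, since the reset states in $R$ are appended only after position $|P|>i$. Second, the priority $2i$ genuinely certifies a Büchi edge on that exact path, and not merely somewhere in $t_j$. Both follow directly from the definitions of $b$, $\red$ and $\green$, so I would state the invariant as a small claim and verify it by unfolding $\De$.
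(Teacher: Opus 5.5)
Your proof follows the paper's own argument step for step: you take the $\liminf$ $2i$, deduce $\red_j>i$ so that $0,\dots,i$ are leaders, note that $b_j$ is the identity on them so $s_{j+1}(i)=\ska[t_j](s_j(i))$, get a Büchi edge from $\green_j=i$, and obtain the prefix from Lemma~\ref{lemma:reachable-states}. One step, however, is false as you justify it: ``$|\img(s_j)|>i$ for these $j$: otherwise $\green$ would default to $|t(\img(s))|\leq i$.'' The default value is $|t_j(\img(s_j))|=|\img(s_{j+1})|$. This can be strictly larger than $|\img(s_j)|$, because $\ska[t_j](s_j(0))$ may lie strictly above $s_j(0)$, so images grow.

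Here is a concrete counterexample. Take $Q=\{q_0\statel q_1\}$ and $I=\{q_0\}$. Let $a$ be the upward closure of $\{(q_0,1,q_1)\}$ and $b$ the upward closure of $\{(q_1,1,q_1),(q_0,0,q_0)\}$. On $ab^\omega$, $\Ai$ visits $(q_0),(q_1,q_0),(q_1,q_0),\dots$ with priorities $3,2,2,\dots$. So $i=1$, and $j_0=0$ satisfies your condition $c_j\geq 2i$ for all $j\geq j_0$. Yet $s_0(1)$ does not exist, so your path $(s_j(i))_{j\geq j_0}$ cannot start there.

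The repair is short. Choose $j_0$ so that, in addition, $c_{j_0}=2i$. Then $\green_{j_0}=i$ is attained by a genuine index. Indeed, if its defining set were empty, we would have $|t_{j_0}(\img(s_{j_0}))|=i$. But $\red_{j_0}\geq i+1$ then forces one of two things. Either $\red_{j_0}$ takes its default value $i$, a contradiction. Or all of $0,\dots,i$ are leaders, whence $|t_{j_0}(\img(s_{j_0}))|\geq|P_{j_0}|\geq i+1$, again a contradiction. Hence $i\in\dom(s_{j_0})$.

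The invariant then propagates. If $0,\dots,i\in\dom(s_j)$ and $\red_j\geq i+1$, then $0,\dots,i\in P_j$, so $|P_j|\geq i+1$ and $0,\dots,i\in\dom(s_{j+1})$. The bound $|t_j(\img(s_j))|\geq i+1$ also settles your parenthetical remark: for $j\geq j_0$, $\green_j=i$ can no longer be the default value.

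The paper's proof makes the same implicit assumption; it passes from $i\notin\dom(s_j)\setminus P_j$ directly to $i\in P_j$. With this repair your argument is complete and coincides with the paper's.
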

	\begin{proof} 
		Let $((s_j,w_j,c_j,s_{j+1}))_{j\in\NN}$ be an "accepting run@@aut" of $\Ai$ over a word~$w$, we intend to show that $w\in \LangS(\A)$.
		Let $\best[w_j], b_j$, $P_j$, $\green_j$ and $\red_j$ be as defined in the construction of $\Ai$ for the transition $(s_j,w_j,c_j,s_{j+1})$.
		Let $2g$ be the minimal even priority that occurs infinitely in ~$(c_j)_{j\in \NN}$, and finally let~$j_0\in\NN$ be such that $c_j\geqslant 2g$ for all $j\geqslant j_0$.
		
		We claim first that for all $j\geqslant j_0$ and for all $i \leq g$, $i\in P_j$. 
		Indeed, since~$c_j\geqslant 2g$, we have $2\red_j-1\geqslant 2g$, and hence $\red_j> g\geqslant i$.
		Thus $i\not\in(\dom(s_j)\setminus P_j)$, that is~$i\in P_j$.
		
		We claim now that for all  $j\geqslant j_0$ and for all $i \leq g$, $s_j(i)\trans{w_j}s_{j+1}(i)$. 
		As $b_j$ is increasing, and by recurrence on $i \leq g$, we get that for all such $i$, $b_j(i) = i$.
		Therefore, $s_{j+1}(i)= \best[w_j](b_j(i)) = \best[w_j](i) = \ska[w_j](s_j(i))$, and thus $s_{j+1}(i) \in w_j(\{s_j(i)\})$.
		Hence, for all $i \leq g, s_j(i)\trans{w_j}s_{j+1}(i)$.
		
		We claim also that for~$j$ such that $c_j=2g$, $s_j(g)\trans{w_j:0}s_{j+1}(g)$. This is clear from the definition of $\green_j$.
		
		Hence, we can construct a "run@@tile" over $w$ as follows.
		Over the prefix of length~$j_0$, by \Cref{lemma:reachable-states}, we can use a "partial run" starting in an initial state and reaching~$s_{j_0}(g)$. Then our run continues using, for all~$j\geqslant j_0$, an existing transition $s_j(g)\trans{w_j}s_{j+1}(g)$ (that exists by the second claim) and infinitely often $s_j(g)\trans{w_j:0}s_{j+1}(g)$ (according to the third claim).
		This is a Büchi accepting run over~$w$.
	\end{proof}
	
	\begin{lemma} \label{lem:Ai-complete}
		For $\A = (Q,\stateleq, I, \Gamma)$ an "ordered Büchi automaton", and $\Ai$ obtained by the previous determinization procedure applied on $\A$,
		$ \LangS(\A)\subseteq \Lang(\Ai)$.
	\end{lemma}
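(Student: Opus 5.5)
Let $w=(w_j)_{j\in\NN}\in\LangS(\A)$ with accepting run $\rho=((q_j,c_j,q_{j+1}))_{j\in\NN}$, and let $(s_j)_{j\in\NN}$, with priorities $(c'_j)$, be the unique run of $\Ai$ over $w$. The goal is to show that $\liminf_j c'_j$ is even. The plan is to follow $\rho$ inside the records. By \Cref{lemma:reachable-states}, $q_j\in\img(s_j)$ for every $j$, so we can define $i_j:=s_j^{-1}(q_j)$, the index at which $\rho$ currently sits. The hope is that some index eventually stabilises at or above $\rho$ and then sees infinitely many Büchi transitions.

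\textbf{Step 1: the position of the run can only drop.} At $\rho$ we replace the actual successor $q_{j+1}$ by the ``best'' successor: $\ska[w_j](s_j(i))\stategeq q_{j+1}$ whenever $s_j(i)\stategeq q_j$. Upward closure then lets any run ending at a higher state be continued by any transition of a lower one. Rather than tracking $\rho$ itself, I track the quantity
\[
m_j:=\min\{i\in\dom(s_j)\mid \text{there is an accepting run over } w_{\geq j} \text{ from } s_j(i)\}.
\]
By \Cref{lem:ordered-residuals}, the residual from a state $q$ is monotone in $q$, so $m_j$ is well defined and $s_j(m_j)\stategeq q_j$ can be assumed. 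Under one transition, the index $m_j$ either stays a leader, in which case its new index $b^{-1}(\cdot)$ is at most $m_j$, or it is forgotten in favour of a leader $\leader(m_j)<m_j$ with the same best image. In both cases the continuation of the accepting run remains available from the new state by monotonicity of residuals. Hence $m_{j+1}\leq m_j$, so $m_j$ stabilises to some $g$ from some $j_0$ on. From $j_0$ onwards, indices $\le g$ are never forgotten and $b_j(i)=i$ for $i\le g$; otherwise some index $\le g$ would shift down and strictly decrease $m$. Consequently $\red_j>g$, so all priorities are at least $2g$, except possibly $2\green_j$ with $\green_j\le g$, which is even anyway.

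\textbf{Step 2: infinitely many priorities $\le 2g$.} This is the main obstacle. For $j\ge j_0$ the path at index $g$ follows $\ska$, that is, the maximal successors. I must show that it takes a Büchi transition infinitely often, or that some smaller index does. I would argue by contradiction: suppose that from some point no $s_j(i)\trans{w_j:0}s_{j+1}(i)$ holds for $i\le g$. Let $p_j:=s_j(g)\stategeq q'_j$, where $(q'_j)$ is an accepting run from $s_{j_0}(g)$. Whenever the accepting run takes a Büchi transition $(q'_j,0,q'_{j+1})$, either $p_j\statel q'_j$ fails to hold or $q'_{j+1}\statel p_{j+1}$. In the latter case $(q'_j,0,q'_{j+1})\transleq(p_j,0,p_{j+1})$, so upward closure gives a Büchi transition on index $g$, a contradiction. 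It remains to handle the gap between $p_j$ and $q'_j$. Because $p$ stays at index $g$ and dominates $q'$, equality $p_{j+1}=q'_{j+1}$ cannot persist without yielding the needed Büchi transition. If instead $p\stateg q'$ strictly at every Büchi step, I would use the freshly created indices in $R$. The accepting run's states at which the inequality is strict lie below $p$, and they reappear as reset paths with index $>g$. Minimality of $g$ (an accepting continuation exists from a smaller-index state only if that index is at most $g$) would then be contradicted, since I would re-run the Step 1 argument with the lexicographic pair (index, whether the state equals the run state). I expect this bookkeeping to need care; the cleanest route is probably to redefine $m_j$ using accepting runs from $s_j(i)$ whose first transition is the $\ska$-transition, and to use \Cref{lem:tile-om-rejecting-bis}-style upward closure to collapse the strict case.

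\textbf{Conclusion.} Combining the steps: eventually all priorities are at least $2g$, and the value $2g$ (or a smaller even $2\green$ that is never undercut by an odd one) occurs infinitely often. Hence the minimal priority seen infinitely often is even and $w\in\Lang(\Ai)$.
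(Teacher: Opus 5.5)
Your Step 1 is sound. Via $b^{-1}(\mathrm{leader}(m_j))\le m_j$ and monotonicity of residuals, the index $m_j$ can only drop, so it stabilises at some $g$. After that, no index $\le g$ is forgotten and odd priorities are $\ge 2g+1$. This matches the paper's remark on indices of tracks.

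The gap is in Step 2, and it is not only unfinished: its target is false. An accepting continuation from $s_j(g)$ does not make the $\delta$-path at index $g$ see Büchi transitions. Here is a counterexample:
\begin{itemize}
\item Take $Q=\{a,b\}$ with $b\sqsubset a$, $I=Q$, and the single tile $t$ generated by $\{(a,1,a),(b,0,b)\}$. Then $t=\{(a,1,a),(a,0,b),(a,1,b),(b,0,b),(b,1,b)\}$.
\item The record is constantly $(a,b)$ and both indices are leaders. So $\mathit{green}=1$ and $\mathit{red}=2$, every priority is $2$, and $t^\omega$ is accepted.
\item Yet $a\to b\to b\cdots$ is an accepting continuation from $s_j(0)=a$, so your $g$ is $0$. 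Since $(a,0,a)\notin t$, priority $0$ never occurs.
\end{itemize}
Hence the contradiction with ``minimality of $g$'' that you hope for in the strictly-above case does not exist. States of the accepting run lying strictly below $s_j(g)$ sit at larger indices, which is perfectly compatible with $g$ being minimal.

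Separately, the comparison in Step 2 is reversed. The relation $(q'_j,0,q'_{j+1})\le(p_j,0,p_{j+1})$ requires $p_{j+1}\sqsubseteq q'_{j+1}$, which here means equality; $q'_{j+1}\sqsubset p_{j+1}$ does not give it. Also, a step from $p_j\sqsupset q'_j$ into $p_{j+1}=q'_{j+1}$ is Büchi whatever the run's own priority.

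The missing idea is to choose the track by its position relative to the accepting run $(q_j)$, not by the acceptance of continuations. The paper calls $\delta$-paths tracks and identifies tracks that are eventually equal. There are finitely many classes, linearly ordered because each $\delta_{w_j}$ is monotone. It then takes the lowest class of infinite tracks dominating $(q_j)$, and distinguishes three cases:
\begin{itemize}
\item If the track eventually coincides with $(q_j)$, it inherits the run's Büchi transitions.
\item If it returns to $(q_j)$ from strictly above infinitely often, each return is Büchi by upward closure.
\item If it stays strictly above, a track launched at $q_{j_1}$ is infinite and trapped between the two. By minimality it merges into the chosen track. The merging step $p_{j'}\sqsupset p'_{j'}$, $\delta_{w_{j'}}(p'_{j'})=p_{j'+1}$ gives $p_{j'}\xrightarrow{0}p_{j'+1}$, and iterating gives infinitely many Büchi transitions.
\end{itemize}
In the example this selects the $b$-track at index $1$. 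Your Step 1 argument then applies to the stabilised index $i_p$ of that track. It yields priority $\le 2i_p$ infinitely often, while odd priorities are eventually $>2i_p$, so the run is accepting.
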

	\begin{proof}
		Let $w=(w_j)_{j\in \NN}\in \LangS(\A)$ be a word with "accepting run@@tile" $(q_j, c_j, q_{j+1})_{j\in \NN}$. We denote as $\rho = (s_{j})_{j\in \NN}$ the unique run of $\Ai$ over $w$.
		
		\AP We denote in this proof a ""track"" as a possibly finite "run@@tile" $(\delta_j = (p_j, l_j, p_{j+1}))_{j_0 \leq j < \alpha}$, starting in a state $p_{j_0} \in s_{j_0}$, with $\alpha \in \NN \cup \{\omega\}$ maximal such that $\forall j \in [j_0, \alpha), p_{j+1} = \ska[w_j](p_{j})$. A "track" can thus be characterised by its starting index $j_0$ and its starting state in $s_{j_0}$. We observe by recurrence that for all $j \in J := [j_0, \alpha), p_j \in s_j$.
		We note that by maximality, $\alpha$ is finite if and only if there exists $j^*$ such that $\ska[w_{j^*}](p_{j^*}) = \bot$. We denote $(i_{p,j})_{j \in J}$ the sequence of indices such that $s_j(i_{p,j}) = p_j$. We observe that $(i_{p,j})_{j\in J}$ is non-increasing, and decreases only when a smaller index is "forgotten". 
		Therefore, in the case of an infinite "track" $(\delta_j)_{j \in J}$, the indices $(i_{p,j})_{j\in J}$ eventually stabilize in some index $i_p$. Past this point, we observe that no index $\leq i_p$ is "forgotten", as it would cause $i_{p,j}$ to decrease further, contradiction. Therefore, no odd priority smaller than $2i_p$ is produced. Hence, it suffices to exhibit an infinite "track" starting in $(j_0 = 0,p_0 \in I)$ with an infinity of Büchi transitions to prove that $\rho$ is accepting.
		
		Let $(\delta_j = (p_j, l_j, p_{j+1}))_{j\in J}$ a "track" starting in $(j_0 = 0, p_{0} \stategeq q_{0})$. By induction on $j$ and monotony of $\ska[w_j]$, we easily show that $\forall j, p_j \stategeq q_j$. Therefore, this "track" is infinite, and $(i_{p,j})_{j \in J}$ eventually stabilizes in some index $i_p$.
		We define two "tracks" to be equivalent if they are eventually equal. We observe that due to the determinism of $\ska[w_j]$, two "tracks" $(\delta_j = (p_j, l_j, p_{j+1}))_{j\in J}, (\delta'_j = (p'_j, l'_j, p'_{j+1}))_{j'_0\leq j < \alpha'}$ are equivalent if and only if there exists $j \in J \cap [j'_0, \alpha')$ such that $p_j = p'_j$. A consequence of this fact is that there are only finitely many equivalence classes of infinite tracks. They describe distinct infinite "tracks", that never intertwine as the different $\ska[w_j]$ are order-preserving. We can thus order them according to $\statel$.
		
		Let us consider a representative $(\delta_j = (p_j, l_j, p_{j+1}))_{j\geq j_0}$ of the minimal equivalence class of "tracks" above $(q_j){j \in \NN}$ (they are thus all infinite). We denote $k$ the time at which $(i_{p,j})_{j \in J}$ stabilizes in its final index $i_p$.

		\begin{figure}[!htb]
			\centering
			\includegraphics[height=5.8cm]{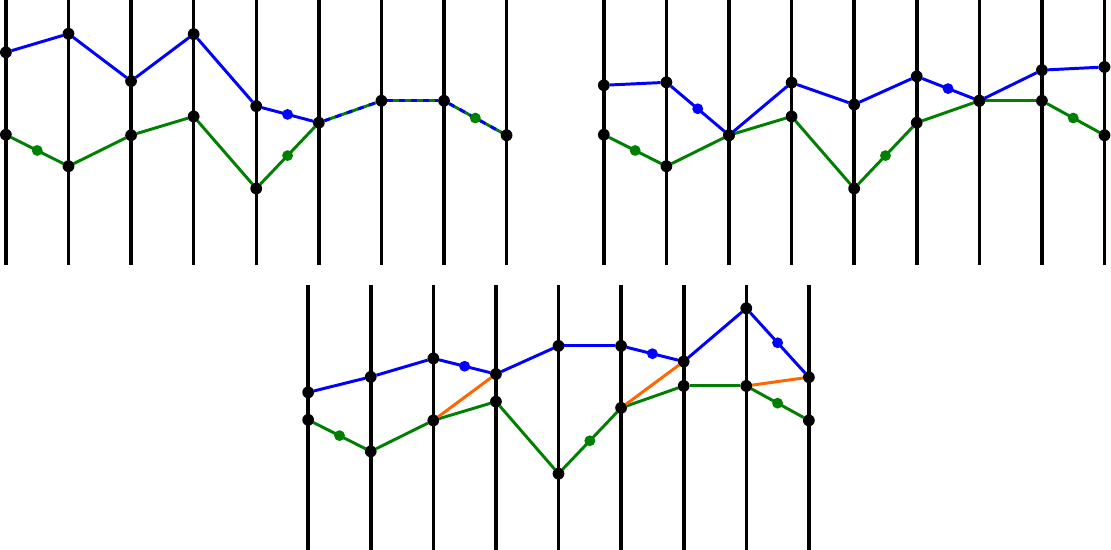}
			\caption{\label{fig:tracks-det} In these figures, the green run corresponds to the run $(q_j, c_j, q_{j+1})_{j\in \NN}$, and the blue run to the representative track $(\delta_j)_{j\in \NN}$. The following three situations can arise: either they eventually fuse, they meet infinitely often, or past some point they no longer meet. The existence of an infinity of orange transitions is a consequence of the minimality of $(\delta_j)_{j\in \NN}$. All the blue Büchi transitions represented are obtained by upward closure of the corresponding tiles.}
		\end{figure}
		
		If there exists some $k' \geq k$ such that for all $j \geq k', p_j = q_j$ (first case in \Cref{fig:tracks-det}), then as $(q_j, c_j, q_{j+1})_{j\in\NN}$ witnesses an infinity of Büchi transitions, so does $(\delta_{j})_{j\geq j_0}$.\\
		Else, if there exist infinitely many $(j_l)_{l \in \NN}$ such that $\forall l \in \NN, p_{j_{2l}} = q_{j_{2l}}$ and $p_{j_{2l+1}} \stateg q_{j_{2l+1}}$ (second case in \Cref{fig:tracks-det}): then for all $l \in \NN$, considering the maximal $j\geq j_{2l+1}$ such that $p_{j} \stateg q_{j}$, we have that $q_j \trans{w_j} q_{j+1}$, $p_j \trans{w_j} q_{j+1}$, and $p_j \stateg q_j$. Therefore, by upward closure of $w_j$ by ${\transleq}$, $p_j \trans{w_j:0} q_{j+1} = p_{j+1}$. Thus $(\delta_j)_{j \geq j_0}$ witnesses an infinity of Büchi transitions.\\
		Else, $\exists k' \geq k \in \NN, \forall j \geq k', p_j \stateg q_j$ (third case in \Cref{fig:tracks-det}). Let $j_1 \geq k'$ be an ulterior point in time, we consider the "track" $(\delta'_j = (p'_j, l'_j, p'_{j+1}))_{j \geq j_1}$ starting in $(j_1, q_{j_1})$. Then $(\delta'_j)_{j \geq j_1}$ can not go below $(q_j)_{j \in \NN}$ and is infinite. Therefore, by minimality of $(\delta_j)_{j \geq j_0}$, we get that $(\delta_j)_{j \geq j_0}$ and $(\delta'_j)_{j \geq j_1}$ are equivalent. That is, there exists some $j' \geq j_1$ such that $p_{j'} \stateg p'_{j'}, \delta_{w_{j'}}(p'_{j'}) = p_{j'+1}$. Thus, by upward closure of $w_{j'}$, $p_{j'} \trans{w_{j'}:0} \ska[w_{j'}](p'_{j'}) = p_{j'+1}$. We can reiterate this process infinitely, taking $j_2 > j_1$, and so on, thus exhibiting an infinity of Büchi transitions in $(\delta_j)_{j \geq j_0}$.
		\end{proof}
		
		We obtain as a consequence of the two preceding lemmas that
	
	\detoB

	\knowledgenewrobustcmd\leqlexi[1][i]{\mathrel{\cmdkl{\leqslant_{\mathrm{lex}}^{#1}}}}
	\knowledgenewrobustcmd\llexi[1][i]{\mathrel{\kl[\leqlexi]{<_{\mathrm{lex}}^{#1}}}}
	\newrobustcmd\glexi[1][i]{\mathrel{\kl[\leqlexi]{>_{\mathrm{lex}}^{#1}}}}
	\newrobustcmd\geqlexi[1][i]{\mathrel{\kl[\leqlexi]{\leqslant_{\mathrm{lex}}^{#1}}}}
	
	As $\Ai$ recognises $\LangS(\A)$, "Eve-positional", by \Cref{theorem:co}, it is "$\eps$-completable". Actually, this "$\eps$-completion" is quite easy to describe and corresponds to the lexicographical order over "records" (seen as tuples).
	
	\begin{restatable}{lemma}{detepscomplet}
		Let $\Ai$ be the automaton obtained by the determinization of some "ordered Büchi automaton" $\A = (Q, \stateleq, I, \Gamma)$. Then $\Ai$ can be "$\eps$-completed" by adding the "$\eps$-transitions" in $\Delta_{\eps} := \{(s,\eps,2i,s') \mid s\glexi s'\}\ \cup\ \{(s',\eps,2i+1,s) \mid s\leqlexi s'\}$, where $\leqlexi$ corresponds to the lexicographical order over the first $i$ elements of a "record".
	\end{restatable}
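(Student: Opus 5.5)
The statement has two parts, and I would treat them separately: first that $\Delta_{\eps}$ has the shape required by \Cref{def:eps-complete}, then that adding it does not enlarge $\Lang(\Ai)$.

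\textbf{Step 1: the shape of $\Delta_\eps$.} I would fix the conventions first. Each $\leqlexi$ compares the length-$i$ prefixes of two "records" lexicographically, entries ordered by ${\stateleq}$ and a missing entry placed below every state. Priorities are shifted by one, so that the index $[-1,2n-1]$ of $\Ai$ becomes $\inter{2n+1}$, padded with an unused top priority if needed. With these conventions, $\leqlexi$ is a total preorder for each $i$, and $\leqlexi[i+1]$ refines $\leqlexi$ since it compares one more coordinate. The even relation $s\trans{\eps:2i}s'$ holds iff $s\glexi s'$. This is exactly the negation of $s'\trans{\eps:2i+1}s$, which holds iff $s\leqlexi s'$, so the strict-variant condition holds by construction. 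All of this is bookkeeping.

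\textbf{Step 2: the language is not enlarged.} Take $w$ accepted by the completed automaton through a run interleaving letter steps of $\Ai$ with $\eps$-steps. Let $2g$ be the least priority seen infinitely often, and choose $j_0$ after which every priority is $\geq 2g$. Following the proof of \Cref{lem:Ai-correct}, I would build an accepting "run@@tile" of $\A$, which gives $w\in\LangS(\A)=\Lang(\Ai)$ by \Cref{lem:Ai-correct} and \Cref{lem:Ai-complete}.

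After $j_0$ the steps behave as follows.
\begin{itemize}
\item A letter step never "forgets" an index $\leq g$, because $\red>g$ there. As in \Cref{lem:Ai-correct}, each position $i\leq g$ then advances along an actual transition $s(i)\trans{t}\ska(s(i))$.
\item An $\eps$-step of priority $\geq 2g$ goes from $s$ to some $s'$ with $s'\leqlexi[g] s$. Its first differing coordinate $k$ among the first $g$ satisfies $s'(k)\statel s(k)$.
\end{itemize}
The key tool is the remark after \Cref{lem:Sii-monoid}. In $\A$ one may always replace the endpoint of a "partial run" by any ${\stateleq}$-smaller state while gaining a "Büchi transition". So any strict ${\stateleq}$-decrease of a tracked coordinate can be absorbed into the next letter transition of $\A$.

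\textbf{Main obstacle.} The difficulty is choosing which coordinate to follow. An $\eps$-step may strictly decrease some coordinate $k<g$ while the coordinate $g$ moves arbitrarily, even upwards. My plan is a potential argument on the vector $(s(0),\dots,s(g))$.
\begin{itemize}
\item Letter steps act coordinatewise on these positions through the monotone maps $\ska$.
\item $\eps$-steps weakly decrease the vector in $\leqlexi[g+1]$, since priorities are $\geq 2g$ (up to the shift). A non-strict step keeps the first $g+1$ coordinates fixed.
\item I would take the least index $m\leq g$ whose coordinate is strictly decreased by infinitely many $\eps$-steps.
\end{itemize}
If no such $m$ exists, then eventually the coordinates $0,\dots,g$ move only through letter steps. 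Following coordinate $g$ then sees a "Büchi transition" at each occurrence of priority $2g$, by definition of $\green$.

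If $m$ exists, then eventually coordinates $<m$ evolve only via genuine transitions. Following coordinate $m$ gives a genuine run up to strict downward jumps, and each jump yields an extra Büchi transition by upward closure. Infinitely many jumps then make the run accepting.

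The delicate point is checking that while coordinates $<m$ are frozen, a strict decrease at $m$ really is lex-first, so coordinate $m$ only ever moves downward at $\eps$-steps. I would verify this carefully from the definition of $\leqlexi$. If it fails, my fallback is to argue through \Cref{lem:ordered-residuals}: show instead that the residual at $s'$ is included in the residual at $s$ whenever $s'\leqlexi s$, by simulating the two records' tracks in parallel.
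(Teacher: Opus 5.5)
\noindent Your treatment of the tail of the run is sound, and more careful than the paper's appendix proof. That proof asserts that eventually the $\varepsilon$-steps leave the first entries of the record untouched. This is not justified, since $\varepsilon$-steps of priority $2g$ or $2g+1$ can lower those entries infinitely often. Your choice of the least index $m$ that is strictly lowered infinitely often covers exactly that case, with each drop absorbed into a Büchi transition by upward closure. The ``delicate point'' also holds: a change at an index $k<m$ forces a strict drop at some index $\leq k$, so the indices $<m$ are eventually frozen, after which index $m$ can only go down at $\varepsilon$-steps.

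The genuine gap is at the start of the run. You say you follow \Cref{lem:Ai-correct}, but there the partial run from $I$ to the tracked state comes from \Cref{lemma:reachable-states}, which describes the $\varepsilon$-free run of $A_D$. The records visited by your $\varepsilon$-interleaved run are different, and you never show that the tracked coordinate is reachable in $\mathcal{A}$ from $I$ over the letters read so far. The missing fact is that no transition of $\Delta_\varepsilon$ raises $s(0)$, so images only shrink along $\varepsilon$-steps. Together with $\mathrm{img}(\Delta(s,t))=t(\mathrm{img}(s))$ and monotonicity of $X\mapsto t(X)$, this gives by induction that the current record's image lies inside the reachable set; this is the first step of the paper's proof.

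Under the convention you fixed in Step~1 (length-$i$ prefixes, following the statement's wording), this fact fails, and so does the statement. For $i=0$ the relation $\xrightarrow{\varepsilon:1}$ then relates every pair of records. Take $Q=\{q_0\sqsubset q_1\}$, $I=Q$, $d$ the upward closure of $\{(q_0,1,q_0)\}$ and $t$ that of $\{(q_0,1,q_0),(q_1,0,q_1)\}$. The word $dt^\omega$ is rejected by $\mathcal{A}$, since after $d$ only $q_0$ is reachable and $t$ only loops on it with priority $1$. Yet the run $(q_1,q_0)\xrightarrow{d:0}(q_0)\xrightarrow{\varepsilon:1}(q_1,q_0)$, followed by the loop $(q_1,q_0)\xrightarrow{t:0}(q_1,q_0)$, accepts it.

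You need the appendix convention, where $\leqslant^{i}_{\mathrm{lex}}$ compares entries $0,\dots,i$. Under it:
\begin{enumerate}
\item every $\varepsilon$-transition from $s$ to $s'$ satisfies $s'(0)\sqsubseteq s(0)$;
\item an $\varepsilon$-step of priority $\geq 2g$ weakly lowers $(s(0),\dots,s(g))$, which your potential argument needs; under your stated conventions this does not follow, since entry $g$ could jump upward at a step of priority $2g+1$;
\item a step of priority exactly $2g$ strictly lowers that vector, which you need in the case without $m$ to know that priority $2g$ recurs on letter steps.
\end{enumerate}
No shift of priorities is needed either, and shifting those of $A_D$ by one would flip every parity. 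Priority $-1$ only occurs on transitions into the empty record, a sink never visited by an accepting run, so it can simply be replaced by $1$.
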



\subsection{Optimality of the determinization for a complete enough alphabet}\label{sec:optim}

We observe that when the alphabet $\Gamma$ satisfies some completeness properties, this determinization procedure over an "ordered Büchi automaton" $\A = (Q, \stateleq, I, \Gamma)$ yields an automaton with a minimal number of states over all "deterministic parity automata" recognising $\LangS(\A)$. This is notably the case when $\Gamma$ contains all the tiles whose "skeleton" has all "horizontal transitions", all non-Büchi, except for possibly one (which is either Büchi or not present). In order to prove this minimality, we will use the flower game introduced by Colcombet and Zdanowski \cite{Safra_flowers}.

It is even possible to obtain this optimality over a larger subset of tile alphabets, by noticing that some alphabets cannot generate all possible permutations, and thus the determinization cannot reach all the "records". 

We first need to properly introduce games and positional strategies.

\subsubsection{Games and strategies}
For $\rho$ an infinite word and $n \in \NN$, $\rho_{|n}$ describes its finite prefix of length $n$.

\AP For $\Pi$ an "alphabet", $(V,E)$ a graph with $V$ a (potentially infinite) set of vertices and $\gamma:E\to \Pi \uplus \{\io\}$ an edge labeling, we call $G = (V,E,\gamma)$ a ""$\Pi$-graph"".

\AP For $L$ a language over $\Pi^\omega$, a ""$L$-game"" played by players Eve and Adam consists in a "$\Pi$-graph" $\G = (V_E \uplus V_A,E,\gamma)$ where the vertex set is partitioned in two subsets, $V_E$ and $V_A$, controlled respectively by Eve and Adam.
A ""play"" of $\G$ starting in $v\in V$ consists in an infinite sequence of edges $\rho := (e_i)_{i\in \NN}$ forming an infinite path starting in $v$. 
A "play" $(e_i)_{i\in \NN}$ is ""winning@@play"" for Eve (or simply \reintro*"winning@@play") if after some finite prefix all the $\gamma(e_i)$ are equal to $\io$, or if $(\gamma(e_i))_{i\in \NN} \in L$. Else it is said to be ""losing@@game"" (for Eve, and "winning@@game" for Adam).

\AP A ""strategy"" for Eve consists of a function $\sigma : E^* \to E$ such that, for all play $\rho$, for all $n \in \NN$, if $\rho_{|n}$ ends in a vertex $v \in V_E$, $\sigma(\rho_{|n})$ is an edge from $v$. A "play" $\rho$ is said to be ""consistent with"" the strategy $\sigma$ if for all $n$, $\rho_{|n}$ ending in a vertex of $V_E$ implies that $\rho_{|n+1} = \rho_{|n}\sigma(\rho_{|n})$. 
We say that an Eve "strategy" $\sigma$ is ""winning@winning strategy"" from vertex $v\in V$ if all plays "consistent with" $\sigma$ starting in $v$ are "winning@@play". We similarly define "strategies" for Adam, "winning@@strategy" when all plays "consistent with" them are "winning@@play" for Adam.
A "strategy" is ""positional@@strategy"" if it can be expressed as a function $\sigma : V \to E$, that is, if it does not depend on the path seen beforehand but simply on the current vertex.
For $L$ an "$\omega$-regular" language, all "$L$-games" are determined : from any vertex $v$, there exists one player with a winning strategy from $v$ \cite{Borel_determinacy}. The set of vertices such that Eve admits a winning strategy is ""Eve's winning region"", and we similarly define "Adam's winning region".


\subsubsection{Proof of optimality}

\knowledgenewcommand{\RA}{\cmdkl{R_\A}}
\knowledgenewcommand{\SR}{\cmdkl{S_R}}

In this section, we set some "ordered Büchi automaton" $\A = (Q,\stateleq, I, \Gamma)$.

\AP We denote by $\intro*\RA \subseteq Q$ its set of ""reachable residuals"". It is defined as the set of states $q$ such that, letting $q_I = \max_{\stateleq}(I)$, there exists $u \in \Gamma^*$ such that $q_I \trans{\skel(u)} q$. Intuitively, $q\in \RA$ if it is the maximum state of some reachable subset of $Q$. We can easily observe that for any $u\in \Gamma^*$ witnessing that some $q_u$ belongs to $\RA$, $q_u$ defines a "residual" for $L$, as $(Q, \stateleq, \{q' \mid q' \stateleq q_u\}, \Gamma)$ recognises the "residual" $u^{-1}L$. Note that $\RA$ depends on $I$.

\AP We define $\intro*\SR$ as the set of "records" $s$ over $Q$ such that $s(0) \in \RA$, along with the empty "record" if the semigroup generated by $(\Gamma, \cc)$ contains the empty tile.
We will show that the reachable states of $\Ai$ are in $\SR$, before showing that there is no "deterministic automaton" recognising $\LangS(\A)$ with less than $|\SR|$ states.

\begin{lemma}\label{lem:det-accessible-horizontal}
	\AP Let $\A = (Q,\stateleq, I, \Gamma)$ be an "ordered Büchi automaton" of "reachable residuals" $\RA \subseteq Q$. Then the automaton $\Ai$, obtained via determinization of $\A$, is such that all its accessible "records" in $\Ai$ are in $\SR$.
\end{lemma}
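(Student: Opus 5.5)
The plan is to prove, by induction on the length of a word $u\in\Gamma^*$, that the record $s_u := \De(s_I,u)$ reached in $\Ai$ after reading $u$ lies in $\SR$. There are two cases to separate. If $s_u$ is non-empty, we must show $s_u(0)\in\RA$. If $s_u$ is empty, we must show that the semigroup generated by $(\Gamma,\cc)$ contains the empty tile. Records not of the form $s_u$ are not accessible and are trimmed, so these two cases cover everything.

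For the non-empty case, combine \Cref{lemma:reachable-states} with the remark following the definition of $\De$. That remark says that if $P\neq\emptyset$, then $s'(0)=\ska(s(0))=\max_{\stateleq}t(\img(s))$. Iterating, $s_u(0)$ is obtained from $q_I=\max_{\stateleq}(I)=s_I(0)$ by applying successively $\ska[t_1],\dots,\ska[t_m]$ for $u=t_1\cdots t_m$. By \Cref{lemma:reachable-states}, it is also the maximum of the set of states reachable from $I$ over $u$. It remains to see that $q_I\trans{\skel(u)}s_u(0)$, where $\skel(u)$ denotes the skeleton of the product tile $t_u := t_1\cc\cdots\cc t_m\in\Sii$ (which is in $\Sii$ by \Cref{lem:Sii-monoid}). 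This step needs two facts:
\begin{itemize}
\item Since $t_u$ is upward closed and the maximal target reachable from $q_I$ is $\delta_{t_u}(q_I)=s_u(0)$, the transition $(q_I,c,s_u(0))$ with the largest available priority $c$ is $\transleq$-minimal among transitions of $t_u$ leaving $q_I$.
\item Any such minimal transition is not strictly above another transition of $t_u$. Indeed, a transition $\delta'\translt(q_I,c,s_u(0))$ would need source $\stateleq q_I$ and target $\stategeq s_u(0)$. A strictly larger target would contradict monotonicity of $\delta_{t_u}$, given that $s_u(0)$ is the maximum reachable target from any source $\stateleq q_I$. The case of a smaller source with the same target, or equal endpoints with a smaller priority, must be excluded by choosing the $\transleq$-minimal element carefully.
\end{itemize}
Hence some transition from $q_I$ to $s_u(0)$ belongs to the skeleton, being a minimal generator (\Cref{lemma:skeleton}). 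The delicate point is that the minimal element below $(q_I,c,s_u(0))$ might have a source strictly below $q_I$. In that case I would fall back on the intended reading of the definition of $\RA$, namely that some transition from $q_I$ reaching $s_u(0)$ exists in $t_u$. I would also check, using the explicit form of the skeleton (\Cref{lemma:skeleton}: distinct sources and targets), that the maximal target $s_u(0)$ is always hit by a skeleton transition whose source is $\stateleq q_I$. Upward closure then moves the source up to $q_I$.

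For the empty case, $\img(s_u)=\emptyset$, so by \Cref{lemma:reachable-states} no state is reachable from $I$ over $u$. In particular $\delta_{t_u}(q)=\bot$ for every $q\in I$. If $u$ is empty this is impossible, since $I\neq\emptyset$ under the standing assumptions. Otherwise, write $p$ for the maximum state from which some $q'$ is reachable over $t_u$, and let $v$ be a reachable-path witness; if no such $p$ exists, $t_u$ itself is empty. Actually, since $I$ is downward closed and $\delta_{t_u}$ is monotone, the set of sources with a successor is downward closed. If this set is non-empty it meets $I$, because $I$ contains the minimum of $Q$. This contradicts the assumption, so the set is empty and $t_u=\emptyset$. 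Thus the empty tile is generated by $\Gamma$, and $s_u=()$ is in $\SR$.

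The main obstacle is the skeleton bookkeeping in the non-empty case: checking that the maximal reached state is realised by a transition of $\skel(t_u)$ from $q_I$ itself, and not merely from a smaller source.
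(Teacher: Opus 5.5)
\textbf{Non-empty case.} This part follows the paper. By the remark after the definition of $\De$ and \Cref{lemma:reachable-states}, $s_u(0)=\delta_{t_u}(q_I)$ is the maximum of the states reachable from $I$ over $u$, so $u$ witnesses $s_u(0)\in\RA$.

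The skeleton bookkeeping you flag as the main obstacle cannot be completed, and your suspicion is right. Take $Q=\{q_0\statel q_1\statel q_2\}$, $I=Q$ and $\Gamma=\{T\}$, with $T$ generated by $(q_0,1,q_1)$. Reading $T$ from $s_I=(q_2,q_1,q_0)$ gives a record with first entry $q_1$. Yet for every $k\geq 1$, $\skel(T^k)$ is a single transition with source $q_0$, and the empty word only yields $q_2$. So $q_1\notin\RA$ under the literal skeleton-based definition. Drop that part and read $q\in\RA$ as $q=\delta_{t_u}(q_I)$ for some $u\in\Gamma^*$. This is what the paper's gloss (``maximum state of some reachable subset'') and its one-line proof actually use.

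\textbf{Empty case: this is where the gap is.} You claim that the set of sources having a successor in $t_u$ is downward closed; this is backwards. If $(p,c,q)\in t_u$ and $p\statel p'$, then $(p,c,q)\translt(p',1,q)$. So by upward closure this set is upward closed, and the absence of transitions from $q_I$ says nothing about sources above $q_I$.

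The conclusion $t_u=\emptyset$ is in fact false in general. Take $Q=\{q_0\statel q_1\}$, $I=\{q_0\}$, and $\Gamma=\{a,b\}$ with $a$ generated by $(q_0,1,q_1)$ and $b$ generated by $(q_1,1,q_1)$. Reading $b$ from $s_I=(q_0)$ gives the empty record. But $\delta_a(q_1)=\delta_b(q_1)=q_1$, so every product of $a$'s and $b$'s contains a transition from $q_1$ to $q_1$, and the empty tile is never generated. Note that $q_1$ is accessible via $a$, so trimming does not help. Hence the empty record is accessible but not in $\SR$.

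The paper's proof makes exactly the same unjustified step (``therefore $w_jt$ concatenates to the empty tile''). So no argument can close this case as the statement stands. The fix is to put the empty record into $\SR$ whenever $\delta_{t_u}(q_I)=\bot$ for some $u\in\Gamma^+$; the case is then immediate. This does not affect the later lower bound: for a Rabin-complete $\Gamma$, the product of the $\nu_q$ taken in decreasing $\stateleq$-order is the empty tile.
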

\begin{proof}
	We recall that the "initial state" $s_I$ of $\Ai$ has for first state $\max_{\stateleq}(I)$, which belongs to $\RA$ (using the "empty word" $\io$ as witness), therefore $s_I \in \SR$.
	
	We reason by induction on the length of "runs@@tile" in $\Ai$. As the "initial state" belongs to $\SR$, it suffices to show that all transitions from such "records" still lead to "records" in $\SR$. Let $w_j$ denote the word read so far.
	
	Let $s_j \in \SR$ be the current "record" and $t\in \Gamma$ be the tile read at time $j$, let us show that $s_{j+1} := \De(s_j,t) \in \SR$. We denote $P_j$ as in the determinization construction. 
	If $P_j$ is empty, then necessarily $s_{j+1}$ is the empty "record". In that case, the word $w_j t$ satisfies that there is no $q\in Q$ such that $\max_{\stateleq}(I) \trans{w_j t} q$. Therefore $w_j t \in \Gamma^+$ concatenates to the empty tile, and the empty record indeed belongs to $\SR$. Else, necessarily $s_{j+1}(0) = \ska[t] (s_{j}(0))$. But then, $w_j t$ acts as a witness that $s_{j+1}(0) \in \RA$, which concludes the proof.
\end{proof}
%

The idea behind the optimality proof is that, with an alphabet complete enough, for all "records" $s_1, s_2, s' \in S$ with $s_1 \neq s_2$ yet $s_1(0) = s_2(0)$, there exists a word $w\in \Gamma^+$ such that $\De(s_1, w) = \De(s_2,w) = s'$, yet the path from $s_1$ is dominated by an even parity and the one from $s_2$ is dominated by an odd priority. Therefore, intuitively, it is not possible for Eve to confuse both states without losing some expressivity.

A tile alphabet $\Gamma$ over a set of states $Q$ is said ""Rabin-complete"" if, for all $q\in Q$, it contains both the tile $\nu_q$ of "skeleton" $s_q := \{(q',1, q') \mid q' \neq q\}$, and the tile $\beta_q$ of "skeleton" $s_q \cup \{(q,0,q)\}$. Notably, up to adding a single state $\hat{q}$, maximal for ${\stateleq}$, such that for all tile $t$, $\hat{q} \trans{t:1} \hat{q}$, this alphabet can generate all the Rabin languages with $|Q|$ pairs.

\begin{lemma}\label{lem:det-horizon-optim}
	Let $\A = (Q,\stateleq, I, \Gamma)$ be an "ordered Büchi automaton" such that $\Gamma$ is "Rabin-complete". Then, $\LangS(\A)$ cannot be recognised by a "deterministic parity automaton" with fewer states than $|\SR|$.
\end{lemma}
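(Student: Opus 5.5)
We follow the flower-game method of Colcombet and Zdanowski \cite{Safra_flowers}. Suppose $B$ is a deterministic parity automaton recognising $\LangS(\A)$ with fewer than $|\SR|$ states; we derive a contradiction. First we split $\SR$ according to the first entry $q=s(0)\in\RA$, treating the empty record separately. Records with different first entries correspond to distinct residuals by \Cref{lem:ordered-residuals}, so no state of $B$ can serve two of them. It therefore suffices, for each fixed $q\in\RA$, to show that $B$ has at least $(|\{q'\mid q'\stateleq q\}|-1)!$ states reachable by words $u$ with $p_u=q$. Fix a witness $u_q$ reaching $q$ as the maximal reachable state. For each record $s$ with $s(0)=q$ we build a word $u_q x_s$, with $x_s$ over the Rabin-complete tiles, such that $\Ai$ reaches exactly $s$ after reading it. The natural choice for $x_s$ is a product of tiles $\nu_{q'}$, which kill the path at $q'$ and then recreate it at the newest index. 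Deleting and reinserting the states below $q$ in a suitable order realises any permutation of the positions $1,\dots,k$. We must check that the reset indices are appended in decreasing order, so the ordering is controlled by choosing the sequence of killed states carefully.

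The core step is a game played on the finite arena given by $B$, with the records of leader $q$ as \emph{petals}. For two distinct records $s_1\neq s_2$ with $s_1(0)=s_2(0)=q$, let $i$ be the first index where they differ. We design a loop word $y$ consisting of $\beta$/$\nu$ tiles that behaves as follows. From $s_1$, $y$ keeps the paths at indices $\le i$ alive and makes index $i$ see a Büchi transition via $\beta_{s_1(i)}$, giving priority $2i$. From $s_2$, $y$ forces index $i$ to be forgotten, giving priority $2i+1$ with nothing smaller. Both runs end in a common record. Using \Cref{lem:Ai-correct} and \Cref{lem:Ai-complete}, words of the form (prefix)$\cdot y^\omega$ and mixtures of such loops are then classified as accepted or rejected.

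Following \cite{Safra_flowers}, we set up a game in which Adam picks a petal record and Eve must answer with a state of $B$. Eve wins if the resulting infinite word is in $\LangS(\A)$ exactly when the cyclic sequence of Adam's choices is accepting for the $\Ai$-priorities. Adam has a strategy forcing Eve to distinguish all petals, which yields the bound. Equivalently, and more directly, if two petals $s_1,s_2$ land on the same state of $B$, then combining the loops $y_{1,2}$ and $y_{2,1}$ gives a cycle of $B$ that must be both accepting and rejecting, as in the flower lemma. This is the main obstacle. The difficulty is that the naive pigeonhole argument fails, because $B$ may reach the same state by different petal prefixes while being inconsistent only along infinite combinations. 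We need the full flower argument: for every subset of petals, the union of their loops must be accepting iff the minimal differing index behaves correctly. We would first try the inductive counting over the positions of the record, distinguishing by the top index that differs, exactly as in Colcombet and Zdanowski's proof for Safra-like LAR structures, adapting their loop words to the $\beta_{q'}$ and $\nu_{q'}$ tiles.

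Summing over $q\in\RA$ and adding one state for the empty record (a rejecting sink, needed whenever the empty tile is generated) gives at least $|\SR|$ states. Combined with \Cref{lem:det-accessible-horizontal}, this shows the bound is tight.
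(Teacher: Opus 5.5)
\textbf{There is a genuine gap: the core lower bound is never proved, and the mechanism you sketch for it does not work.} The surrounding bookkeeping is fine. Splitting $S_R$ by $s(0)$ works, although the distinctness of the residuals for different $q$ comes from $\beta_q^\omega$ and \Cref{lem:tile-om-rejecting-bis}, not from \Cref{lem:ordered-residuals} alone. But the key claim, that $B$ needs $(|\{q'\mid q'\sqsubseteq q\}|-1)!$ states in residual $q$, is only announced (``we would first try\dots'').

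Your ``more direct'' version cannot work. Two records with the same first entry have the same image, hence the same residual. So $B$ reaching one state $b$ after prefixes that lead $A_D$ to $s_1$ and to $s_2$ is not contradictory. Moreover, reading $y_{1,2}$ or $y_{2,1}$ from $b$ leads to states of $B$ you know nothing about, so no cycle of $B$ is formed.

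The contradiction has to come from a pigeonhole argument over a strategy, which is the paper's route. A deterministic parity automaton with $N'<|S_R|$ states for the language $L$ of $\mathcal{A}$ yields an Eve strategy with memory $N'$ in every $\overline{L}$-game: take its complement, form the product with the game, and use positional determinacy of parity games. The paper then builds a flower arena in which \emph{Eve} names a record $s$, and Adam answers with a word that is odd-dominated from $s$ or shrinks its image. With fewer than $|S_R|$ memory states some record $s_0$ is never named, and Adam repeatedly plays even loops at $s_0$. Your game is not well defined (what move is ``answering with a state of $B$''?), its roles are reversed, and $B$ never enters as Eve's memory. Your per-residual split would be compatible with this route if the arena were restricted to one image.

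\textbf{Moreover, the separating loops you rely on need not exist, so combining both orientations fails outright.} Take $Q=\{a\sqsubset b\sqsubset c\}$, $I=Q$, $\Gamma=\{\nu_a,\nu_b,\nu_c,\beta_a,\beta_b,\beta_c\}$, $X=(c,b,a)$ and $Y=(c,a,b)$.
\begin{itemize}
\item A word that is odd from $X$ must avoid $\beta_c$ and $\nu_c$, since both give priority $0$ from either record.
\item Among the remaining letters, only $\nu_a$ read at $Y$ produces priority $1$, and only $\nu_a$ leaves $Y$. So if such a word is also even from $Y$, it lies in $\{\beta_a,\beta_b,\nu_b\}^+$.
\item From $X$, these letters give priority $2$ ($\beta_b$, or $\nu_b$, which moves to $Y$) or $4$ ($\beta_a$). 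Priority $3$ only occurs at $Y$, which is reached only after a $2$, so the run from $X$ is even as well.
\end{itemize}
Hence no word is even from $Y$ and odd from $X$, i.e.\ your $y$ does not exist for $(s_1,s_2)=(Y,X)$. Also, forgetting index $i$ gives priority $2i-1$, not $2i+1$.

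Be warned that the same example breaks the paper's own answer $w_s$. For $s=X$ and $s_0=Y$ it is $\beta_a\nu_b$, whose run from $X$ has priorities $4,2$: index $1$ is a leader moving down with a Büchi transition, not a forgotten index. So $w_s\notin L_X$. In fact Eve wins that flower game with $4$ memory states, by tracking the residual and naming $X$ in the top one, although $A_D$ has $5$ reachable records.

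The bound $5$ is nevertheless true here. With one state per residual, $\beta_a,\beta_b,\nu_a,\nu_b$ would be self-loops on a single state, with priorities $p(\cdot)$. Since $(\beta_a\nu_a)^\omega,(\beta_b\nu_b)^\omega\notin L$ and $(\beta_a\nu_b)^\omega,(\beta_b\nu_a)^\omega\in L$, this forces $p(\nu_a)<p(\beta_a)<p(\nu_b)<p(\beta_b)<p(\nu_a)$, which is impossible. So a complete proof needs a separation mechanism that does not require both orientations. One option is a per-residual Zielonka-tree or Dziembowski--Jurdzi\'nski--Walukiewicz-type memory bound in which Adam chooses which green letter to show.
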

\begin{proof}
	The proof scheme is an adaptation from Colcombet and Zdanowski's proof of the minimality of Safra's construction \cite{Safra_flowers}.
	
	In this proof, we denote $|Q|$ as $n$ and $|\SR|$ as $N$. We show that there does not exist a "deterministic automaton" $A'$ with $N'< N$ states, recognising $\LangS(\A)$. If it were the case, for $G$ a "$\overline{\LangS(\A)}$-game" won by Eve, she could win it with memory of size $N'$ (by taking the product of the complement of $A'$ with $G$, she obtains a parity game, whose positional winning strategy \cite{Emerson_Jutla_parity} can act as a winning strategy over $G$, of size $N'$). We reason by contradiction, and build an "$\overline{\LangS(\A)}$-game" that Eve cannot win with memory less than $N$.
	
	Before doing so, we require some preliminary observations. Let $s\in \SR$ be a "record" reachable in $\Ai$, and let $L_s \subseteq \Gamma^+$ be the language of finite non-empty tile sequences $w$ such that the run in $\Ai$ starting from $s$ along $w$ is dominated by some odd priority (recall that we are in min-parity), and ends in some $s'$ such that $s(0) = s'(0)$.
	As $L_s$ is a regular language, there exists a finite "game" $G_s$, with labels in $\Gamma$, such that paths along $G_s$ describe all the words in $L_s$.
	Note that $L_s$ cannot be the empty language as soon as $n > 1$\footnote{If $n=1$, $\Ai$ has a single state, and is immediately minimal}, because for $q \neq s(0)$, $\nu_q \in L_s$.
	
	Let $s = (q_0 q_1 \dots q_{k-1})$ be a "record" in $\SR$, for $i<k$ we denote $t_s^{(i)}$ the word whose "skeleton" contains exactly the "horizontal transitions" $q_0 \trans{1} q_0, \dots, q_i \trans{1} q_i$ (that is, up to considering the tile $t$ obtained by product of these tiles, $\ska[t]$ is the identity function over the subset $S_i := \{q_0, \dots, q_i\}$). We observe that by definition, $t_s^{(i)} \in \Gamma^*$, as we can simply obtain it as the concatenation of the $\nu_q$ for $q \notin S_i$.
	
	We can define the "$\overline{\LangS(\A)}$-game" $G$ in the following fashion. After an Adam-controlled stem of length at most $2n$, allowing Adam to choose any finite prefix of length $\leq 2n$, $G$ has a single vertex $v$ controlled by Eve, with $\io$-transitions leading towards $N$ other vertices $v_s$, each corresponding to some $s\in \SR$. From such a $v_s$, Adam can either go to $G_s$, or play any finite word $w$ such that $\img(\De(s,w)) \subsetneq \img(s)$. Whatever his choice, the play then ends up in $v$.
	A play is winning for Eve if the word described by this play is in $\overline{\LangS(\A)}$, as $G$ is a "$\overline{\LangS(\A)}$-game".
	
	\begin{figure}[!htb]
		\centering
		\includegraphics[width=0.6\textwidth]{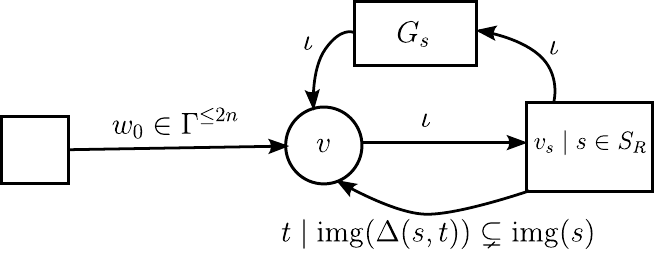}
		\label{fig:jeu-petales-horizontal}
		\caption{The "game" $G$, with one petal represented.}
	\end{figure}
	
	We now argue that this game is won by Eve: at any point during the play, when Eve has to make a choice (and is thus in $v$) after reading a prefix $w$, she goes to the vertex $v_s$, for $s := \De(s_I, w)$.
	Adam can only choose to decrease $\img(s)$ a finite number of times, as the only time $\img(s)$ can increase against that strategy for Eve is during the initial prefix (as the loop described by the $L_s$ preserves the image). Therefore, looking past the last time this image decreases, the resulting run is "parity rejecting" in $\Ai$, as a concatenation of segments each dominated by some odd priority. The corresponding word is therefore in $\overline{\LangS(\A)}$, and the play is indeed won by Eve.
	
	Let $\sigma$ be a "strategy" for Eve in $G$, of memory strictly smaller than $N$. Then, by a pigeonhole argument, there is some $s_0 = (q_0 \dots q_k)\in \SR$ such that no run compatible with $\sigma$ ever visits $v_{s_0}$. We denote $I_0 = \img(s_0)$.
	Let us build an Adam strategy winning against $\sigma$. The idea behind this strategy is that Adam will build a word whose run in $\Ai$ passes infinitely often via $s_0$. As Eve cannot choose $v_{s_0}$, she will choose at each step $j$ some different $v_{s_j}$. Adam will then either choose any word decreasing the image allowing to come back to $s_0$, if $\img(s_0)\subsetneq \img(s_j)$, or choose a word in the chosen $L_{s_j}$ that would be dominated by some even priority if seen from $s_0$, and end in $s_0$. The corresponding run in $\Ai$ is thus accepting, and the resulting word is in $\LangS(\A)$.
	
	Note that in $\Ai$, $s_0$ can be reached from $s_I$ whilst reading a word $w_0\in \Gamma^{\leq n+1}$: it suffices to define $w_0 = u_0 \nu_{q_1}\dots \nu_{q_{k-1}}$, where $u_0$ is a word witnessing the fact $q_0 \in \RA$: that is $q_I \trans{\skel(u_0)} q_0$, with $q_I := \max_{\stateleq} (I)$. Note that $u_0$ can always be chosen of length at most $n-1$.
	
	Let $v_s \neq v_{s_0}$ be a vertex chosen by Eve at some point during the play, after reading a prefix $w$ such that $\De(s_I, w) = I_0$. Let us prove that Adam can play a finite word $w'$ such that $\De(s_0, w') = s_0$ with an even corresponding priority.\\
	If $\img(s)\supsetneq I_0$, Adam can play the word $u_s := t_{s_0}^{(k-1)} \beta_{q_0}$. We observe that $\De(s_0, u_s) = s_0$, and thus $u_s$ strictly decreases the "record"'s image, as it ends up equal to $I_0$: $u_s$ is thus a legitimate move for Adam. Finally, as $s(0) \trans{u_s : 0} q_0 = s_0(0)$, the corresponding path in $\Ai$ has priority 0.\\
	Else, if $\img(s)\subsetneq I_0$, Adam can still play the word $u_s$. We easily observe that $u_s \in L_s$, and that $\De(s_0,t')=s_0$, once again with associated priority 0.\\
	Finally, if $\img(s) = I_0$, let us build $w_s \in L_{s}$ such that $\De(s_0, w_s) = s_0$ and the corresponding path is dominated by an even priority. Let $i \geq 1$ be the first index such that $s_0(i) \neq s(i)$ (recall that $\img(s) = \img(s_0)$ and thus $s_0(0) = s(0)$). We build the word $w_s= \beta_{q_i} t_{s_0}^{(i)} t_{s_0}^{(i+1)} \dots t_{s_0}^{(k-1)}$. We now argue that $w_s \in L_{s}$, and is thus a legitimate move for Adam. Indeed, the index $i$ is "forgotten", yet the first Büchi transition appears at index $s^{-1}(s_0(i)) > i$. Therefore, by minimality of $i$, the parity produced in $\Ai$ is thus $2i-1$, odd. Therefore, $w_s\in L_s$. The fact that both runs end in $s_0$ follows from the reset property of "records": since $q_0, \dots, q_{i-1}$ are preserved and all states of higher index are reintroduced in the correct order, the determinization reconstructs exactly $s_0$. Therefore, $\De(s_0, w_s) = s_0$, and this path is dominated by the priority $2i$: along this path no index $\leq i$ is "forgotten", and $i$ witnesses a Büchi transition.
	
	This defines a strategy for Adam that consists in playing $w_0$, then either $u_s$ or $w_s$ depending on the "record" $s$ currently chosen by Eve. This describes a cycle in $\Ai$ around $s_0$, dominated by an even priority. The resulting word is thus in $\LangS(\A)$. The strategy $\sigma$ therefore is not a winning strategy, contradiction, and Eve needs at least $N$ memory to win this game.
\end{proof}

From the two previous lemmas, we deduce that $\Ai$ necessarily has accessible state set $\SR$, and we thus obtain the minimality property.

\begin{lemma}\label{lem:det-optim}
	Let $\A = (Q,\stateleq, I, \Gamma)$ be an "ordered Büchi automaton" such that $\Gamma$ is "Rabin-complete". Then any "deterministic parity automaton" recognising $\LangS(\A)$ has at least as many states as $\Ai$.
\end{lemma}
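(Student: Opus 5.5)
The plan is to combine \Cref{lem:det-accessible-horizontal} with \Cref{lem:det-horizon-optim}. By the convention fixed in \Cref{sec:defs}, automata are trimmed to their accessible states, so the number of states of $\Ai$ is the number of "records" accessible from $s_I$. \Cref{lem:det-accessible-horizontal} gives the upper bound: every accessible "record" lies in $\SR$, so $\Ai$ has at most $|\SR|$ states. \Cref{lem:det-horizon-optim} gives the lower bound: since $\Gamma$ is "Rabin-complete", any "deterministic parity automaton" recognising $\LangS(\A)$, which by \Cref{lem:Ai-correct,lem:Ai-complete} is also the language of $\Ai$, has at least $|\SR|$ states. Chaining the two bounds, every "deterministic parity automaton" $A'$ for $\LangS(\A)$ satisfies $|Q_{A'}| \geq |\SR| \geq |S_{\Ai}|$, which is the claim.

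To make the statement sharper, and to match the remark that $\Ai$ ``necessarily has accessible state set $\SR$'', I would also prove the reverse inclusion: every $s\in\SR$ is accessible. This is not needed for the inequality, but it shows that $\Ai$ is exactly of size $|\SR|$. The witness is the one already built in the proof of \Cref{lem:det-horizon-optim}. Take $s=(q_0,\dots,q_{k-1})$ with $q_0\in\RA$, and pick $u_0$ with $\max_{\stateleq}(I)\trans{\skel(u_0)}q_0$. Then append tiles $\nu_q$ and $t_s^{(i)}$, available by "Rabin-completeness", so that forgetting and resetting rebuilds the tail of the record in the order $q_1,\dots,q_{k-1}$. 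For the empty "record" in $\SR$, take a word whose product is the empty tile.

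The only delicate point is this reconstruction step. One has to check that the resets, which are assigned in decreasing order by the map $f$, combined with the order-preserving tiles $t_s^{(i)}$, can produce an arbitrary permutation of the tail. I would argue by induction on $i$: apply $t_s^{(i)}$ to keep $q_0,\dots,q_i$ as preserved leaders and collapse the other states, so that the next reset inserts them in decreasing order with $q_{i+1}$ placed right after the prefix. This is the same mechanism the proof of \Cref{lem:det-horizon-optim} invokes when it concludes $\De(s_0,w_s)=s_0$. Since it only strengthens the result, the main inequality does not depend on it.
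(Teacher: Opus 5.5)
Your main argument is the paper's own proof: \Cref{lem:det-accessible-horizontal} puts every accessible record of $\Ai$ in $\SR$, \Cref{lem:det-horizon-optim} forces every deterministic parity automaton for $\LangS(\A)$ to have at least $|\SR|$ states, and chaining the two bounds gives the claim. You can drop your optional second part, because applying the lower bound to $\Ai$ itself (which recognises $\LangS(\A)$) already shows its accessible set is all of $\SR$; as sketched it is also imprecise, since resets are always inserted in decreasing order, so $q_{i+1}$ reaches position $i+1$ not through a reset but by becoming the image of a preserved leader when $t_s^{(i+1)}$ is applied.
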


From this result, we notably obtain that the procedure yields a parity automaton of minimal size for the $n$-complete Rabin, that is, the Rabin language with $n$ Rabin pairs, over the alphabet of size $3^n$ such that all the letters have a different behaviour.

	\appendix
	

\section{Proofs from \Cref{sec:Buchi tiles}}

\lemSiimonoid*
\begin{proof}
	By definition, $\ti \in \Sii$. 
	We first observe that $\ti = \{(p,c,q) \mid \exists q', (q',1,q') \transleq (p,c,q)\}$ is actually the set $\{(p,c,q) \mid q \stateleq p \text{and if } p = q \text{ then } c=1\}$. Indeed, for all $q \statel p$ and $c\in \{0,1\}$, $(p,1,p) \transleq (p,c,q)$, and there is no $q'$ such that $(q', 1, q') \transleq (q,c, p)$, nor such that $(q',1,q') \transleq (p,0,p)$.
	We then show that it is a neutral element for $\cc$ in $\Sii$.
	As $\unit \subseteq \ti$, we immediately observe that for all $t \in \Sii$, $t \subseteq \ti \cc t$ and $t \subseteq t \cc \ti$.
	Then, let $(p,c,q) \in \ti \cc t$. By definition, there exists $p' \in Q$ and $c_1, c_2 \in \inter{2}$ such that $(p,c_1, p') \in \ti$, $(p', c_2, q)\in t$ and $c = \min(c_1, c_2)$. By the previous observation, $p' \stateleq p$, and in case of equality, $c_1 = 1$. Therefore, $(p',c_2,q) \transleq (p,c,q)$, as $p' \stateleq p$, and in case of equality $c = c_2$ and thus $(p',c_2,q) = (p,c,q)$. By upward closure, we thus observe that $(p,c,q) \in t$. We similarly show that $t = t \cc \ti$.
	We finally prove that $\cc$ is internal to $\Sii$: for all $t,t'\in \Sii$, for all $\delta = (p,c,p') \in t \cc t'$ whose existence is witnessed by the transitions $(p,c_1, q)\in t$ and $(q, c_2, p') \in t'$, let us show that $t \cc t'$ contains all transitions $\delta^\dagger \transgeq \delta$. 
	Let $\delta^\dagger := (r,c_r, r')$ be such a transition. If $p \statel r$, then $\delta_1 := (r, c_r, q) \transgeq (p, c_1, q)$ and thus $\delta_1$ belongs to $t$ by upward closure. Similarly, as $r' \stateleq p'$, $\delta'_1 := (q, 1, r') \transgeq (q, c_2, p')$ and thus $\delta'_1$ belongs to $t'$. Therefore, $\delta^\dagger = (r, \min(c_r, 1), r') \in t \cc t'$, as witnessed by $\delta_1$ and $\delta'_1$. As $\min(c_r, 1) = c_r$, we thus obtain that $\delta^\dagger \in t \cc t'$. We reason symmetrically if $r' \statel p'$. Finally, if $r = p$ and $r' = p'$, then $c_r \geq c$, which concludes the proof by upward closure.
\end{proof}

\lemskel*
\begin{proof}
	Let $t$ be a "tile" in $\Sii$. If we project its "transitions" on a $Q \times Q$ plane (that is, a "transition" $(p,c,q)$ is projected to the point $(p,q)$), $t$ describes an upper-left-closed set, by upward closure of $t$ for ${\transleq}$. There thus exists a subset $S$ of incomparable points, of size at most $|Q|$, that can generate this whole set -- notably, they all have distinct first coordinate, and symmetrically for the second coordinate.
	If we recover the priority information, we observe that the set $s := \{(p,c,q) \mid (p,q)\in S \text{ and } \nexists c' < c, (p,c',q) \in t\}$ suffices to generate all the transitions in $t$, as we always consider the transition with lowest priority for each pair of states. By definition of $S$, $s\subseteq t$. We denote $\hat{t}$ as the upward closure of $s$, let us show that $t = \hat{t}$.
	
	We now show that $t \subseteq \hat{t}$. Indeed, for all $(p,c,q)\in t$, by definition of $S$, there is some $(p',q')\in S$ such that $p' \stateleq p$ and $q \stateleq q'$. Then, if either $p' \statel p$ or $q \statel q'$, as $(p', 1, q') \in \hat{t}$ (either as it directly belongs to $s$, or by upward closure), we get that $(p',1,q') \transleq (p,1,q)$ and thus $(p,1,q) \in \hat{t}$ by upwards-closure. Else, $p=p'$ and $q=q'$, in which case either $(p,0,q)\in s$ and thus $(p,c,q)\in \hat{t}$ by upward closure, or $(p,1,q)\in s$ and thus $c = 1$ by definition, which concludes as $s\subseteq \hat{t}$.\\
	Conversely, for $(p,c,q)\in \hat{t}$, if $(p,c,q) \in s$ then $(p,q)\in S$ and we easily observe that $c$ is such that $(p,c, q)\in t$. Else, there exists some $(p',c',q')\in s$ such that $(p',c',q')\transleq (p,c,q)$. By the previous argument, $(p',c',q')\in t$, and by upward closure, we thus obtain that $(p,c,q)\in t$.
	
	We finally observe the minimality of $s$ as a generator, as all the transitions of $s$ are incomparable: there is no strict subset of $s$ that still generates the missing transitions (which belong to $t$).
\end{proof}
	

\knowledgenewcommand\tA[1][A]{\cmdkl{\mathrm{t_{#1}}}}

\section{Proof of \Cref{thm:NPA-to-oB}}\label{sec:NPA-to-oB}

In this section we describe the transformation from "$\eps$-complete automata" to "ordered Büchi automata" described in \Cref{thm:NPA-to-oB}. It can be seen as a generalization of the construction described in \Cref{lem:Buchi-to-oB}. Once again, we will describe the behaviour of an "$\eps$-complete automaton" through the prism of an "ordered Büchi automaton". 
The main idea is that we collapse the parity condition into a binary condition, which forces us to encode priority information into the state space. Doing so requires more states to convey this expressive power and heavily relies on the tree structure defined by the "$\eps$-transitions" to describe the potential runs.

\AP Let $A = (Q_A, \Sigma, I_A, \Delta)$ be an "$\eps$-complete automaton" over some alphabet $\Pi$, of "index" $\inter{2k+2}$ for some $k\in \NN$. We assume without loss of generality that $I_A$ is downward-closed for the total preorder $\trans{\eps:2k+1}$.
We recall that for all $c$ odd in $\inter{2k+2}$, $\trans{\eps : c}$ defines a total preorder over $Q_A$, by "$\eps$-completeness". We denote the corresponding equivalence classes as ""$c$-classes"", or, more generally, as \reintro*"odd-classes".
We define its ""$\eps$-tree"" $\intro*\tA$, which makes explicit the stratification induced by its odd-priority "$\eps$-transitions". It corresponds to the ordered tree of depth $k$ that satisfies the following constraints:
\begin{asparaitem}
	\item It has, at depth $d>0$, a node $c_j$ for each "$2d-1$-class" $C_j$,
	\item each $c_j$ at depth $d>1$ is such that $C_j \subseteq C'$, with $C'$ the "odd-class" corresponding to its parent, and
	\item the children of a node at depth $d$ are ordered from left to right along the $\trans{\eps:2d}$ order.
\end{asparaitem}
The second condition can be satisfied as each $\trans{\eps:2d+1}$ refines $\trans{\eps:2d-1}$ (by definition of "$\eps$-complete automata"), and the third condition can be satisfied as $\trans{\eps:2d}$ corresponds to the strict variant of $\trans{\eps:2d+1}$ (this allows the children of a node to be linearly ordered).
An example of such a construction can be seen in \Cref{fig:automate-epsilon} and \Cref{fig:arbre-epsilon}.
\begin{figure}[!htb]
	\begin{minipage}[b]{0.55\textwidth}
		\includegraphics[width=\linewidth]{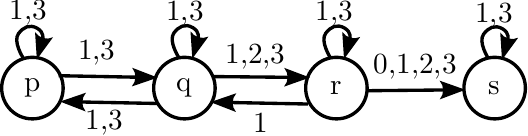}
		\caption{The $\eps$-transitions of some $\eps$-complete automaton (we do not represent the transitions that can be deduced by transitivity).}
		\label{fig:automate-epsilon}
	\end{minipage}
	\hfill
	\begin{minipage}[b]{0.35\textwidth}
		\includegraphics[width=\linewidth]{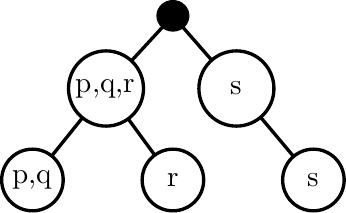}
		\caption{The corresponding $\eps$-tree. Its order ${\stateleq}$ is $\{p,q,r\}_1 \stateg \{p,q\}_2 \stateg \{r\}_2 \stateg \{s\}_1 \stateg \{s\}_2$, where the index corresponds to the node's depth.}
		\label{fig:arbre-epsilon}
	\end{minipage}
\end{figure}

Let us now define the desired "ordered Büchi automaton" $\A$, before establishing that its "language@@system" is $\Lang(A)$.

We consider the set $\Q$ of nodes in $\tA$ (root excepted). We will sometimes identify a node $Q$ at depth $d$ with its associated "$2d-1$-class". We define ${\stateleq}$ as the linear order over $\Q$ induced by the depth-first left-to-right traversal of $\tA$. Its greatest element is the highest $1$-class for $\trans{\eps:1}$ (from which all vertices are reachable by reading a $\trans{\eps:1}$). Once again, an example of this order can be found in \Cref{fig:arbre-epsilon}.
The ""initial set"" $I$ is defined as the downwards-closed set whose maximal element is the maximal $1$-class containing an element in $I_A$ (and, by downward closure of $I_A$, all the elements of classes in $I$ belong to $I_A$). Similarly as in \Cref{lem:Buchi-to-oB}, we will define below some function $f_t : \Pieps \to \Si$, and the morphism $f$ from $\Pieps$ to $\Sii$ is such that $f(a)$ is the upward closure of $f_t(a)$. Before defining $f_t$, we require one additional definition.

\knowledgenewrobustcmd\prioleq{\mathrel{\cmdkl{\unlhd}}}
\knowledgenewrobustcmd\priogeq{\mathrel{\kl[\prioleq]{\unrhd}}}
\knowledgenewrobustcmd\priolt{\mathrel{\kl[\prioleq]{\lhd}}}

\AP We recall what is often described as the ""preference order"" over a set of priorities. Given an index $\inter{2d+2}$, we define it as the total order $0 \intro*\priolt 2 \priolt 4 \dots 2d \priolt 2d+1 \priolt 2d-1 \dots 3 \priolt 1$. Intuitively, it corresponds to the order in which the existential player would rather see a priority in a game (or parity automaton).

For $a\in \Pieps$, for a depth $d\in[1,k]$, and for two nodes $Q,Q'\in \Q$ of depth $d$ in $\tA$, $Q \trans{f_t(a):1} Q'$ if there exists $q\in Q, q'\in Q'$ such that $q \trans{a:c} q'$ with $c\intro*\prioleq 2d-1$. If furthermore $c \prioleq 2d-2$, we also have $Q \trans{f_t(a):0} Q'$.
We then extend $f$ into a morphism over the free word monoid in the standard fashion.

The idea behind this transition is that a node $Q$ of depth $d$ will see a Büchi transition towards another node $Q'$ of depth $d$ over the tile $f_t(a)$ if and only if there is some path from the first class to the second one of even priority at most $2d-2$. Because both $Q$ and $Q'$ are $2d-1$-classes, a run can always move from an arbitrary $q_0\in Q$ to $q$, following the transition $q_0\trans{\eps:2d-1}q$. Then the $a$-transition of (even) priority $\prioleq 2d-2$ is taken, and finally another $\eps$-transition of priority $2d-1$ leads to the desired element of $Q'$.

\begin{remark}\label{rem:f-eps-is-ti}
	We observe that $f(\eps) = \ti$. Indeed, $f_t(\eps)$ contains all the transitions of the shape $Q \trans{f_t(\eps):1} Q$, and no strictly increasing transition over ${\stateleq}$ nor transition of the shape $Q \trans{f_t(\eps):0} Q$. Therefore $f(\eps) \in \Sii$ does not contain any transition $Q \trans{f(\eps):0} Q$ yet contains the upward closure of $\unit$: it is thus equal to $\ti$.
\end{remark}

In order to prove the language equivalence, we will show that runs in $A$ can be simulated in $\A$, and conversely. We first prove the two following lemmas, that give a better understanding of how runs in $A$ are mimicked in the "ordered Büchi automaton".
\begin{lemma}\label{lem:NPA-to-oB-access}
	Let $q,q'$ be two states in $Q_A$, and $u\in \Pieps^*$ be some finite word such that $q \trans{u} q'$. Let $Q, Q'$ be the "$1$-classes" such that $q\in Q, q'\in Q'$. Then $Q \trans{f(u)} Q'$.
\end{lemma}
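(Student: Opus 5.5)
The plan is to argue by induction on the length of $u$, using the facts that $f$ is extended to $\Pieps^*$ as a morphism and that a path in a product of tiles is just a concatenation of transitions of the factors. The statement only asks for \emph{some} transition from $Q$ to $Q'$ in $f(u)$, with no constraint on its priority. So I only need to track the $1$-classes at depth $1$ of the "$\eps$-tree" $\tA$, and can ignore priorities and deeper nodes.

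\textbf{Base case} ($u=\io$). Then $q=q'$, so $Q=Q'$. The empty word is sent to the neutral element. With the monoid $(\Sii,\cc,\ti)$ of \Cref{lem:Sii-monoid}, this is $\ti$, which contains $(Q,1,Q)$. If one prefers $\unit$ as the image of $\io$, it also contains $(Q,1,Q)$. Either way $Q\trans{f(\io)}Q$.

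\textbf{Key step} (single letters). Let $a\in\Pieps$ and $q\trans{a:c}q'$ for some priority $c\in\inter{2k+2}$. I will use the observation that $1$ is the maximum of the "preference order", so $c\prioleq 1 = 2\cdot 1-1$ for every $c$. By the definition of $f_t$ at depth $d=1$, with witnesses $q\in Q$ and $q'\in Q'$, we get $Q\trans{f_t(a):1}Q'$. Since $f(a)$ is the upward closure of $f_t(a)$, it contains $f_t(a)$, hence $Q\trans{f(a)}Q'$. The same argument covers $a=\eps$, which is also consistent with \Cref{rem:f-eps-is-ti}.

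\textbf{Inductive step.} Write $u=u'a$ with $a\in\Pieps$, and decompose the path as $q\trans{u'}r\trans{a}q'$. Let $R$ be the $1$-class of $r$. By the induction hypothesis $Q\trans{f(u')}R$, and by the single-letter case $R\trans{f(a)}Q'$. By the definition of the product $\cc$, this gives $(Q,c,Q')\in f(u')\cc f(a)=f(u)$ for some $c$, that is, $Q\trans{f(u)}Q'$.

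I do not expect a real obstacle here. The only points to check carefully are that every priority is $\prioleq 1$, which is what makes the depth-$1$ clause of $f_t$ always fire, and that the morphism extension of $f$ really computes the tile product, so that paths compose.
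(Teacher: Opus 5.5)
Your proof is correct and follows essentially the same route as the paper: induction on the length of $u$, where the single-letter step rests on $1$ being maximal for the preference order, so the depth-$1$ clause of $f_t$ always produces $Q_0 \to Q'$ in $f_t(a)\subseteq f(a)$, and this is then composed with the induction hypothesis through the tile product. Your explicit treatment of the base case $f(\io)$ (via $\ti$ or $\unit$) just spells out what the paper calls immediate.
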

\begin{proof}
	We prove this result by induction on the length of $u$.
	The base case is immediate.
	Else, for $u = u_0 a$ with $a \in \Pieps$, as $q \trans{u} q'$, there exists an intermediary state $q_0$ such that $q \trans{u_0} q_0 \trans{a} q'$. By induction hypothesis, for $Q_0$ the "$1$-class" containing $q_0$, $Q \trans{f(u_0)} Q_0$.
	Then, as $q_0 \trans{a:c} q'$ with $c \prioleq 1$ (as $1$ is maximal for $\prioleq$), $Q_0 \trans{f_t(a)} Q'$. 
	Therefore, as $f_t(a) \subseteq f(a)$, we still have $Q_0 \trans{f(a)} Q'$. Therefore $Q \trans{f(u_0)} Q_0 \trans{f(a)} Q'$, hence $Q \trans{f(u)} Q'$.
\end{proof}

We now prove that if a word admits a finite path dominated by an even priority in $A$, this property is mirrored in the "ordered Büchi automaton".
\begin{lemma}\label{lem:NPA-to-oB-accept}
	Let $q,q'$ be two states in $Q_A$, $2i\in \inter{2k+2}$ be some even "priority", and $u\in \Pieps^+$ be some finite word such that $q \trans{u:2i} q'$. Then, for $Q, Q'$ the "$2i+1$-classes" such that $q\in Q$ and $q'\in Q'$, $Q \trans{f(u):0} Q'$.
\end{lemma}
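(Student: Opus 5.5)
The plan is to argue letter by letter along the path, exactly as in the proof of \Cref{lem:NPA-to-oB-access}, but now working at the depth $d := i+1$ of $\tA$. At that depth the nodes are precisely the "$2i+1$-classes" ($2d-1 = 2i+1$). Two facts then drive the argument. First, non-Büchi transitions of $f_t(a)$ at depth $d$ require a priority $\prioleq 2i+1$. Second, Büchi transitions at depth $d$ require a priority $\prioleq 2d-2 = 2i$.

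Write $u = a_1\cdots a_m$ with $m \geq 1$, and fix a path $q = q_0 \trans{a_1:c_1} q_1 \cdots \trans{a_m:c_m} q_m = q'$ in $A$ with $\min_j c_j = 2i$. Let $Q_j$ be the "$2i+1$-class" of $q_j$, so that $Q_0 = Q$ and $Q_m = Q'$. The $\eps$-letters occurring in $u$ cause no difficulty, since $f_t$ is defined on all of $\Pieps$.

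\emph{Step 1.} Every $c_j$ satisfies $c_j \prioleq 2i+1$. Indeed $c_j \geq 2i$, and in the "preference order" $0 \priolt 2 \priolt \dots \priolt 2k \priolt 2k+1 \priolt \dots \priolt 1$ every even priority $\geq 2i$ and every odd priority $\geq 2i+1$ lies at or below $2i+1$. By definition of $f_t$ at depth $d$, this gives $Q_{j-1} \trans{f_t(a_j):1} Q_j$ for every $j$. Since $f_t(a_j) \subseteq f(a_j)$, the same transition lies in $f(a_j)$.

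\emph{Step 2.} Choose an index $j^*$ with $c_{j^*} = 2i$. Since $2i \prioleq 2d-2$, the definition of $f_t$ yields the Büchi transition $Q_{j^*-1} \trans{f_t(a_{j^*}):0} Q_{j^*}$, which again lies in $f(a_{j^*})$.

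\emph{Step 3.} Concatenate these transitions using the definition of the "product" $\cc$, which takes the minimum of priorities, and the fact that $f$ is a morphism. This gives $Q = Q_0 \trans{f(u):0} Q_m = Q'$.

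The only step needing care is the bookkeeping about depths. One must check that the "$2i+1$-classes" really are nodes of $\tA$ for every even $2i$ in the index. The extreme case $2i = 2k$ (the deepest odd relation $\trans{\eps:2k+1}$) has to be matched with the deepest level of the tree. If the tree as defined stops one level short, the argument is applied verbatim at the deepest level carrying the $\trans{\eps:2k+1}$-classes. Once this identification is settled, the rest is a direct unfolding of the definitions of $f_t$, of $\prioleq$, and of the tile product. No upward-closure argument is needed, because every transition used already lies in $f_t(a_j) \subseteq f(a_j)$.
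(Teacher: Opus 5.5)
Your proof is correct and takes essentially the paper's route. The paper argues by induction on $|u|$, peeling off the last letter, and relies on exactly your two observations: a letter transition of priority $\geq 2i$ yields a non-Büchi transition of $f_t(a)$ between the corresponding $(2i+1)$-classes, and one of priority exactly $2i$ yields a Büchi one. Your letter-by-letter unfolding is, if anything, slightly cleaner: the paper's inductive step applies the induction hypothesis to $u_0$ even when the minimal priority along $u_0$ exceeds $2i$, where only your Step~1 is really needed. Your remark on the depth of the $\varepsilon$-tree also correctly flags an off-by-one in the paper's indexing that its own proof leaves implicit.
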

\begin{proof}
	We once more proceed by induction on the length of $u$.
	If $u$ is a single letter $a \in \Pieps$, then $q \trans{a:2i} q'$. As $2i \prioleq 2i$, $Q \trans{f_t(a):0} Q'$, and we similarly get that $Q \trans{f(a):0} Q'$.
	Else, $u$ is of the shape $u = u_0 a$ with $u_0\in \Pieps^+, a \in \Pieps$.
	As $u_0$ is non-empty and $q \trans{u:2i} q'$, there exists an intermediary state $q_0$ such that $q \trans{u_0:c} q_0 \trans{a:c'} q'$ with $c,c' \geq 2i$. Note that (at least) one of $c$ or $c'$ is exactly $2i$. By induction hypothesis, for $Q_0$ the $2i+1$-class containing $q_0$, $Q \trans{f(u_0)} Q_0$, and this path has a Büchi transition if $c = 2i$.
	Then, as $q_0 \trans{a:c'} q'$ with $c' \geq 2i$, we observe that $c' \prioleq 2i+1$ and thus $Q_0\trans{f_t(a)} Q'$, with this transition being Büchi if $c'=2i$. Finally, $f_t(a) \subseteq f(a)$, we still have $Q \trans{f(u_0)} Q_0 \trans{f(a)} Q'$. As at least one of $c$ or $c'$ is equal to $2i$, at least one of these paths is dominated by a Büchi transition, and thus $Q \trans{f(u):0} Q'$.
\end{proof}

Conversely, we also observe some simulation properties in the other direction, albeit less directly. Notably, we observe that the transitions taken along $\ti$ in the "ordered Büchi automaton" correspond to the $\eps$-transitions in $A$.

\begin{lemma}\label{lem:empty-oB-to-NPA}
	For $Q,Q' \in \Q$ two "odd-classes", if $Q' \stateleq Q$, then for all $q\in Q$ and $q'\in Q'$, $q \trans{\eps} q'$.
\end{lemma}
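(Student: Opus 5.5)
The plan is to reduce the statement to the tree structure of $\tA$ together with the defining properties of "$\eps$-complete automata" (\Cref{def:eps-complete}). First I fix the convention that has to be read off carefully. The order ${\stateleq}$ on $\Q$ is the depth-first, left-to-right traversal order with earlier nodes being greater. So a parent is above all its descendants, and for two siblings the left one, together with its whole subtree, is above the right one and its whole subtree. The example of \Cref{fig:arbre-epsilon} confirms this reading: $\{p,q,r\}_1 \stateg \{p,q\}_2 \stateg \{r\}_2 \stateg \{s\}_1$. Since $\eps$-transitions with a fixed priority are transitive and priorities are only lowered by stronger relations, it is enough to exhibit \emph{some} $\eps$-transition from $q$ to $q'$.

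Take $Q' \stateleq Q$, of depths $d'$ and $d$. I split according to the position of $Q'$ relative to $Q$ in the tree.

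\emph{Case 1: $Q'$ equals $Q$ or is a descendant of $Q$.} By the second tree condition, $Q' \subseteq Q$ as sets of states. Hence $q$ and $q'$ lie in the same $2d-1$-class. Since $\trans{\eps:2d-1}$ is a total preorder, in particular reflexive and symmetric on a class, we get $q \trans{\eps:2d-1} q'$.

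\emph{Case 2: $Q'$ is not a descendant of $Q$.} Then $Q$ is not a descendant of $Q'$ either. Indeed, in that case we would have $Q \statel Q'$, since a proper ancestor comes first in the traversal; and equality is already covered by Case 1. Let $N$ be the lowest common ancestor of $Q$ and $Q'$, at depth $e \geq 0$, where the root has depth $0$. Let $C$ and $C'$ be the distinct children of $N$ (at depth $e+1$) whose subtrees contain $Q$ and $Q'$ respectively. Because $Q' \statel Q$ in the depth-first order, $C$ is to the left of $C'$. By the third tree condition, this means that $C$ lies strictly above $C'$ for the $\trans{\eps:2e+2}$ order, which is the strict variant of $\trans{\eps:2e+3}$ and is used to order the children of a depth-$e$ node. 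By the second tree condition, $q \in Q \subseteq C$ and $q' \in Q' \subseteq C'$. Since $C$ and $C'$ are distinct $2e+1$-classes with $C$ above $C'$, for all $c \in C$ and $c' \in C'$ we have $c \trans{\eps:2e} c'$: by item 2 of \Cref{def:eps-complete}, $c \trans{\eps:2e} c'$ holds exactly when $c' \trans{\eps:2e+1} c$ fails. In particular $q \trans{\eps} q'$.

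The main obstacle is purely a matter of conventions rather than mathematics. One has to check that ``left-to-right along $\trans{\eps:2d}$'' places the source side of the strict $\eps$-relation on the left, and that the indexing of depth against priority matches (children of a depth-$e$ node are $2e+1$-classes, separated by $\trans{\eps:2e}$). I would verify both points against \Cref{fig:automate-epsilon,fig:arbre-epsilon}. If the orientation turned out reversed, the fix would be to swap the roles in Case 2, since the ordering of $\Q$ was designed so that $\stateleq$ agrees with $\trans{\eps:1}$ on $1$-classes.
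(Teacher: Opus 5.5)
Your proof is correct and follows the paper's argument. You make the same case split: either $Q'$ lies below $Q$, so $q,q'$ are in one odd class and an odd $\eps$-transition links them, or $Q'$ lies to the right of $Q$. In that second case your lowest-common-ancestor argument with item~2 of \Cref{def:eps-complete} spells out the strict even transition that the paper only attributes to $\eps$-completeness. One index slip: children of a depth-$e$ node are ordered by $\trans{\eps:2e}$, the strict variant of $\trans{\eps:2e+1}$, not by $\trans{\eps:2e+2}$ and $\trans{\eps:2e+3}$ as you first write. Your next sentence uses the correct indices, so nothing breaks.
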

\begin{proof}
	We denote $2i+1$ and $2j+1$ the odd priorities of the classes $Q$ and $Q'$ respectively.
	Let $q\in Q, q'\in Q'$ be states in the corresponding classes. As $Q' \stateleq Q$, by definition of ${\stateleq}$ as the order in the left-to-right depth-first traversal of $\tA$, $Q'$ appears at the right of $Q$ or below it in $\tA$. In either case, by "$\eps$-completeness", for all $q\in Q$ and $q'\in Q'$, there exists an odd $\eps$ transition $q \trans{\eps:c} q'$ (if $Q'$ is below $Q$, it suffices to take $c=2i+1$, as $Q$ is an equivalence class for $\trans{\eps:2i+1}$). 
\end{proof}

We can now obtain an equivalent of \Cref{lem:NPA-to-oB-access} in the converse direction:
\begin{lemma}\label{lem:oB-to-NPA-access}
	Let $Q,Q' \in \Q$ be two "odd-classes". For all non-empty word $u = (u_i)_{i\leq N}\in\Pi^+$ such that $Q \trans{f(u)} Q'$, for all $q\in Q, q'\in Q'$, $q \trans{u'} q'$ where $u' := (\eps u_i \eps)_{i\leq N}$.
\end{lemma}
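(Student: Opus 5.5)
The plan is an induction on the length $N+1$ of $u$, mirroring the proof of \Cref{lem:NPA-to-oB-access}. All the work is in the single-letter case; the inductive step only concatenates paths.

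\textbf{Base case ($u = a \in \Pi$).} Suppose $Q \trans{f(a)} Q'$, say via a transition $(Q,c,Q') \in f(a)$.
\begin{enumerate}
\item Since $f(a)$ is by definition the upward closure of $f_t(a)$ for ${\transleq}$, there is a generating transition $(P,c_0,P') \in f_t(a)$ with $(P,c_0,P') \transleq (Q,c,Q')$. By definition of ${\transleq}$, this gives $P \stateleq Q$ and $Q' \stateleq P'$.
\item By definition of $f_t$, the nodes $P$ and $P'$ are at the same depth $d$. Moreover there exist $p \in P$ and $p' \in P'$ with $p \trans{a:c'} p'$ for some priority $c' \prioleq 2d-1$. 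Only the existence of this $a$-transition matters here, not its priority.
\item Fix arbitrary $q \in Q$ and $q' \in Q'$. Since $P \stateleq Q$, \Cref{lem:empty-oB-to-NPA} gives $q \trans{\eps} p$. Since $Q' \stateleq P'$, the same lemma gives $p' \trans{\eps} q'$. The equality cases $P = Q$ and $P' = Q'$ are covered by that lemma, which handles $Q' \stateleq Q$ including equality, via the odd $\eps$-transitions inside a class.
\item Chaining these, $q \trans{\eps} p \trans{a} p' \trans{\eps} q'$, which is exactly a path over $\eps a \eps$.
\end{enumerate}

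\textbf{Inductive step ($u = u_0 a$ with $u_0 \in \Pi^+$).}
\begin{enumerate}
\item Since $f$ is a morphism, $f(u) = f(u_0) \cc f(a)$. So $Q \trans{f(u)} Q'$ yields an intermediate node $Q_0 \in \Q$ with $Q \trans{f(u_0)} Q_0$ and $Q_0 \trans{f(a)} Q'$.
\item Every node of $\tA$ corresponds to a non-empty odd-class, so we can choose some $q_0 \in Q_0$.
\item The induction hypothesis gives $q \trans{u_0'} q_0$ for the interleaved word $u_0'$.
\item The base case gives $q_0 \trans{\eps a \eps} q'$.
\item Concatenating the two paths yields $q \trans{u'} q'$ with $u' = (\eps u_i \eps)_{i \leq N}$.
\end{enumerate}

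\textbf{Expected obstacle.} The only delicate point is step 1 of the base case: extracting from membership in the upward closure a generator in $f_t(a)$ whose endpoints are comparable with $Q$ and $Q'$ in the correct directions. Recall that the order is swapped on targets, so the generator's source lies below $Q$ and its target lies above $Q'$. This is exactly what allows applying \Cref{lem:empty-oB-to-NPA} on both sides.

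A related subtlety is that $Q$ and $Q'$ may lie at depths different from each other and from $d$. This causes no difficulty, because \Cref{lem:empty-oB-to-NPA} is stated for arbitrary odd-classes compared through ${\stateleq}$.
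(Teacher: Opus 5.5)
Your proof is correct and is essentially the paper's own argument: induction on $|u|$, splitting off the last letter, extracting a generator in $f_t(a)$ below the transition of $f(a)$, and applying \Cref{lem:empty-oB-to-NPA} on both sides of the resulting $a$-transition; your base case is the same single-letter argument the paper uses for $|u|=1$. You also state the target-side comparison correctly as $Q' \stateleq P'$, where the paper's write-up has the typo $Q' \stateleq Q_0$ in place of $Q' \stateleq Q_2$.
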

\begin{proof}
	We prove this result by induction on the length of $u$. The case $|u| = 1$ will be treated slightly differently.
	
	As $Q \trans{(f(u_i))_{i\leq N}} Q'$, there exists some class $Q_0$ such that $Q \trans{(f(u_i))_{i<N}} Q_0 \trans{f(a)} Q'$. As $Q_0 \trans{f(a):c} Q'$ with some priority $c$, by definition of $f$ there exists $\delta = (Q_1, c_t, Q_2) \in f_t(a)$ such that $\delta \transleq (Q_0, c, Q')$.
	Notably, $Q_1 \stateleq Q_0$ and $Q' \stateleq Q_0$.
	By definition of $f_t$, there exists some $q_1\in Q_1, q_2\in Q_2$ such that $q_1 \trans{a} q_2$.  Finally, by \Cref{lem:empty-oB-to-NPA}, for all $q_0\in Q_0$, $q_0 \trans{\eps} q_1$ and $\forall q'\in Q', q_2\trans{\eps} q'$. 
	If $u$ is of length one, the previous result suffices: we take $q_0$ to be equal to $q$, and thus obtain $q \trans{\eps} q_1 \trans{a} q_2 \trans{\eps} q'$.
	Else, $|u| > 1$, and by induction hypothesis, for all $q\in Q$, $q \trans{(\eps u_i \eps)_{i<N}} q_0$. Therefore $q \trans{(\eps u_i \eps)_{i<N}} q_0 \trans{\eps} q_1 \trans{a} q_2 \trans{\eps} q'$, which concludes the proof.
\end{proof}

We now establish, similarly, that the existence of a Büchi cycle in the "ordered Büchi automaton" implies the existence of an even cycle in $A$.
\begin{lemma}\label{lem:oB-to-NPA-even}
	Let $Q\in \Q$ be some "odd-class", and $u = (u_i)_{i\leq N}\in\Pi^+$ be a word such that $Q \trans{f(u):0} Q$. Then, for $d$ the minimal depth over "odd-classes" encountered along this path, for all $q\in Q$, there exists a cycle $q\trans{u' : c} q$, where $c \prioleq 2d-2$, and $u' := (\eps u_i \eps)_{i\leq N}$.
\end{lemma}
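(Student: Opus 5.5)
The plan is to follow the structure of \Cref{lem:oB-to-NPA-access}: decompose the run $Q \trans{f(u):0} Q$ into single-letter steps and realise each step in $A$ by an $\eps$-padded path, while keeping track of priorities. Fix a path $Q = Q_0 \trans{f(u_0):c_0} Q_1 \trans{f(u_1):c_1} \cdots \trans{f(u_N):c_N} Q_{N+1} = Q$ in the ordered Büchi automaton with some $c_m = 0$, and let $d$ be the minimal depth among the $Q_m$. For each step, by definition of $f$ as an upward closure, choose a generating transition $(Q'_m, c'_m, Q''_m) \in f_t(u_m)$ with $(Q'_m, c'_m, Q''_m) \transleq (Q_m, c_m, Q_{m+1})$. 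Then $Q'_m \stateleq Q_m$ and $Q_{m+1} \stateleq Q''_m$. By definition of $f_t$, $Q'_m$ and $Q''_m$ are nodes of a common depth $e_m$, and there is an $A$-transition $q'_m \trans{u_m:c''_m} q''_m$ with $c''_m \prioleq 2e_m-1$. If $c'_m = 0$, we moreover get $c''_m \prioleq 2e_m-2$. Using \Cref{lem:empty-oB-to-NPA}, we glue these transitions with $\eps$-transitions $q_m \trans{\eps} q'_m$ and $q''_m \trans{\eps} q_{m+1}$, where $q_m \in Q_m$ and $q_0 = q_{N+1} = q$. This produces the cycle over $u'$, exactly as in \Cref{lem:oB-to-NPA-access}.

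The core of the proof is the priority bookkeeping. We must refine \Cref{lem:empty-oB-to-NPA} to say which priority the connecting $\eps$-transitions have. If $Q' \stateleq Q$, with $Q$ of depth $e$ and $Q'$ of depth $e'$, let $a$ be their deepest common ancestor in $\tA$, at depth $h$. Then there are three cases:
\begin{itemize}
\item if $Q'$ is a descendant of $Q$, every $q \in Q$ reaches every $q' \in Q'$ by an $\eps$-transition of priority $2e-1$;
\item if $Q'$ lies strictly to the right of $Q$ below $a$, the children of $a$ are ordered by $\trans{\eps:2h}$, so the transition has priority $2h$, which is even;
\item if $Q = Q'$, a self-loop of priority $2e+1$, or the maximal odd priority, suffices.
\end{itemize}
In every case, the priority of each $\eps$-step is $\prioleq$ the odd priority $2\min(e,e')-1$. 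Moreover, it is even and $\leq 2h$ whenever the step strictly descends in $\stateleq$ in a ``rightward'' manner.

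With this, the minimal priority along the cycle is at least $2d-2$, since every depth involved is at least $d$ and all odd priorities used are $\geq 2d-1$. It remains to show that an even priority $\leq 2d-2$ actually appears. The step with $c_m = 0$ gives this. Either $c'_m = 0$, in which case $c''_m \prioleq 2e_m-2$ is even and at most $2e_m - 2$. Or $(Q'_m, Q''_m) \neq (Q_m, Q_{m+1})$, so one of the gluing $\eps$-steps is a strict descent in $\stateleq$. Such a strict descent either contributes an even priority $2h$, or goes to a descendant, which would force a class of depth greater than that of its ancestor.

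The main obstacle is exactly this last point. We must show that the minimum over the whole cycle is even and $\prioleq 2d-2$, rather than merely bounded. Odd priorities $2e-1$, with $e \geq d$, coming from descents to descendants could otherwise dominate. Since the cycle returns to $Q$, every descent to a descendant must be compensated by a later step that climbs back up in depth. A climb can only occur through a letter transition, and such a transition moves within a fixed depth. So I will argue that a climb necessarily goes to a $\stateleq$-larger class, via a rightward or upward move, which is impossible except through an even $2h$-priority step with $h \leq$ the current depth. I would first try to make this compensation argument precise by tracking the depth-$d$ ancestor of the current class along the cycle.
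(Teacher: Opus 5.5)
\paragraph{Gap: the even priority at most $2d-2$ is never established.}
Your set-up is the paper's. You fix the ordered B\"uchi cycle $Q=Q_0\to\cdots\to Q_{N+1}=Q$, pick generators $(Q'_m,c'_m,Q''_m)\in f_t(u_m)$, glue with $\varepsilon$-moves, and read off the $\varepsilon$-priorities from the depth $h$ of the branching node. But the proposal stops where the lemma's content lies: you never show that an even priority at most $2d-2$ occurs.

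Your treatment of the B\"uchi step only gives an even priority at most $2e_m-2$, or an $\varepsilon$-move of priority $2h$. When that step lies below depth $d$, neither bound is good enough: take a B\"uchi self-loop on a depth-$(d+1)$ class generated by an $A$-transition of priority $2d$.

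Your last paragraph is only a plan, and it misplaces the mechanism. A rise in depth along the ordered B\"uchi path never happens through the letter, which stays at depth $e_m$. It happens through a gluing $\varepsilon$-move, and that move is what supplies the priority. No compensation of odd priorities is needed: once every odd priority is at least $2d-1$, a single even priority at most $2d-2$ suffices.

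The paper closes this with a case split.
\begin{itemize}
\item If the B\"uchi step joins two depth-$d$ classes, either $(Q'_m,Q''_m)=(Q_m,Q_{m+1})$, which forces $e_m=d$ and $c'_m=0$, so the letter has even priority at most $2d-2$. Or some gluing move joins two distinct $(2d-1)$-classes and can be taken with priority $2d-2$.
\item Otherwise the cycle visits a class deeper than $d$, so it contains a step $Q_j\to Q_{j+1}$ from depth $>d$ into depth $d$. Gluing $\varepsilon$-moves only go down in $\sqsubseteq$, and a node lies above all its descendants. Hence either $e_j\le d$ and $Q'_j$ is strictly to the right of $Q_j$, or $e_j>d$ and $Q_{j+1}$ is strictly to the right of $Q''_j$. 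In both cases the branching node has depth $<d$, giving an $\varepsilon$-move of even priority at most $2d-2$.
\end{itemize}

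\paragraph{A second, repairable error.}
The claim ``every depth involved is at least $d$'' is false: only the $Q_m$ are known to have depth at least $d$. A generator may sit at depth $e_m<d$, and then the letter or the second gluing move can carry the odd priority $2e_m-1<2d-1$. In that case $Q'_m\sqsubset Q_m$ is shallower than $Q_m$ and not its ancestor, so it lies strictly to its right with branching depth $h<e_m$. The same segment then contains the even priority $2h<2e_m-1$, which dominates those odd priorities and is itself at most $2d-4$.

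Relatedly, ``the minimal priority is at least $2d-2$'' should read ``every odd priority in segments with $e_m\ge d$ is at least $2d-1$''. Rightward $\varepsilon$-moves can have priority far below $2d-2$. Also, two distinct states of one depth-$e$ class are only guaranteed an $\varepsilon$-move of priority $2e-1$, not $2e+1$; this is harmless for your bound $2\min(e,e')-1$.
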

\begin{proof}
	According to \Cref{lem:oB-to-NPA-access}, there exists a path $\rho$ in $A$ from $q$ to $q$ over $u'$, composed of transitions on tiles $\eps$ and $(f_t(u_i))_{i\leq N}$. It remains to show that it can be built to have dominating priority $c \prioleq 2d-2$ (that is, $c$ is even and $c \leq 2d-2$).
	Let us first show that $\rho$ can be chosen such that it contains at least one transition of priority $\prioleq 2d-2$.
	
	We observe that for all $a \in \Pi$, by definition, $f_t(a)$ only has transitions remaining in "odd-classes" of the same level. Therefore, if there is a change in class level along $\rho$, it is along a transition generated by upward-closure.
	Let $(Q, f(u_j), Q')$ be a "Büchi transition" in $\rho$. If both $Q$ and $Q'$ are "$2d-1$-classes", then the corresponding transition in $A$ can be taken of priority $\prioleq 2d-2$. Indeed, either we directly have $Q \trans{f_t(u_j):0} Q'$, in which case by definition there exists $q, q'$ respectively in $Q$ and $Q'$ such that $q \trans{u_j:c'} q'$ with $c' \prioleq 2d-2$, or this Büchi transition belongs to $f(u_j)$ by upward closure witnessed by some transition $(Q_1, c_t, Q_2)\in f_t(u_j)$. In this case, we have two possibilities:  either $Q' \statel Q_2$, in which case the $\eps$-transition from $Q_2$ to $Q'$ is a transition between two different "$2d-1$-classes", and it can thus be chosen to be of priority $2d-2$; or $Q' = Q_2$ and $Q_1 \statel Q_0$, in which case $Q_2$ (and thus also $Q_1$) are "$2d-1$-classes", and the same argument applies for the transition between $Q_0$ and $Q_1$.\\
	Else, if there is no "Büchi transition" between classes of level $2d-1$, there still necessarily exists a Büchi transition in $\rho$, which thus takes place between classes of smaller level (for at least one of them) by minimality of $d$. As $\rho$ is a cycle, there is thus one transition starting in a class of level $> 2d+1$ and ending in a class of level $2d-1$ (by minimality of $d$). Then the same argument as above applies, and we once again can choose this transition to be of priority $2d-2$.
	
	Let us now show that the path in $A$ does not see an odd transition of priority $c'\leq 2d-3$ (that is, $c' \priogeq 2d-3$). Such a transition cannot occur during a transition over some $u_i$, as the path in the "ordered Büchi automaton" would fail to exist in the "tile" $f_t(u_i)$. Indeed, as $c' \priogeq 2d-3$, it would not have generated the corresponding transition in $f_t(u_i)$ (between classes of level at least $2d-1$), contradiction. 
	We can also choose $\rho$ such that no such transition occurs during one of the $\eps$-transitions : by construction, these transitions are always chosen either strictly decreasing along ${\stateleq}$ (in which case its priority can be chosen to be even), or as a transition from one class to itself. As all the classes considered are of level at least $2d-1$, by definition, any such transition can always be chosen of priority $\geq 2d-1$, which satisfies the desired property.
	
	Therefore, the corresponding path can be chosen to be dominated by an even priority of value at most $2d-2$.
\end{proof}

We can now establish that $\Lang(A)$ is "ordered Büchi".
\begin{lemma}
	Let $w\in \Pi^\omega$ be an infinite word, then $w\in \Lang(A)$ if and only if $f(w)\in \LangS(\A)$.
\end{lemma}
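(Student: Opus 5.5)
We prove the two implications separately. In both directions we use the preparatory lemmas of this appendix to move between runs of $A$ and runs of $\A$.

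\textbf{From $A$ to $\A$.} Let $w\in\Lang(A)$. By \Cref{lem:eps-complete-intertwined}, there is an accepting run $\rho$ of $A$ over $(\eps w_i\eps)_{i\in\NN}$. Let $2i$ be the minimal priority occurring infinitely often in $\rho$, and let $j_0$ be a position after which no priority below $2i$ occurs.

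From $j_0$ on, we cut $\rho$ into consecutive finite segments, each with minimal priority exactly $2i$. Each segment starts and ends at the boundary of a letter block. Over the prefix before $j_0$, \Cref{lem:NPA-to-oB-access} yields a path in $\A$ from the $1$-class of the initial state to the $1$-class of the current state. We then descend along $\ti$ (using $f(\eps)=\ti$, \Cref{rem:f-eps-is-ti}) to the $2i+1$-class containing that state; this class is a node below its $1$-class, hence smaller for $\stateleq$. Since every priority after $j_0$ is $\geq 2i$, every transition of $A$ used afterwards generates a depth-$(i+1)$ transition of $f_t$, because $c\prioleq 2i+1$ for all $c\geq 2i$.

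Applying \Cref{lem:NPA-to-oB-accept} to each segment gives a Büchi path between the corresponding $2i+1$-classes. The $\eps$ letters are absorbed into the neighbouring tiles, since $f(\eps)=\ti$ is the unit of $\Sii$ (\Cref{lem:Sii-monoid}). Concatenating these pieces gives an accepting run of $\A$ over $f(w)$. The initial state is handled by the choice of $I$ and the downward closure of $I$.

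\textbf{From $\A$ to $A$.} Take an accepting run of $\A$ over $f(w)$. Let $d$ be the minimal depth of the classes visited infinitely often. After some point, every visited node has depth $\geq d$, and since the run is Büchi we can decompose its tail into infinitely many finite segments, each containing a Büchi transition. The difficulty is that the run need not return to the same node, whereas \Cref{lem:oB-to-NPA-even} is phrased for cycles $Q\trans{f(u):0}Q$.

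I would first try to generalise \Cref{lem:oB-to-NPA-even} to segments $Q\trans{f(u):0}Q'$ whose visited nodes all have depth $\geq d$. Its proof never really uses $Q=Q'$ except to force a transition that lands in a depth-$d$ class, and such a transition also exists if the segments are chosen to end at depth-$d$ nodes that recur. The generalised statement would be: for all $q\in Q$ and $q'\in Q'$ there is a path $q\trans{u':c}q'$ with $c$ even and $c\leq 2d-2$, and with no odd priority below $2d-1$ seen.

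By pigeonhole, some depth-$d$ node $Q^\ast$ recurs infinitely often along the run, so the segments can be taken as cycles on $Q^\ast$ and the lemma applies directly. These cycles are glued together using \Cref{lem:oB-to-NPA-access} for the prefix, with $I_A$ downward closed. The result is a run of $A$ over a word whose $\eps$-erasure is $w$, dominated by an even priority $\leq 2d-2$. Hence $w\in\Lang(A)$.

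\textbf{Main obstacle.} The delicate step is the pigeonhole argument ensuring that $Q^\ast$ recurs with a Büchi transition between consecutive visits. If the depth-$d$ nodes are visited only finitely often, the minimal depth taken infinitely often must be recomputed; I would handle this by taking $d$ as the minimal depth of nodes seen infinitely often. There is also a subtlety in the path structure: a single node in the tail may be visited while the run moves between deeper nodes. This is settled by the observation, from the proof of \Cref{lem:oB-to-NPA-even}, that the $\eps$-moves between classes of depth $\geq d$ carry priority $\geq 2d-2$.
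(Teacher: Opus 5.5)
Your proof is correct. It uses the same transfer lemmas as the paper: \Cref{lem:eps-complete-intertwined}, \Cref{lem:NPA-to-oB-access}, \Cref{lem:NPA-to-oB-accept}, \Cref{lem:oB-to-NPA-access}, \Cref{lem:oB-to-NPA-even}, and the fact that $f(\varepsilon)$ is the unit of $\Sigma_{\sqsubseteq}$. What differs is how you organise the argument.

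The paper first restricts to ultimately periodic words $uv^\omega$, using that both sides are $\omega$-regular. It then works with lasso-shaped runs, so each lemma is applied once to a stem and once to a single loop. You handle an arbitrary $w$ directly by cutting the tail of the run into infinitely many pieces. In the forward direction these are segments with minimum exactly $2i$, glued along $2i+1$-classes. In the converse they are cycles around a node $Q^\ast$ visited infinitely often, each containing a Büchi transition.

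Your route avoids both the reduction to ultimately periodic words and the lasso shapes the paper takes for granted. Strictly speaking, those shapes only hold after replacing $u,v$ by $uv^a,v^b$ for suitable $a,b$, e.g.\ to get $Q\xrightarrow{f(v)^b:0}Q$ rather than $Q\xrightarrow{f(v):0}Q$. The paper's route is shorter, since each lemma is invoked only once.

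You should state two small points explicitly.
\begin{itemize}
\item The cycles glue because \Cref{lem:oB-to-NPA-even} yields a cycle at every $q\in Q^\ast$. So you can fix a single $q^\ast$ throughout.
\item A run made of infinitely many pieces, each with even minimum, is accepting. There are finitely many priorities, so past some point no priority below the least one seen infinitely often occurs. That priority is therefore the minimum of infinitely many pieces, hence even.
\end{itemize}
The second point also shows that the bound $2d-2$ and the minimal-depth choice of $Q^\ast$ play no role: any node visited infinitely often works. You can therefore drop the proposed generalisation of \Cref{lem:oB-to-NPA-even} to non-closed segments, as well as the closing remark about $\varepsilon$-moves, since you use that lemma as a black box anyway.
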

\begin{proof}
	As $\Lang(A)$ is "$\omega$-regular", we can restrict ourselves to the case where $w = (w_i)_{i\in \NN}$ is "ultimately-periodic", of the shape $u v^\omega$, as both $\Lang(A)$ and $\LangS(\A)$ are "$\omega$-regular". We denote $u'$ and $v'$ as the words obtained from $u$ and $v$ by replacing each letter $a$ with $\eps a \eps$.
	
	If $w\in \Lang(A)$, by \Cref{lem:eps-complete-intertwined}, the word $w' := (\eps, w_i, \eps)_{i\in \NN}$ admits an "accepting run@@aut" $\rho = \rho_u \rho_v^\omega \in \Delta_A^\omega$. Let us show that $f(w')\in \LangS(\A)$. This implies that $f(w)\in \LangS(\A)$: using \Cref{rem:f-eps-is-ti}, we can factor out all the $f(\eps) = \ti$, leaving only $f(w)$.
	We get, by \Cref{lem:NPA-to-oB-access}, that for $q$ the end vertex of $\rho_u$, of $1$-class $Q$, $I\trans{f(u)} Q$. Let $2d$ be the even priority dominating in $\rho_v$. Then, by \Cref{lem:NPA-to-oB-accept}, for $Q_d$ the $2d+1$-class containing $q$, $Q_d \trans{f(v'):0} Q_d$. It then suffices to observe that $f(w') = f(u')f(v')^\omega = f(u')\ti f(v')^\omega$, and by definition of the order ${\stateleq}$, as $Q$ and $Q_d$ are along the same branch of $\tA$ with $Q$ appearing first, $Q \trans{\ti} Q_d$. This exhibits an "accepting run@@tile" over $f(w')$, thus $f(w')\in \LangS(\A)$.

	Conversely, if $f(w)\in \LangS(\A)$, we show that $w'$ is recognised by $A$, where $w' := (\eps, w_i, \eps)_{i\in \NN}$. This suffices to show that $w$ is in the language of $A$ "$\eps$-automaton". As $f(w)\in \LangS(\A)$, there exists $Q\in \Q$ such that $I \trans{f(u) f(v)} Q$ and $Q \trans{f(v):0} Q$.
	We obtain by \Cref{lem:oB-to-NPA-access} that for all $q_i \in I_A$ and $q\in Q$, $q_{i} \trans{u' v'} q$ (note that we need to consider $u' v'$, as $u'$ may be the empty word). Then, by \Cref{lem:oB-to-NPA-even}, there exists some even $2d$ such that $q \trans{v' : 2d} q$. Therefore $w' = u' v'^\omega$ is recognised by $A$, which concludes the proof.
\end{proof}

Therefore, $\Lang(A)$ is indeed an "ordered Büchi language", and the corresponding "ordered Büchi automaton" has state set $\Q$, of size at most $kn$ -- being a tree with at most $n$ leaves and of depth $k$.

\section{Proofs from \Cref{sec:det}}

\AP In the following, for $i\in \inter{n}$, we denote $\intro*\leqlexi$ as the lexicographical order over the first $i+1$ elements of "records" in $S$ (if one of them has smaller length than the other, we can assume that its tail is filled with $\bot$). Its strict variant is denoted $\llexi$.

\detepscomplet
\begin{proof}
	For $\Ai = (S, \Sii, I, \De)$, we define $A' := (S, \Si, I, \De \uplus \Delta_{\eps})$, let us prove that it is an "$\eps$-completion" of $\Ai$.
	
	We immediately get the "$\eps$-completeness", as for all $s,s'\in S$ and even priority $2i$, either $s\trans{\eps:2i}s'$ or $s' \trans{\eps:2i+1} s$: Casares and Ohlmann proved that it is a necessary and sufficient condition for an automaton to be "$\eps$-complete" \cite[theorem 6.4]{CO24Positional}. Let us prove that adding these $\eps$-transitions does not change the recognised language.
	
	In this proof, $\forall s \in S$, we write $s[i]$ to describe the ordered tuple containing the $i$ first states of $s$.		
	We directly observe that, given some $i \in \inter{n}$, and looking at the equivalence classes described by $\leqlexi$ over the different $s[i]$ for $s \in S$, all the $\eps$-transitions of priority $2i$ are decreasing along $\leqlexi$, and those of priority $2i+1$ are non-increasing.
	
	Clearly, every "accepting run@@aut" of $\Ai$ is still an "accepting run@@aut" of $\A'$, hence $\Lang(\Ai) \subseteq \Lang(\A')$.
	
	Conversely, let $w\in \Lang(A')$ be a word "recognised@@aut" by $A'$. Therefore, there exists $w' \in ((\Sii)_{\eps})^\omega$ such that $w$ is obtained by removing all $\eps$'s from $w'$, and such that there exists $\rho'$ an "accepting run@@aut" of $\A'$ over $w'$, of dominating priority some even $2i$, passing by the vertices $(s'_j)_{j\in \NN}$. We define $\rho$ the (unique) run in $\Ai$ over $w$, passing by the vertices $(s_j)_{j\in \NN}$.
	
	By definition of $w'$, there exists an increasing function $g:\NN \to \NN$ such that $\forall j \in \NN, w_j = w'_{g(j)}$ and $\forall j \in \NN, w'_j \in \Sii \implies j \in \img(g)$.
	
	We observe that all the $\eps$-transitions $s'_j \trans{\eps} s'_{j+1}$ in $\Delta_{\eps}$ are such that $s'_j(0) \stategeq s'_{j+1}(0)$ -- or, equivalently, $\img(s'_j) \supseteq \img(s'_{j+1})$, as by definition $s(0) = \max(\img(s))$.
	Reasoning by induction, by monotonicity of the function $w_j : \cP(Q) \to \cP(Q)$, we can observe that for all $j$, $\img(s_j) \supseteq \img(s'_{g(j)})$.
	
	As eventually no priority smaller than $2i$ is witnessed in $\rho'$, past some point $k'$ (that we can choose as the image $g(k)$ of some $k\in \NN$, as $g$ diverges), the prefixes $(s'_j[i])_{j \geq k'}$ are equal. We denote $w' = u'v'$, where $u'$ is the prefix of $w'$ of length $k'$, and similarly $w = uv$ for $u$ the prefix of length $k$.
	
	We observe that past $k'$, all the $\eps$-transitions in $\rho'$ are of priority at least $2i+1$ and thus do not affect the first $i$ indices. We also note that as $2i$ is witnessed infinitely often, therefore the "track" defined by $(s'(i)_j)_{j\geq k'}$ is infinite and witnesses infinitely many Büchi transitions.
	As the $\eps$-transitions after time $k$ do not affect these first $i$ indices, we observe that this result still holds for $v$, defined as $w$ with all its $\eps$'s removed. Therefore, $v$ is accepting in $\Ai_{s'_{k'}} := (S, \Sii, \{s'_{k'}\}, \De)$, defined as $\Ai$ of initial state $s'_{k'}$. By \Cref{lem:Ai-correct}, we thus get that for $\A'_{k'}:= (Q, \stateleq, \img(s'_{k'}), \Gamma)$, $v \in \LangS(\A'_{k'})$. Finally, as by the previous induction $\img(s_k) \stategeq \img(s'_{k'})$, and as the language recognised by an "ordered Büchi automaton" is monotone in its "initial states" $I$ (by \Cref{rem:langs-inclusion}), we get that $v\in \LangS((Q, \stateleq, \img(s_k), \Gamma))$, which concludes the proof as $s_k = \De(s_I, u)$.
\end{proof}
\end{document}